\newcommand{\ownfull}{{\small \sf Ownership Relaying (OR)}}
\newcommand{\own}{{\small \sf OR}}
\newcommand{\tph}{{L-Store}}
\newcommand{\tpfull}{{\small \sf Lineage-based Data Store (L-Store)}}
\newcommand{\tp}{{\small \sf L-Store}}
\newcommand{\tps}{{\ssmall L-Store}}
\newcommand{\iph}{{\small \sf In-place Update + History}}
\newcommand{\dbt}{{\small \sf Delta + Blocking Mer\-ge}}
\newcommand{\cm}{{\CheckmarkBold}}
\newcommand{\xm}{{ -- }}
\newtheorem*{steps}{}
\newtheorem{theorem}{Theorem}
\newtheorem{lemma}{Lemma}
\newif\ifcomments % starts out false by default
\newcommand{\comment}[1]{}
\newcommand{\eat}[1]{} % old text etc.
\newcommand{\todo}[1]{}
  \renewcommand{\comment}[1]{{\em [#1]}}
  \renewcommand{\todo}[1]{{\em [TODO: #1]}}
\def\@copyrightspace{\relax}
\title{L-Store: A Real-time OLTP and OLAP System}
\author{
Mohammad Sadoghi$^{\dag}$, Souvik Bhattacherjee$^{\ddag}$\titlenote{Work was performed as part of a summer internship at IBM T.J. Watson Research Center under Mohammad Sadoghi's mentorship.}, Bishwaranjan Bhattacharjee$^{\#}$, Mustafa Canim$^{\#}$ \\
\affaddr{$^{\dag}$Department of Computer Science, Purdue University}\\
\affaddr{$^{\ddag}$University of Maryland, College Park}\\
\affaddr{$^{\#}$IBM T.J. Watson Research Center}\\
}
\begin{document}

\maketitle

%\vspace{-1.5mm}
\begin{abstract}
Arguably data is the new natural resource in the enterprise world with an 
unprecedented degree of proliferation. But to derive real-time actionable 
insights from the data, it is important to bridge the gap between 
managing the data that is being updated at a high velocity (i.e., OLTP) and 
analyzing a large volume of data (i.e., OLAP). However, there has been a divide 
where specialized solutions were often deployed to support either OLTP or OLAP 
workloads but not both; thus, limiting the analysis to stale and possibly 
irrelevant data. In this paper, we present \tpfull\ that combines the real-time 
processing of transactional and analytical workloads within a single unified 
engine by introducing a novel lineage-based storage architecture. By exploiting 
the lineage, we develop a contention-free and lazy staging of columnar data from 
a write-optimized form (suitable for OLTP) into a read-optimized form (suitable for 
OL\-AP) in a transactionally consistent approach that also supports querying and 
retaining the current and historic data. Our working prototype of \tp\ demonstrates 
its superiority compared to state-of-the-art approaches under a 
comprehensive experimental evaluation.
\end{abstract}
%\vspace{-1.5mm}
\section{Introduction}
We are now witnessing an architectural shift and divide in database 
community. The first school of thought emerged from an academic conjecture 
that ``\textit{one size does not fit all}'' \cite{Stonebraker:2005} (i.e., \textit{advocating specialized 
solutions}), which has lead to manifolds of innovations over the last decade in 
creating specialized and subspecialized database engines geared toward various 
niche workloads and application scenarios (e.g., \cite{Stonebraker:2005,DBLP:conf/cidr/BonczZN05,DBLP:journals/pvldb/KallmanKNPRZJMSZHA08,erling2012virtuoso,Raman:2013,Diaconu:2013,ibmsoliddb,Stonebraker_thevoltdb,Rabl:2012:SBD:2367502.2367512}). %Raman:2013,Rabl:2012:SBD:2367502.2367512,
This school has successfully influenced major database vendors such as Microsoft to focus on building 
new specialized engines offered as loosely integrated engines (e.g., Hekaton in-memory engine~\cite{Diaconu:2013} 
and Apollo column store engine~\cite{Larson:2011:SSC:1989323.1989448}) within a single umbrella of database portfolio 
(notably, recent efforts are now focused on a tighter real-time integration of 
Hekaton and Apollo engines~\cite{Larson:2015}). It has also influenced Oracle to partially 
accept the basic premise that ``one size does not fit all'' as far as data representation 
is concerned and has led Oracle to develop a dual-format technique~\cite{LahiriOracle} that 
maintains two tightly integrated representation of data (i.e., two copies of the data) 
in a transactionally consistent manner. 

However, the second school of thought, supported by both aca\-dem\-ia (e.g.,~\cite{Kemper:2011,Cao:2011,chen2010providing,DBLP:conf/sigmod/AlagiannisIA14,hyperMVCC,DBLP:conf/sigmod/LangMFB0K16})
and industry (e.g., SAP \cite{DBLP:journals/debu/FarberMLGMRD12}), %and IBM DB2 BLU~\cite{Raman:2013}
rejects the aforementioned fundamental premise and advocates a generalized solution. Proponents 
of this idea, rightly in our view, make the following arguments. First, there is a tremendous 
cost in building and maintaining multiple engines from both the perspective of database 
vendors and users of the systems (e.g., application development and deployment costs). 
Second, there is a compelling case to support real-time decision making on the 
latest version of the data~\cite{Plattner14} (likewise supported by~\cite{LahiriOracle,Larson:2015}), 
which may not be feasible across loosely integrated engines that are connected through 
the extract-transform load (ETL) process. Closing this gap may be possible, but its 
elimination may not be feasible without solving the original problem of unifying OLTP and OLAP
capabilities or without being forced to rely on ad-hoc approaches to bridge the gap in 
hindsight. We argue that the separation of OLTP and OLAP capabilities is a step backward 
that defers solving the actual challenge of real-time analytics. Third, combining real-time 
OLTP and OLAP functionalities remains as an important basic research question, which demands deeper 
investigation even if it is purely from the theoretical standpoint. 

In this dilemma, we support the latter school of thought (i.e., \textit{advocating 
a generalized solution}) with the goal of undertaking an 
important step to study the entire landscape of single engine architectures 
and to support both transactional and analytical workloads holistically 
(i.e., ``\textit{one size fits all}''). In this paper, we present \tpfull\ with 
a novel lineage-based storage architecture to address the conflicts between row- and 
column-major representation by developing a contention-free and lazy staging of 
columnar data from write optimized into read optimized form in a transactionally 
consistent manner without the need to replicate data, to maintain multiple representation 
of data, or to develop multiple loosely integrated engines that sacrifices real-time 
capabilities.

\begin{table*}
%\begin{center}
%\vspace{-6mm}
\centering
\begin{ssmall}
%\hspace{-4mm}
\begin{tabular}{| l || c | c | c | c | c | c | c |}  \hline
					& \tps\	& HANA~\cite{Plattner14}&	ES2~\cite{Cao:2011,chen2010providing}	& HyPer~\cite{Kemper:2011}	& Oracle Dual-format~\cite{LahiriOracle} & Microsoft SQL Server~\cite{Larson:2015} & HBase+Hadoop \\ [0.5ex] \hline\hline
%
% table header
%					& \tps\	& SAP HANA	& ES2	& Hyper		& Oracle Dual-format& IBM DB2 + BLU& Microsoft SQL Server	& H-Store + Hadoop 

Single Product		& \cm\	& \cm\		& \cm\ 		& \cm\		& \cm\ 				& \cm\					&	\xm\   \\ \hline
Single Engine		& \cm\	& \cm\		& \cm\ 		& \cm\		& \cm\				& \xm\					&	\xm\    \\ \hline
Single Instance		& \cm\	& \cm\		&  \cm\		& \cm\		& \cm\				& \cm\					&	\xm\   \\ \hline
Single Copy			& \cm\	& \cm\		&  \cm\		& \makecell{data page replication \\ (OS forking)}		& \xm\				& \xm\					&	\xm\  \\	\hline	
Single Representation& \cm\	& main + delta			&  \cm\		& \cm\ \comment{\makecell{configurable \\ (row + columnar)}}& \xm\				& \xm\					&	\xm\  \\	\hline	
OLTP-optimized 		& \cm\	& \cm\		&  \makecell{limited to \\ get/put operations}	& \makecell{limited to \\ partitionable workloads \\ (i.e., serial execution)}		& \cm\				& \cm\					&	\cm\  \\  \hline
OLAP-optimized		& \cm\	& \cm\		&  \makecell{inconsistent snapshot \\ is possible}		& \cm\		& \cm\				& \cm\					&	\cm\  \\  \hline
Unified OLTP+OLAP	& \cm\	& \cm\		&  \cm\ 	& \cm\ 	& \cm\ 				& \cm\					&	\xm\  \\  \hline
\end{tabular}
\end{ssmall}
%\vspace{-2mm}
\caption{Architectural characterization of selected database engines.}
\label{tab:comparison}
%\end{center}
%\vspace{-4mm}
\end{table*}

To further disambiguate our notion of ``\textit{one size fits all}'', in this paper, 
we restrict our focus to real-time relational OLTP and OLAP capabilities. We define a 
set of architectural characteristics for distinguishing the differences between 
existing techniques. First, there could be a single product consisting of multiple 
loosely integrated engines that can be deployed and configured to support either 
OLTP or OLAP. Second, there could be a single engine as opposed to having multiple 
specialized engines packaged in a single product. Third, even if we have a single 
engine, then we could have multiple instances running over a single engine, 
where one instance is dedicated and configured for OLTP workloads while another 
instance is optimized for OLAP workloads, in which the different instances are 
assumed to be connected using an ETL process. Finally, even when using the same 
engine running a single instance, there could be multiple copies or representations  
(e.g., row vs. columnar layout) of the data, where one copy (or representation) 
of the data is read optimized while the second copy (or representation) is write 
optimized. The architectural comparison of various existing techniques based on 
our rigorous definition of ``one size fits all'' is outlined in Table~\ref{tab:comparison}.\footnote{However, 
it is crucial to note that the presented comparison is solely focused on the 
overall architectural choices, and it does not make any claims about the relative 
system performance and/or functionalities. For example, if HANA contains more check 
marks than Microsoft SQL Server, it does not imply that HANA is a better product, 
instead it simply assesses HANA architecturally with respect to our definition of 
``one size fits all''.}

In short, we develop \tp, %\footnote{The technical report for \tpf\ is available on~\cite{DBLP:journals/corr/SadoghiBBC16}.} 
an important first step towards supporting real-time OLTP and OLAP 
processing that faithfully satisfies our definition of \textit{generalized solution}, 
and, in particular, we make the following contributions:
 
\begin{itemize}
\item Introducing a lineage-based storage architecture that enables 
a contention-free update mechanism over a native multi-version, columnar 
storage model in order to lazily and independently stage stable data from a 
write-optimized columnar layout (i.e., OLTP) into a read-optimized 
columnar layout (i.e., OLAP)

\item Achieving (at most) 2-hop away access to the latest version of any 
record (preventing read performance deterioration for point queries)

\item Contention-free merging of only stable data, namely, me\-rging of the 
read-only base data with recently committed updates (both in columnar representation) 
without the need to block ongoing or new transactions by relying on the lineage 

\item Contention-free page de-allocation (upon the completion of the merge process) 
using an epoch-based approach without the need to drain the ongoing transactions

\item A first of its kind comprehensive evaluation to study the leading architectural 
storage design for concurrently supporting short update transactions and analytical queries 
(e.g., an in-place update with a history table architecture and the commonly 
employed main and delta stores architecture)

\end{itemize}

\textbf{Motivating Real-time OLTP and OLAP}
Before describing our proposed approach in-depth, we briefly present two 
important scenarios that benefit greatly from a real-time OLTP and OLAP 
solution.

Consider the mobile e-commerce market, in which the revenue for the 
location-based mobile advertising alone is expected to reach \$18 
billions by 2019~\cite{loc-ad}. A potential buyer with a mobile device may 
roam around physically while shopping. In the meantime, the sh\-opper's mobile 
device generates location information. Alternatively, as the sh\-opper browses the 
web, again the location information is either exchanged explicitly or 
detected automatically based on the shopper's IP address or by its connection 
to the nearby WiFi routers. Now the task of any real-time targeted advertising 
auction is to determine and present a set of relevant ads to the shopper by 
running analytics over the location information, shopping patterns, past 
purchases, and browsing history of the shopper. Furthermore, if these advertisements 
result in a purchase, then the resulting transactions need to become available 
immediately to subsequent analytics in order to improve the effectiveness of 
future advertisements. Moreover, the actual ad bidding in the auction also 
requires a transactional semantics support in real-time. Finally, all these 
steps must be completed typically within 150 milliseconds~\cite{loc-ad-economist}. 
Therefore, we argue that in order to sustain a high velocity transactional data 
(e.g., Google AdWords served almost 30 billion ads per day in 2012~\cite{google-ad}) 
while executing complex analytics on the latest and historic (transactional) data, 
there is a compelling need to develop a solution that exhibits a true real-time 
OLTP and OLAP capabilities.

Another prominent scenario is fraud detection especially at the time when the 
cost of cybercrime continues to increase at a staggering rate and has already 
surpassed \$400 billion dollars annually~\cite{mcafee}. For instance, a credit 
card company will need to approve a transaction in a small time window (i.e., 
subsecond ranges). During this short time span, it is forced to determine if a  
transaction is fraudulent or not. Thus, there is a crucial need to run complex 
analytics in real-time as part of the transaction that is being processed. Without 
such a proactive fraud detection capability, fraudulent transactions may remain 
undetectable, which may result in irreversible financial losses as clearly been 
witnessed when billions of dollars are being lost due to fraud activities every 
year~\cite{mcafee}. Furthermore, there are indirect financial losses involving 
stakeholders such as credit card companies and merchants. The indirect losses 
attributed to decline of legitimate transactions that disrupts merchant's 
daily operation, lost payment volume as consumers opt for alternative 
payment types that are perceived to be safer, and lost customers due to card 
cancellation and reissue~\cite{credit}.

%\vspace{-1.5mm}
%\comment{
\begin{figure*}[t!]
%	\vspace{-6mm}
	\centering
	\begin{minipage}[t]{0.5\linewidth}		
        \centering
        \includegraphics[scale=0.35]{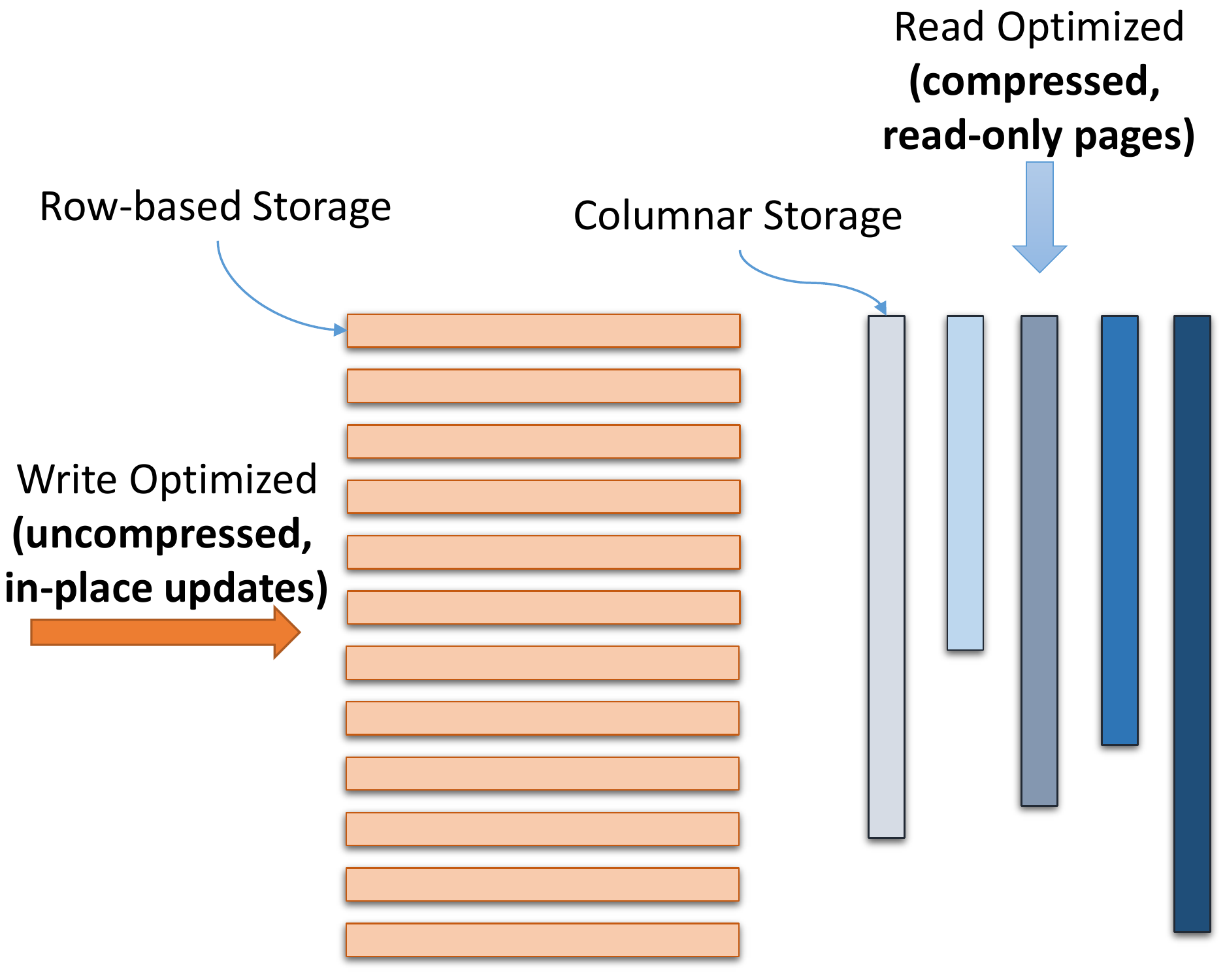}
       % \vspace{-1mm}
        \caption{Overview of storage layout conflict.}
        \label{fig:conflict}
	\end{minipage}\hfill
	\begin{minipage}[t]{0.5\linewidth}
        \centering
        \includegraphics[scale=0.37]{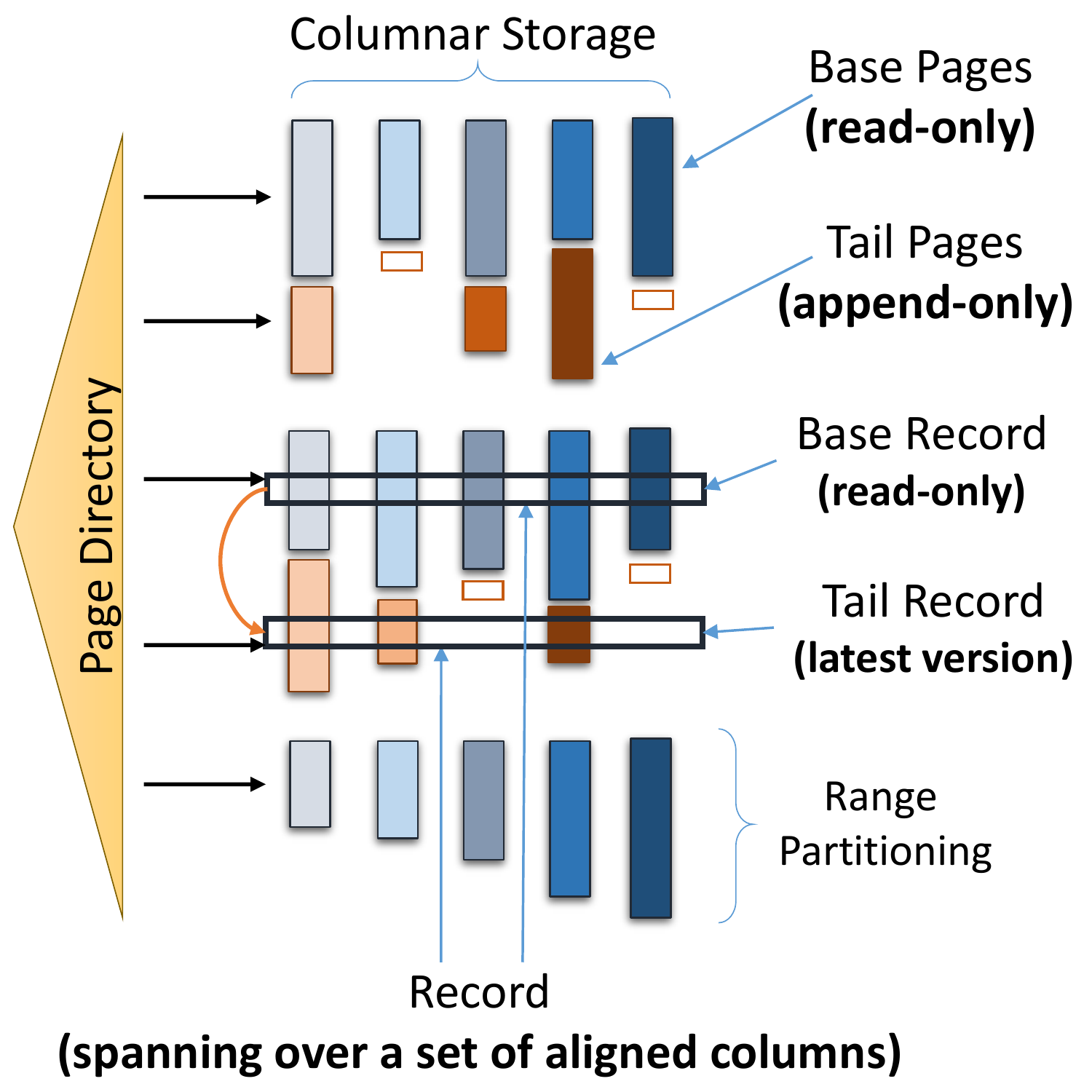}
       % \vspace{-1mm}
        \caption{Overview of the lineage-based storage architecture.}
        \label{fig:overview}
	\end{minipage}\hfill
%	\vspace{-4mm}
\end{figure*}
%}

\section{Unified Architecture}
The divide in the database community is partly attributed to the 
storage conflict pertaining to the representation of transactional and analytical 
data. In particular, transactional data requires write\--opt\-imiz\-ed storage, 
namely the row-based layout, in which all col\-umns are co-located (and 
preferably uncompressed for in-place updates). This layout improves point update 
mechanisms, since accessing all columns of a record can be achieved by 
a single I/O (or few cache misses for memory-resident data). In contrast, 
to optimize the analytical workloads (i.e., reading many records), it 
is important to have read-optimized storage,  i.e., columnar layout in 
highly compressed form. The intuition behind having columnar layout is 
due to the observation that most analytical que\-ries tend to access only 
a small subset of all columns~\cite{pax}. Thus, by storing data column-wise, 
we can avoid reading irrelevant columns (i.e., reducing the raw amount of 
data read) and avoid polluting processor's cache with irrelevant data, 
which substantially improve both disk and memory bandwidth, respectively. 
Furthermore, storing data in columnar form improves the data homogeneity 
within each page, which results in an overall better compression ratio.
This storage conflict is depicted in Figure~\ref{fig:conflict}.

\subsection{\tph\ Storage Overview}
To address the dilemma between write- and read-optimized layouts, we develop \tp. 
As demonstrated in Figure~\ref{fig:overview}, the high-level architecture of 
\tp\ is based on native columnar layout (i.e., data across columns are aligned to 
allow implicit re-construction), where records are (virtually) partitioned 
into disjoint ranges (also referred to as update range). Records within each range 
span a set of read-only, compressed pages, which we refer to them as the \textit{base pages}. 
More importantly, for every range of records, and for each updated column 
within the range, we maintain a set of app\-end-only pages to store the latest 
updates, which we refer to them as the \textit{tail pages}. Anytime a record 
is updated in base pages, a new record is appended to its corresponding tail pages, 
where there are explicit values only for the updated columns (non-updated 
columns are preassigned a special null value when a page is first allocated). 
We refer to the records in base pages as the \textit{base records} and the records 
in tail pages as the \textit{tail records}. Each record (whether falls in base 
or tail pages) spans over a set of aligned columns (i.e., no join is necessary 
to pull together all columns of the same record).\footnote{Fundamentally, there 
is no difference between base vs. tail record, the distinction is made only 
to ease the exposition.} 

A unique feature of our lineage-based architecture is that tail pages are strictly append-only and 
follow a write-once policy. In other words, once a value is written to tail pages, 
it will not be over-written even if the writing transaction aborts. The append-only design together 
with retaining all versions of the record substantially simplifies low-level synchronization and recovery 
protocol (as described in Section~\ref{sec_rec}) and enables efficient realization of multi-version 
concurrency control. Another important property of our lineage-based storage is that all data are 
represented in a common holistic form; there are no ad-hoc corner cases. Records in both base and 
tail pages are assigned record-identifiers (RIDs) from the same key space. Therefore, both base and 
tail pages are referenced through the database page directory using RIDs and persisted 
identically. Therefore, at the lower-level of the database stack, there is absolutely no 
difference between base vs. tail pages or base vs. tail records; they are presented 
and maintained identically.  

To speed query processing, there is also an explicit linkage (forward and backward 
pointers) among records. From a base record, there is a forward pointer to the latest version of 
the record in tail pages. The different versions of the same records in  
tail pages are chained together to enable fast access to an earlier version of the 
record. The linkage is established by introducing a table-embedded indirection 
column that stores forward pointers (i.e., RIDs) for base records and backward pointers 
for tail records (i.e., RIDs).

The final aspect of our lineage-based architecture is a periodic, con\-tent\-ion-free merging of a set of base 
pages with its corresponding tail pages. This is performed to consolidate base pages 
with the recent updates and to bring base pages forward in time (i.e., creating a set 
of merged pages). Each merged page independently maintains its lineage information, i.e., 
keeping track of all tail records that are consolidated onto the page thus far. By maintaining explicit 
in-page lineage information, the current state of each page can be determined independently, 
and the base page can be brought up to any desired snapshot. Tail pages that are already merged 
and fall outside the snapshot boundaries of all active queries are called historic tail-pages. 
These pages are re-organized, so that different versions of a record are stored 
contiguously inlined. Delta-com\-pression is applied across different versions of 
tail records, and tail records are ordered based on the RIDs of their corresponding 
base records. Below, we describe the unique design and algorithmic features of 
\tp\ that enables efficient transactional processing without performance deterioration 
of analytical processing; thereby, achieving a real-time OLTP and OLAP.

\begin{figure*}
%\vspace{-6mm}
\begin{center}
\includegraphics[scale=0.435]{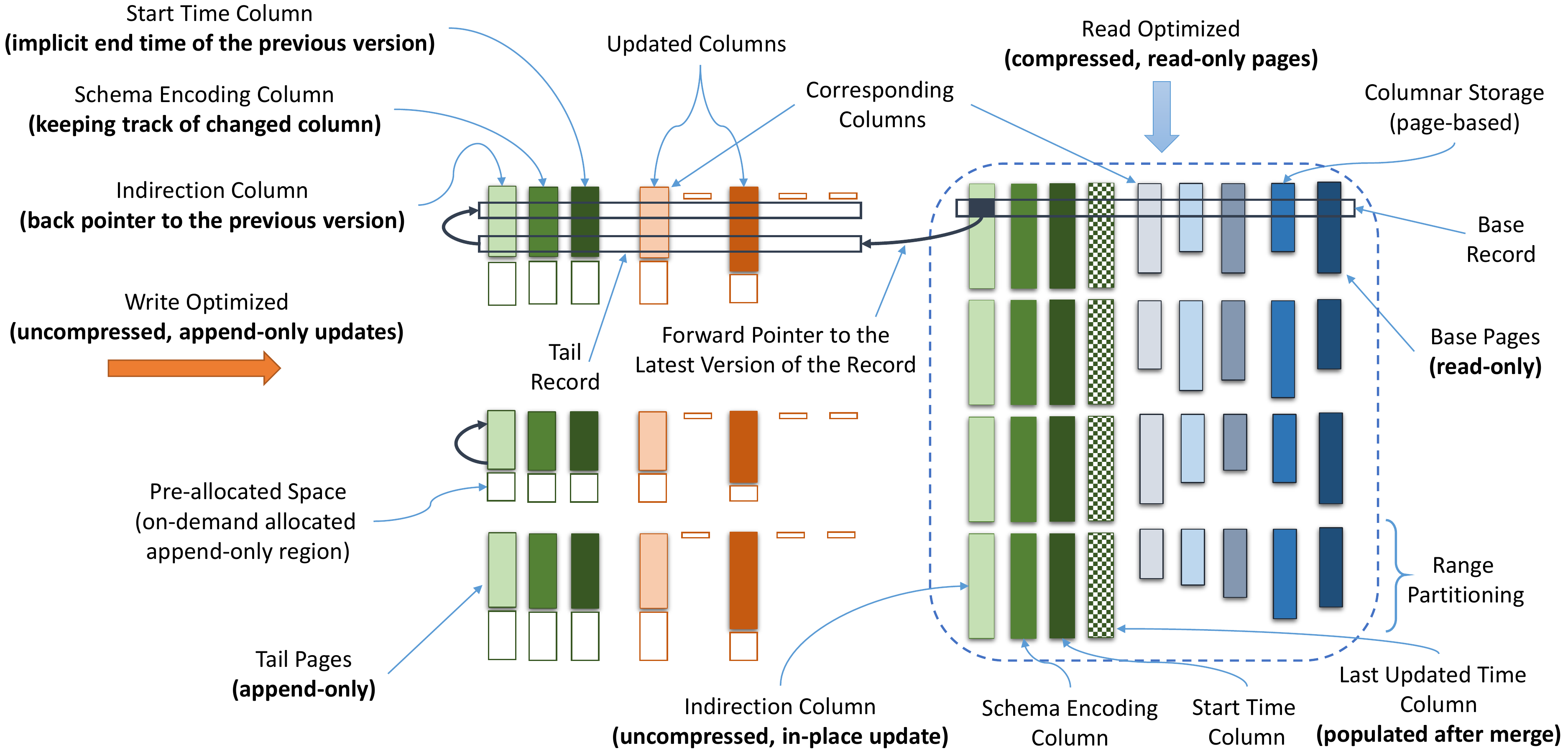}
%\vspace{-2mm}
\caption{Detailed, unfolded view of lineage-based storage architecture.}
\label{fig:unfolded}
\end{center}
%\vspace{-5mm}
\end{figure*}

\subsection{Lineage-based Storage Architecture}
\label{sec_arch}
In \tp, the storage layout is natively columnar that applies equally 
to both base and tail pages. A detailed view of our lineage-based storage architecture 
is presented in Figure~\ref{fig:unfolded}. In general, one can perceive 
tail pages as directly mirroring the structure and the schema of base 
pages. As we pointed out earlier, conceptually for every record, we distinguish 
between base vs. tail records, where each record is assigned a unique RID. 
But it is important to note that the RID assigned to a base record is stable 
and remains constant throughout the entire life-cycle of a record, and 
all indexes only reference base records (base RIDs); consequently, eliminating 
index maintenance problem associated when update operation results in creation 
of a new version of the record~\cite{Sadoghi:2013:MUD:2536222.2536226,indVLDJ}. %indVLDJ
When a reader performing index lookup, it always lands at a base record, and from the base 
record it can reach any desired version of the record by following the table-embedded 
indirection to access the latest (if the base record is out-of-date) or an earlier 
version of the record. However, when a record is updated, a new version is 
created. Thus, a new tail record is created to hold the new version, and 
the new tail record is assigned a new tail RID that is referenced by the base 
record (as demonstrated in Figure~\ref{fig:unfolded}). 

Each table in addition to having the standard data columns 
has several meta-data columns. These meta-data columns include the  
\textit{Indirection} column, the \textit{Schema Encoding} column, the 
\textit{Start Time} column, and the \textit{Last Updated Time} column. 
An example of table schema is shown in Table~\ref{tab:example}.

The \textit{Indirection} column exists in both the base and tail records. For base 
records, the \textit{Indirection} column is interpreted as a forward pointer to the 
latest version of a record residing in tail pages, essentially storing the RID 
of the latest version of a record. If a record has never been updated, then the 
\textit{Indirection} column will hold a null value. In contrast, for tail records, the 
\textit{Indirection} column is used to store a backward pointer to the last updated 
version of a record in tail pages. If no earlier version exists, then 
the \textit{Indirection} column will point to the RID of the base record.

The \textit{Schema Encoding} column stores the bitmap representation of 
the state of the data columns for each record, where there is one bit 
assigned for every column in the sch\-ema (excluding the meta-data columns), 
and if a column is updated, its corresponding bit in the \textit{Schema Encoding} 
column is set to 1, otherwise is set to 0. The schema encoding enables to 
quickly determine if a column has ever been updated or not (for base records) 
or to determine for each tail record, which columns have been updated and 
have explicit values as opposed to those columns that have not been updated 
and have an implicit special null values (denoted by $\varnothing$). An 
example of \textit{Schema Encoding} column is provided in Table~\ref{tab:example}.

The \textit{Start Time} column stores the time at which a base record 
was first installed in base pages (the original insertion time), and for a  
tail record, the \textit{Start Time} column holds the time at which the record was 
updated, which is also the implicit end time of the previous version of the 
record. The \textit{Start Time} column is essential for maintaining and distinguishing between 
different version of the record. In addition, to the \textit{Start Time} column, for base records, we maintain 
an optional \textit{Last Updated Time} column, which is only populated after 
the merge process is taken place and reflects the \textit{Start Time} of those tail records 
included in merged pages. Also note that the initial \textit{Start Time} column for  
base records is always preserved (even after the merge process) for faster pruning 
of those records that are not visible to readers because they fall outside the reader's snapshot.
Lastly, we may add the \textit{Base RID} column optionally to tail records to store 
the RIDs of their corresponding base records; this is utilized to improve the merge process. 
\textit{Base RID} is a highly compressible column that would require at most two bytes 
when restricting the range partitioning of records to $2^{16}$ records.

\begin{table*}[t]
%\vspace{-3mm}
\begin{center}
%\begin{ssmall}
\begin{tabular}{| c | c | c | c | c | c | c | c | c |}  \hline
\textit{RID}	& \textit{Indirection}	& \textit{Schema Encoding} 	& \textit{Start Time} 	& \textit{Key}	& \textit{A}	& \textit{B}	& \textit{C} \\ [0.5ex] \hline\hline
\multicolumn{8}{|l|}{Partitioned base records for the key range of $k_1$ to $k_3$}  \\ \hline
$b_1$ 			& $t_8$					& 0000$\ \ $			& 10:02			& $k_1$	& $a_1$	& $b_1$	& $c_1$   \\ \hline
$b_2$ 			& $t_5$					& 0101$\ $			& 13:04			& $k_2$	& $a_2$	& $b_2$	& $c_2$   \\ \hline
$b_3$ 			& $t_7$					& 0001$\ $			& 15:05			& $k_3$	& $a_3$	& $b_3$	& $c_3$   \\ \hline \hline  
\multicolumn{8}{|l|}{Partitioned base records for the key range of $k_4$ to $k_6$}  \\ \hline
$b_4$ 			& $\bot$				& 0000$\ \ $			& 16:20			& $k_4$	& $a_4$	& $b_4$	& $c_4$   \\ \hline
$b_5$ 			& $\bot$				& 0000$\ \ $			& 17:21			& $k_5$	& $a_5$	& $b_5$	& $c_5$   \\ \hline
$b_6$ 			& $\bot$				& 0000$\ \ $			& 18:02			& $k_6$	& $a_6$	& $b_6$	& $c_6$   \\ \hline \hline 
\multicolumn{8}{|l|}{Partitioned tail records for the key range of $k_1$ to $k_3$}  \\ \hline
$t_1$ 			& $b_2$					& 0100*				& 13:04			& $\varnothing$	& $a_2$			& $\varnothing$	& $\varnothing$   \\ \hline
$t_2$ 			& $t_1$					& 0100$\ \ $			& 19:21			& $\varnothing$	& $a_{21}$		& $\varnothing$	& $\varnothing$   \\ \hline
$t_3$ 			& $t_2$					& 0100$\ \ $			& 19:24			& $\varnothing$	& $a_{22}$		& $\varnothing$	& $\varnothing$   \\ \hline
$t_4$ 			& $t_3$					& 0001*				& 13:04			& $\varnothing$	& $\varnothing$	& $\varnothing$	& $c_2$   	\\ \hline
{\color{red}$t_5$}&{\color{red}$t_4$}	&{\color{red}0101$\ \ $}&{\color{red}19:25}&{\color{red}$\varnothing$}&{\color{red}$a_{22}$}&{\color{red}$\varnothing$}	&{\color{red}$c_{21}$}  \\ \hline
$t_6$ 			& $b_3$					& 0001*				& 15:05			& $\varnothing$	& $\varnothing$	& $\varnothing$	& $c_3$   \\ \hline
{\color{red} $t_7$}&{\color{red} $t_6$}	&{\color{red} 0001$\ \ $}&{\color{red}19:45}& {\color{red}$\varnothing$}&{\color{red}$\varnothing$}&{\color{red}$\varnothing$}	&{\color{red}$c_{31}$}   \\ \hline
{\color{red} $t_8$}&{\color{red} $b_1$}	&{\color{red} 0000$\ \ $}&{\color{red}20:15}& {\color{red}$\varnothing$}&{\color{red}$\varnothing$}&{\color{red}$\varnothing$}	&{\color{red}$\varnothing$}   \\ \hline
%\arrayrulecolor{red}\hline
\end{tabular}
%\end{ssmall}
%\vspace{-2mm}
\caption{An example of the update and delete procedures (conceptual tabular representation).}
\label{tab:example}
\end{center}
%\vspace{-6mm}
\end{table*}

%\newpage
\section{Fine-grained Storage Manipulation}
The transaction processing can be viewed as two major challenges: (1) how 
data is physically manipulated at the storage layer and how changes are propagated 
to indexes and (2) how multiple transactions (where each transaction consists of 
many statements) can concurrently coordinate reading and writing of the shared data. 
The focus of this paper is on the former challenge, and we defer the latter to our 
discussion on the employed concurrency model in Section~\ref{sec_rec}.

%\newpage
%Without the loss of generality, we focus on how to handle a single point update 
%or delete in \tp\ (but note that we support multi-statement transactions 
%as demonstrated by our evaluation). Furthermore, in Appendix~\ref{sec:insert}, 
%we discuss how our model can easily be extended to deal with insertion as well. 

\subsection{Update and Delete Procedures}
\label{sec_update}
Without the loss of generality, from the perspective of the storage layer,  we focus 
on how to handle a single point update or delete in \tp\ (but note that we support 
multi-statement transactions through \tp's transaction layer as demonstrated by 
our evaluation). Each update may affect a single or multiple records. Since records are 
(virtually) partitioned into a set of disjoint ranges (as shown in Table~\ref{tab:example}), 
each updated record naturally falls within only one range. Now for each range of records, 
upon the first update to that range, a set of tail pages are created (and persisted on 
disk optionally) for the updated columns and are added to the page directory, i.e., lazy 
tail-page allocation. Consequently, updates for each record range are appended to their 
corresponding tail pages of the updated columns only; thereby, retraining all versions of 
the record, avoiding in-place updates of modified data columns, and clustering updates 
for a range of records within their corresponding tail pages.

To describe the update procedure in \tp, we rely on our running example shown in 
Table~\ref{tab:example}. When a transaction updates any column of a record for 
the first time, two new tail records (each tail record is assigned a unique RID) 
are created and appended to the corresponding tail pages. For example, consider 
updating the column \textit{A} of the record with the key $k_2$ (referenced by 
the RID $b_2$) in Table~\ref{tab:example}. The first tail record, referenced by 
the RID $t_1$, contains the original value of the updated column, i.e., $a_2$, 
whereas implicit null values ($\varnothing$) are preassigned for remaining unchanged 
columns. Taking a snapshot of the original changed values becomes essential in order 
to ensure contention-free merging as discussed in Section~\ref{sec_merge}. 
The second tail record contains the newly updated value for column \textit{A}, 
namely, $a_{21}$, and again implicit special null values for the rest of the columns;  
a column that has never been updated does not even have to be materialized with 
special null values. However, for any subsequent updates, only one tail record is 
created, e.g., the tail record $t_3$ is appended as a result of updating the 
column \textit{A} from $a_{21}$ to $a_{22}$ for the record $b_2$. 

In general, updates could either be cumulative or non-cumulative. The cumulative 
property implies that when creating a new tail record, the new record will contain 
the latest values for all of the updated columns thus far. For example, consider 
updating the column \textit{C} for the record $b_2$. Since the column \textit{C} of  
the record $b_2$ is being updated for the first time, we first take a snapshot of its old 
value as captured by the tail record $t_4$. Now for the cumulative update, a new tail 
record is appended that repeats the previously updated column \textit{A}, as demonstrated 
by the tail record $t_5$. If non-cumulative update approach was employed, then the tail 
record would consists of only the changed value for column \textit{C} and not \textit{A}. 
It is important to note that cumulation of updates can be reset at anytime. In the
absence of cumulation, readers are simply forced to walk back the chain of recent 
versions to retrieve the latest values of all desired columns. Thus, cumulative 
update is an optimization that is intended to improve the read performance. 

\begin{figure*}[t]
%\vspace{-7mm}
\begin{center}
\includegraphics[scale=0.415]{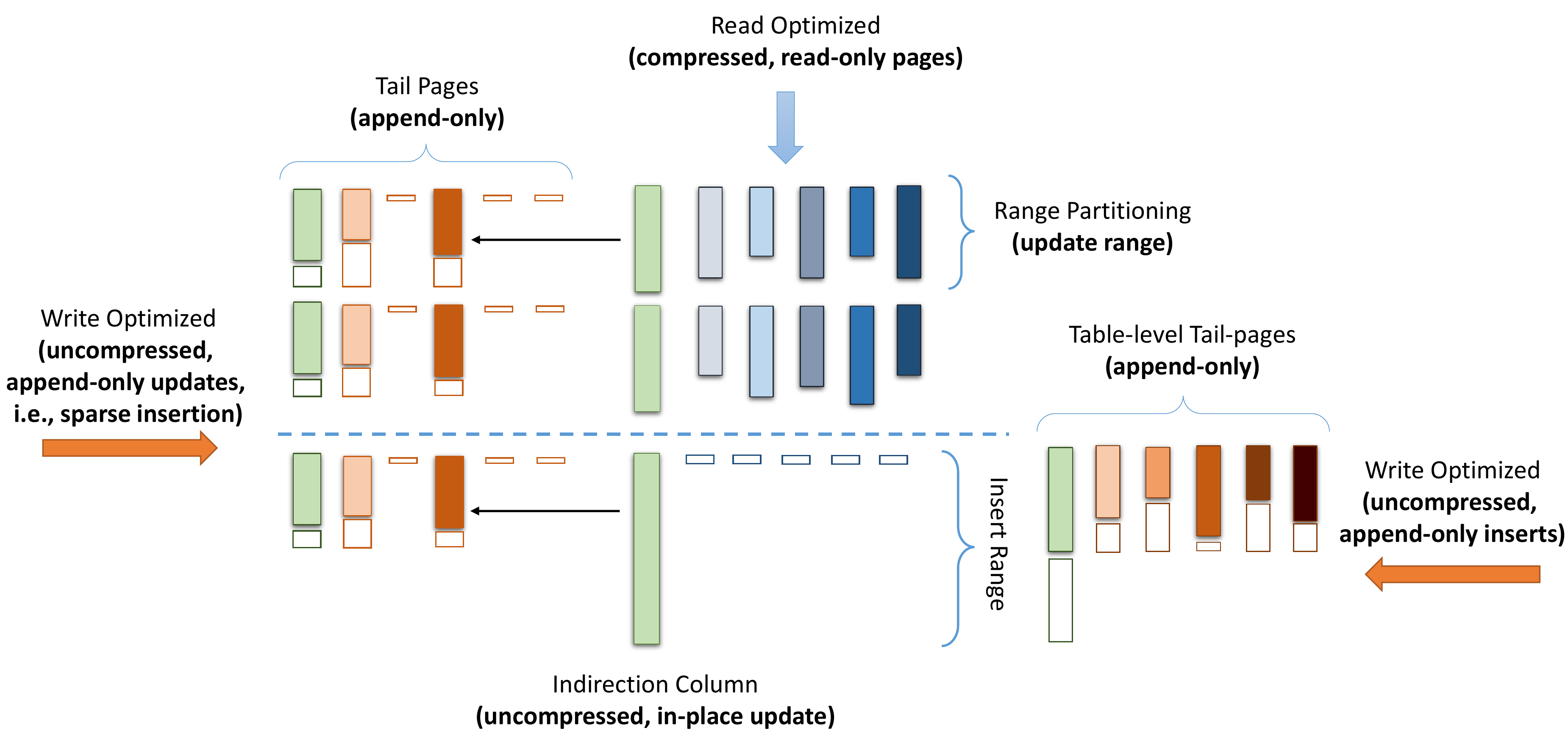}
%\vspace{-2mm}
\caption{Append-only insertion of new records with concurrent updates (by employing tail pages).}
\label{fig:insert}
\end{center}
%\vspace{-5mm}
\end{figure*}

As part of the update routine, the embedded \textit{Indirection} column (forward 
pointers) for base records is also updated to point to the newly created 
tail record. In our running example, the \textit{Indirection} 
column of the record $b_2$ points to the tail record $t_5$. Also after updating 
the column \textit{C} of the record $b_3$, the \textit{Indirection} column points 
to the latest version of $b_3$, which is given by $t_7$. Likewise, the 
\textit{Indirection} column in the tail records point to the previous  
version of the record. It is important to note that the \textit{Indirection} column of 
base records is the only column that requires an in-place update in our architecture. However, 
as discussed in our low-level synchronization protocol (cf. Section \ref{sec_rec}), this is a 
special column that lends itself to latch-free concurrency protocol. 

Furthermore, indexes always point to base records (i.e., base RIDs), and they are never 
directly point to any tail records (i.e., tail RIDs) in order to avoid the index maintenance 
cost that arise in the absence of in-place update mechanism~\cite{Sadoghi:2013:MUD:2536222.2536226,indVLDJ}. %
Therefore, when a new version of a record is created (i.e., a new tail record), first, all 
indexes defined on unaffected columns do not have to be modified and, second, only the 
affected indexes are modified with the updated values, but they continue to point to 
base records and not the newly created tail records. %~\cite{Sadoghi:2013:MUD:2536222.2536226,indVLDJ }
Suppose there is an index defined on the column \textit{C} (cf. Table~\ref{tab:example}). 
Now after modifying the record $b_2$ from $c_2$ to $c_{21}$, we add the new entry 
$(c_{21}, b_2)$ to the index on the column \textit{C}.\footnote{Optionally the old 
value $(c_{2}, b_2)$ could be removed from the index; however, its removal may affect 
those queries that are using indexes to compute answers under snapshot semantics. Therefore, 
we advocate deferring the removal of changed values from indexes until the changed entries 
fall outside the snapshot of all relevant active queries.} Subsequently, when a reader looks up the 
value $c_{21}$ from the index, it always arrives at the base record $b_2$ initially, then 
the reader must determine the visible version of $b_2$ (by following the indirection if 
necessary) and must check if the visible version has the value $c_{21}$ for the column 
\textit{C}, essentially re-evaluating the query predicates.

There are two other meta-data columns that are affected by the update procedure. 
The \textit{Start Time} column for tail records simply holds the time at which 
the record was updated (an implicit end of the previous version). For example, 
the record $t_7$ has a start time of 19:45, which also implies that the end time of 
the first version of the record $b_3$. The \textit{Schema Encoding} column is a concise 
representation that shows which data columns have been updated thus far. For example, 
the \textit{Schema Encoding} of the tail record $t_7$ is set to ``0001'', which 
implies that only the column \textit{C} has been changed. To distinguish between 
whether a tail record is holding new values or it is the snapshot of old values, we add 
a flag to the \textit{Schema Encoding} column, which is shown as an asterisk. For 
example, the tail record $t_6$ stores the old value of the column \textit{C}, which 
is why its \textit{Schema Encoding} is set to ``0001\textbf{*}''. The \textit{Schema 
Encoding} can also be maintained optionally for base records as part of the 
update process or it could be populated only during the merge process.

Notably, when there are multiple individual updates to the same record by the same 
transaction, then each update is written as a separate entry to tail pages. Each 
update results in a creation of a new tail record and only the final update becomes 
visible to other transactions. The prior entries are implicitly invalidated 
and skipped by readers. Also delete operation is simply translated into an update 
operation, in which all data columns are implicitly set to $\varnothing$, e.g., 
deleting the record $b_1$ results in creating the tail record $t_8$. An alternative 
design for delete is to create a tail record that holds a complete snapshot of the 
latest version of the deleted record.

\begin{table*}[t]
%\vspace{-3mm}
\begin{center}
%\begin{ssmall}
\begin{tabular}{| c | c | c | c | c | c | c | c | c |}  \hline
\textit{RID}	& \textit{Indirection}	& \textit{Schema Encoding} 	& \textit{Start Time} 	& \textit{Key}	& \textit{A}	& \textit{B}	& \textit{C} \\ [0.5ex] \hline\hline
\multicolumn{8}{|l|}{Partitioned base records for the key range of $k_4$ to $k_6$}  \\ \hline
$b_4$ 			& $\bot$				& 0000$\ \ $			& 16:20			& $k_4$	& $a_4$	& $b_4$	& $c_4$   \\ \hline
$b_5$ 			& $\bot$				& 0000$\ \ $			& 17:21			& $k_5$	& $a_5$	& $b_5$	& $c_5$   \\ \hline
$b_6$ 			& $\bot$				& 0000$\ \ $			& 18:02			& $k_6$	& $a_6$	& $b_6$	& $c_6$   \\ \hline \hline 
\multicolumn{8}{|l|}{Insert range for the base record with the base RID range of $b_7$ to $b_9$}  \\ \hline
$b_7$ 			& $\bot$					& \multicolumn{6}{c|}{} \\ \hline
$b_8$ 			& $t_{14}$					& \multicolumn{6}{c|}{} \\ \hline
$b_9$ 			& $t_{16}$					& \multicolumn{6}{c|}{} \\ \hline
\multicolumn{8}{|l|}{Table-level tail-pages for the base record with the base RID range of $b_7$ to $b_9$}  \\ \hline
$tt_7$ 			& $t_7$					& 0000$\ \ $			& 18:30			& $k_7$	& $a_7$	& $b_7$	& $c_7$   \\ \hline
$tt_8$ 			& $t_8$					& 0000$\ \ $			& 18:45			& $k_8$	& $a_8$	& $b_8$	& $c_8$   \\ \hline
$tt_9$ 			& $t_9$					& 0000$\ \ $			& 19:05			& $k_9$	& $a_9$	& $b_9$	& $c_9$   \\ \hline
\multicolumn{8}{|l|}{Partitioned tail records for the key range of $k_7$ to $k_9$}  \\ \hline
$t_{13}$ 			& $b_8$					& 0001*				& 18:45			& $\varnothing$	& $\varnothing$	& $\varnothing$	& $c_8$   	\\ \hline
{\color{red}$t_{14}$}&{\color{red}$t_{13}$}	&{\color{red}0001$\ \ $}&{\color{red}22:25}&{\color{red}$\varnothing$}&{\color{red}$\varnothing$}&{\color{red}$\varnothing$}	&{\color{red}$c_{81}$}  \\ \hline
$t_{15}$ 			& $b_3$					& 0100*				& 19:05			& $\varnothing$	& $a_9$	& $\varnothing$	& $\varnothing$   \\ \hline
{\color{red} $t_{16}$}&{\color{red} $t_{15}$}	&{\color{red} 0100$\ \ $}&{\color{red}22:45}& {\color{red}$\varnothing$}&{\color{red}$a_{91}$}&{\color{red}$\varnothing$}	&{\color{red}$\varnothing$}   \\ \hline
%\arrayrulecolor{red}\hline
\end{tabular}
%\end{ssmall}
%\vspace{-2mm}
\caption{An example of insertion with concurrent updates (conceptual tabular representation).}
\label{tab:example_insert}
\end{center}
%\vspace{-7mm}
\end{table*}

%\subsection{Fine-grained Manipulation: \\ Insertion}
\subsection{Insert Procedure}
\label{sec:insert}
In OLTP workloads, another important fine-grained manipulation is the insertion 
of new records. Conceptually, the table naturally grows by inserting new records 
to the end of the table (append-only mechanism). We rely on a simpler manifestation 
of our notion of tail pages followed by the transformation of tail pages into compressed, 
read-only base pages through a simplified merge process. In fact, one can 
even view our previously described update mechanism as a form of sparse 
insertion. 

In our proposed insert design, we designate the end of the table as the insert 
range. An insert range is basically a pre-allocated range of base RIDs for 
accommodating future insertions. In practice, the insert range size (at least 
a million RIDs) is much larger than our range partitioning that is 
employed for update processing (i.e., update range).\footnote{Each table may have more 
than one insert range to support a higher degree of concurrency if the workload is 
insert intensive.}. For the insert range, we allocated a set of tail pages for 
appending new records, which we refer to them as ``\textit{table-level tail-pages}'' 
even though structurally there is no difference between table-level tail-pages 
vs. regular tail pages. Figure~\ref{fig:insert} pictorially captures our insert 
design. In table-level tail-pages, we allocate tail pages for all columns 
(unlike for updates that was limited to only the updated columns) because the 
insert statement always provide a value for every column (even if it is an 
implicit null value for a nullable column).

Adding a new insert range consists of reserving a set of base RIDs (e.g., in the 
order of millions) and a set of tail RIDs; these two sets of RIDs are equal in size and aligned. 
Thus, the 10$^{th}$ base RID in the insert range corresponds to the 10$^{th}$ 
tail RID in the table-level tail-range (i.e., both ranges following the same 
insertion order). The alignment of RIDs allows implicit addressing for looking 
up a record in the insert range. When a new record is about to be inserted to the 
table, the new record receives a reserved base RID in the insert range and 
the corresponding tail RID in the table-level tail-range. If insert range is full, 
then a new insert range is created. But the key guiding principle for insertion 
is to satisfy the \textit{stability property} of the base pages (i.e., read-only) 
with the exception of the \textit{Indirection} column that is updated in-place. Therefore, 
the insertion procedure simply consists of acquiring base and tail RIDs, insert the actual 
record to table-level tail-pages, and setting the \textit{Indirection} column in the 
base record to null. Alternatively, the \textit{Indirection} column could be set to null 
when allocating pages for the insert range.

An example of insertion is illustrated in Table~\ref{tab:example_insert}. The insert range 
is shown as $b_7$ to $b_9$, and the table-level tail-range is shown as $tt_7$ to $tt_9$. The 
first inserted record is $(k_7, a_7, b_7, c_7)$ with the key $k_7$ that is assigned $b_7$ 
as its base RID and $tt_7$ as its tail RID. The only column allocated for base records  
is the \textit{Indirection} column, which is initially set to null ($\bot$). The actual 
values for the meta-data and data columns are appended to the table-level tail-pages at 
the position given by the tail RID $tt_7$. In the same spirit, the records with $b_8$ and 
$b_9$ are also appended to the insert range. Now if a recently inserted record is updated, 
then the update follows the same path as explained earlier (cf. Section~\ref{sec_update}). 
Suppose the record $b_8$ is updated by modifying the value of its \textit{C} column from 
$c_8$ to $c_{81}$. The update simply results in acquiring a new tail RID in the regular 
tail pages (as before) and appending only the updated column followed by updating the 
\textit{Indirection} column in-place. This is demonstrated by appending the tail record 
$t_{14}$ to the corresponding tail pages and setting the \textit{Indirection} column of 
the record $b_8$ to $t_{14}$.

\begin{figure*}[t!]
%	\vspace{-6mm}
	%\hspace{-5mm}
	%\centering
	\begin{minipage}[t]{0.5\linewidth}		
        \centering
        \includegraphics[scale=0.32]{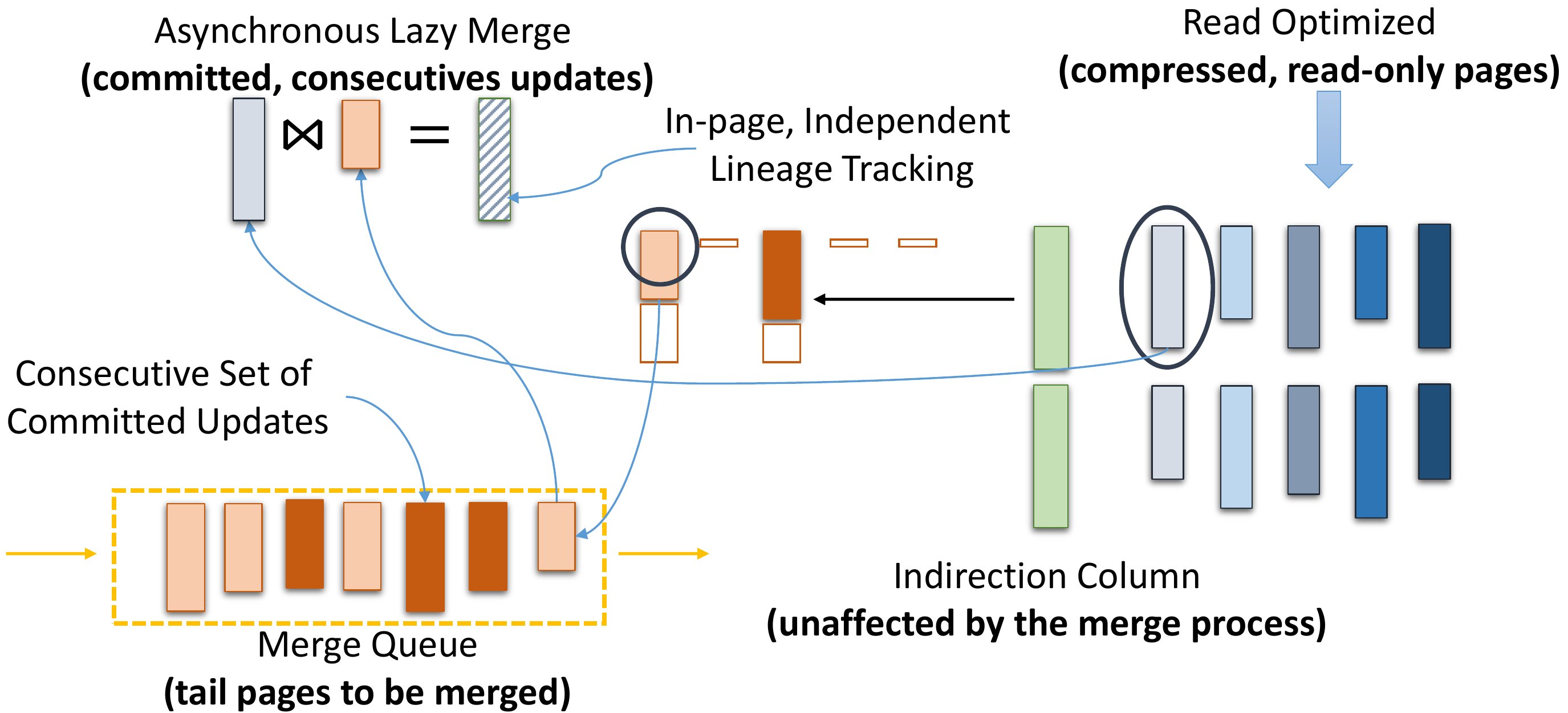}
        %\vspace{-1mm}
        \caption{Lazily, independently merging of tail \& base pages.}
		\label{fig:merge}
	\end{minipage}\hfill
	\hspace{5mm}
	\begin{minipage}[t]{0.5\linewidth}
        %\centering
		\includegraphics[scale=0.35]{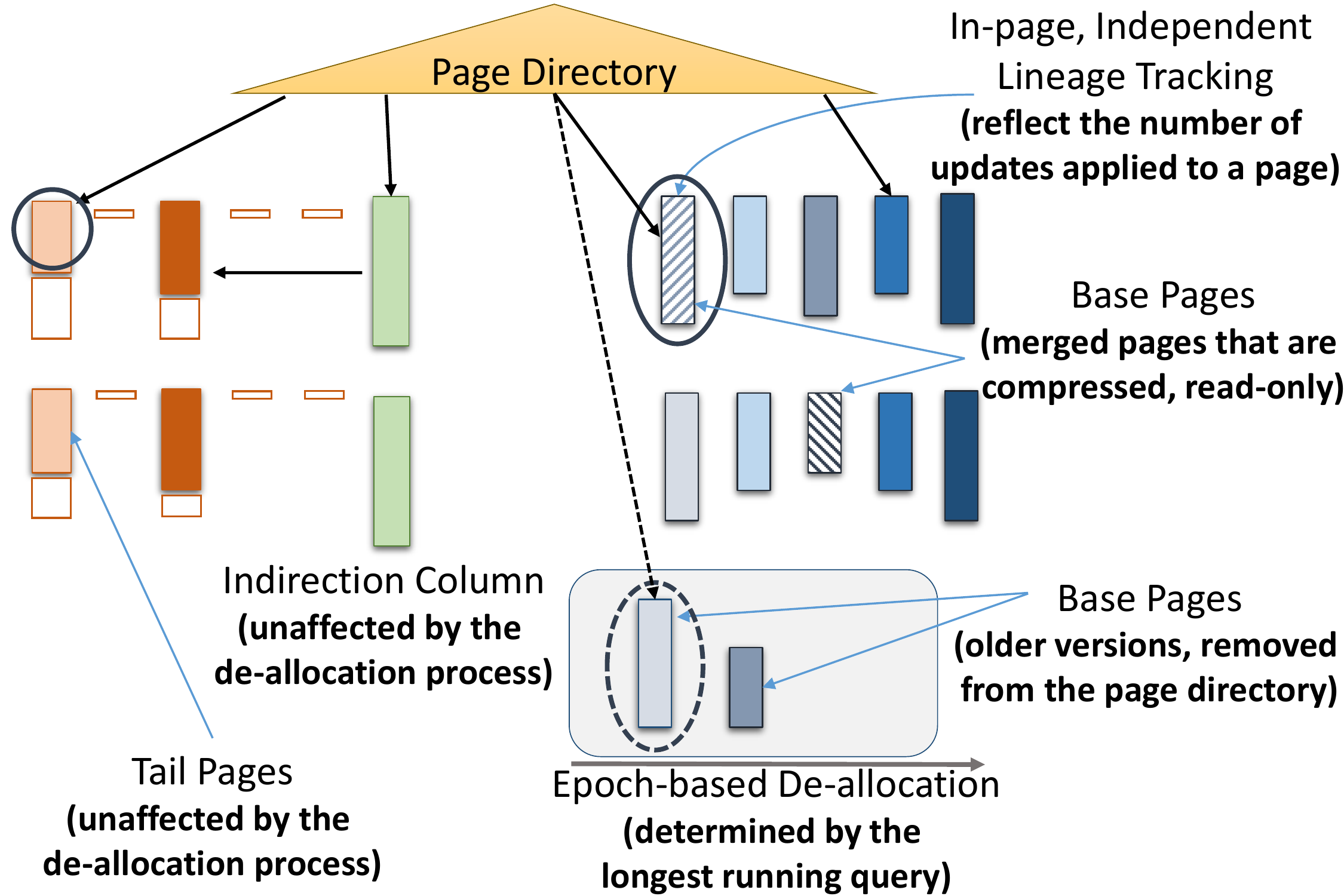}
       % \vspace{-1mm}
		\caption{Epoch-based, contention-free page de-allocation.}
		\label{fig:deallocation}
	\end{minipage}\hfill
	\hspace{-3mm}
%	\vspace{-5mm}
\end{figure*}

%\vspace{-1.5mm}
%\newpage
%\subsection{Merging Base and Tail Pages}
\section{Real-time Storage Adaption}
To ensure a near optimal storage layout, outdated base pages are merged 
lazily with their corresponding tail pages in order to preserve the efficiency of 
analytical query processing. Recall that the base pages are read-only and compressed 
(read optimized) while the tail pages are uncompressed\footnote{Even though compression 
techniques such as local and global dictionaries can be employed in tail pages, but these 
directions are outside the scope of the current work.} that grow using a strictly 
append-only technique (write optimized). Therefore, it is necessary to transform the 
recent committed updates (accumulated in tail pages) that are write optimized into 
read optimized form. A distinguishing feature of our lineage-based architecture is to 
introduce a contention-free merging process that is carried out completely in the background 
without interfering with foreground transactions. Furthermore, the contention-free merging 
procedure is applied only to the updated columns of the affected update ranges. 
There is even no dependency among columns during the merge; thus, the different columns 
of the same record can be merged completely independent of each other at different points  
in time. This is achieved by independently maintaining in-page lineage information for each 
merged page. The merge process is conceptually depicted in Figure~\ref{fig:merge}, in which 
writer threads (i.e., update transactions) place candidate tail pages to be merged into 
the merge queue while the merge thread continuously takes pages from the queue and 
processes them.

\comment{
\begin{figure}
%\vspace{-3mm}
\begin{center}
\includegraphics[scale=0.31]{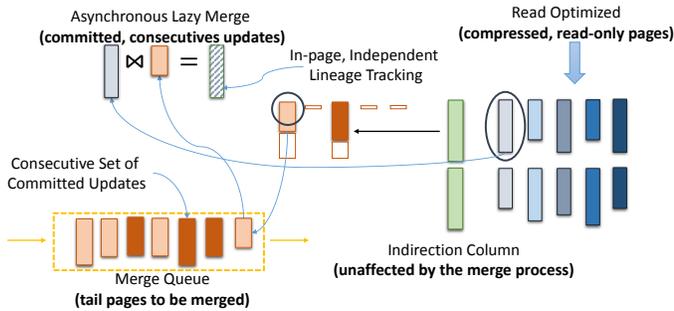}
%\vspace{-2mm}
\caption{Lazily, independently merging of tail \& base pages.}
\label{fig:merge}
\end{center}
%\vspace{-6mm}
\end{figure}
}

\subsection{Contention-free, Relaxed Merge}
\label{sec_merge}
In \tp, we abide to one \textit{main design principle} for ensuring 
con\-tent\-ion-free processing that is \textit{``always operating on stable data''}. 
The inputs to the merge process are (1) a set of base pages (committed base records) 
that are read-only,\footnote{The \textit{Indirection} column is the only column that undergoes 
in-place update that also never participates in the merge process.} thus, stable data 
and (2) a set of consecutive committed tail records in tail pages,\footnote{Note that 
not every committed update has to be applied as the merge process is relaxed, and 
the merge eventually process all committed tail records.} thus, also stable data. 
The output of the merge process (that is also relaxed) is a set of newly consolidated 
base pages (also referred to as merged pages) with in-page lineage information that are 
read-only, compressed, and almost up-to-date, thus, stable data. To decouple users' transactions 
(writers) from the merge process, we also ensure that the write path of the ongoing transactions 
does not overlap with the write path of the merge process. Writers append new uncommitted 
tail records to tail pages (but as stated before uncommitted records do not participate 
in the merge), and writers perform in-place update of the \textit{Indirection} 
column within base records to point to the latest version of the updated records 
in tail pages (but the \textit{Indirection} column is not modified by the merge 
process), whereas the write path of the merge process consists of creating only a 
new set of read-only base pages. 

\comment{
\begin{figure}[t]
%\vspace{-3mm}
\begin{center}
\includegraphics[scale=0.36]{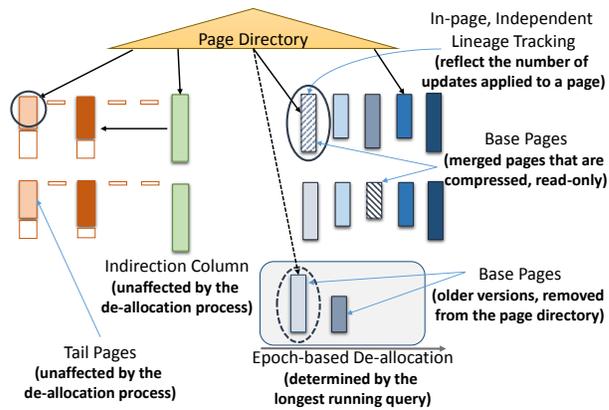}
%\vspace{-2mm}
\caption{Epoch-based, contention-free page de-allocation.}
\label{fig:deallocation}
\end{center}
%\vspace{-5mm}
\end{figure}
}

\subsubsection{Merge Algorithm}
\label{sec_merge_algorithm}
The details of the merge algorithm, conceptually resembling the standard left-outer join, 
consists of (1) identifying a set of committed tail records in tail pages; (2) 
loading the corresponding outdated base pages; (3) consolidating the base and 
tail pages while maintaining the in-page lineage; (4) updating the page directory; and (5) de-allocating 
the outdated base pages. The pseudo code for the merge is shown in Algorithm~\ref{alg:merge-algo}, 
where each of the five mentioned steps are also highlighted.

\begin{steps}

\textbf{$\ \ $Step 1: Identify committed tail records in tail pages:} 
Select a set of consecutive fully committed tail records (or pages) since the last 
merge within each update range. 

\textbf{Step 2: Load the corresponding outdated base pages:} 
For a selected set of committed tail records, load the corresponding outdated 
base pages for the given update range (limit the load to only outdated columns). 
This step can further be optimized by avoiding to load sub-ranges of records that 
have not yet changed since the last merge. No latching is required when loading the 
base pages.

\textbf{Step 3: Consolidate the base and tail pages:} 
For every updated column, the merge process will read $n$ outdated base pages 
and applies a set of recent committed updates from the tail pages and writes out $m$ new 
pages.\footnote{At most up to one merged page per column could be left underutilized 
for a range of records after the merge process. To further reduce the underutilized 
merged pages, one may define finer range partitioning for updates (e.g., $2^{12}$ 
records), but operate merges at coarser granularity (e.g., $2^{16}$ records). This 
will provide the benefit of locality of access for readers given smaller range size 
of $2^{12}$, yet it provides a better space utilization and compression for newly 
created merge pages when larger ranges are chosen (cf. Section~\ref{sec_range}).} 
First the \textit{Base RID} column of the committed tail pages (from Step 1) 
are scanned in reverse order to find the list of the latest version of every 
updated record since the last merge (a temporary hashtable may be used to keep 
track whether the latest version of a record is seen or not). Subsequently,  
applying the latest tail records in a reverse order to the base records 
until an update to every record in the base range is seen or the list is 
exhausted (skip any intermediate versions for which a newer update exists in 
the selected tail records). If a latest tail record indicates the deletion of the record, 
then the deleted record will be included in the consolidated records.\footnote{Alternatively, 
if all the deleted values are also stored in tail records, then it is sufficient 
to fill all data columns with the special null value $\varnothing$ for deleted 
records in the final merged pages. However, we would still need to preserve the 
\textit{Indirection} column of deleted records in order to provide access 
to the earlier versions of deleted records.} The merged pages will keep track of the lineage 
information in-page, i.e., tracking how many tail records have been consolidated thus far.  
Any compression algorithm (e.g., dictionary encoding) can be applied on the consolidated pages 
(on column basis) followed by writing the compressed pages into newly created pages. Moreover, 
the old \textit{Start Time} column is remained intact during the merge process because this column 
is needed to hold the original insertion time of the record.\footnote{The \textit{Start Time} column 
is also highly compressible column with a negligible space overhead to maintain it.} 
Therefore, to keep track of the time for the consolidated records, the 
\textit{Last Updated Time} column is populated to store the Start Time of the applied 
tail records. The \textit{Schema Encoding} column may also be populated during 
the merge to reflect all the columns that have been changed for each record.

\textbf{Step 4: Update the page directory:} 
The pointers in the page directory are updated to point to the newly created 
merged pages. Essentially this is the only foreground action taken by the merge 
process, which is simply to swap and update pointers in the page directory -- 
an index structure that is updated rarely only when new pages are allocated. 

\textbf{Step 5: De-allocate the outdated base pages:} 
The outdated base pages are de-allocated once the current readers are drained naturally 
via an epoch-based approach. The epoch is defined as a time window, in which the 
outdated base pages must be kept around as long as there is an active query that started 
before the merge process. Pointers to the outdated base pages are kept in a queue to be 
re-claimed at the end of the query-driven epoch-window. The pointer swapping and 
the page de-allocation are illustrated in Figure~\ref{fig:deallocation}. $\blacksquare$

\end{steps}

\begin{algorithm}[!t]
\DontPrintSemicolon
\SetKwInOut{input}{Input}\SetKwInOut{output}{Output}
{\small
\input{Queue of unmerged committed tail pages (mergeQ)}
\output{Queue of outdated and consolidated base pages to be deallocated (deallocateQ)}
    \While {true} {
     \mbox{// \textit{\textbf{Step 1}}}\;
     \mbox{// wait until the the concurrent merge queue is not empty}\;
        \If {mergeQ is not empty} {
            \mbox{// \textit{\textbf{Step 2}}}\;
            \mbox{// fetch references to a set of committed tail pages}\;
            \mbox{batchTailPage $\dashleftarrow$ mergeQ.dequeue()}\;
            \mbox{// create a copy of corresponding base pages}\;         
            \mbox{batchConsPage $\leftarrow$ batchTailPage.getBasePageCopy()}\;    
			 \mbox{decompress(batchConsPage)}\;    
		\mbox{// track if it has seen the latest update of every record}\;
        \mbox{HashMap seenUpdatesH}\;        
        \mbox{//reading a set of tail pages in reverse order}\;
        \mbox{// \textit{\textbf{Step 3}}}\;
        \For {$i = 0$; $i < $batchTailPage.size; $i \gets i + 1$} {
        	\mbox{tailPage $\gets$ batchTailPages[i]}\;
	        \For {$j = k-1$; $j \geq $ tailPage.size; $j \gets j - 1$} {
				\mbox{record[j]  $\leftarrow$ $j^{th}$ record in the tailPage}\;
				\mbox{RID $\leftarrow$ record[j].RID}\;
				\If {seenUpdatesH does not contain RID} {
						\mbox{seenUpdatesH.add(RID)}\;
						\mbox{// copy the latest version of record}\;
						\mbox{// $ \ \ \ \ \ \ \ \ \ \ \ \ $ into consolidated pages}\;
						\mbox{batchConsPage.update(RID, record[j])}\;
                }
				\If {if all RIDs OR all tail pages are seen} {
					\mbox{compress(batchConsPage)}\;
					\mbox{persist(batchConsPage)}\;
					\mbox{stop examining remaining tail pages}\;
				}
            }
        }
        \mbox{// \textit{\textbf{Step 4}}}\;
        \mbox{// fetch references to the corresponding base pages}\;
		\mbox{batchBasePage $\dashleftarrow$ batchTailPage.getBasePageRef()}\;       		
        \mbox{// update page directory to point}\;
        \mbox{// $ \ \ \ \ \ \ \ \ \ \ \ \ $ to the consolidated base pages}\;
	    \mbox{PageDirect.swap(batchBasePage, batchConsPage)}\;	   
	    \mbox{// \textit{\textbf{Step 5}}}\;
        \mbox{// push to queue to deallocate the outdated pages}\;
        \mbox{// $ \ \  $once the readers  before merge are drained}\;
         deallocateQ.enqueue(batchBasePage)\;
         }
   }            
}
\caption{Merge Algorithm}
\label{alg:merge-algo}
\end{algorithm}

An example of our merge process is shown in Table~\ref{tab:example_merge} based 
on our earlier update example, in which we consolidate the first seven tail records 
(denoted by $t_1$ to $t_7$) with their corresponding base pages. The resulting 
merged pages are shown, where the affected records are highlighted. Note that 
only the updated columns are affected by the merge process (and the \textit{Indirection} 
column is not affected). Furthermore, not all updates are needed to be applied, 
only the latest version of every updated record needs to be consolidated 
while the other entries are simply discarded. In our example, only the tail 
records $t_5$ and $t_7$ participated in the merge, and the rest were discarded.

\begin{table*}[t]
%\vspace{-1mm}
\begin{center}
%\begin{footnotesize}
\begin{tabular}{| c | c | c | c | c | c | c | c | c | c |}  \hline
\textit{RID}	& \textit{Indirection}	& \makecell{\textit{Schema} \\ \textit{Encoding}} 	& \makecell{\textit{Start} \\ \textit{Time}} & \makecell{\textit{Last} \\ \textit{Updated} \\ \textit{Time}}	& \textit{Key}	& \textit{A}	& \textit{B}	& \textit{C} \\ [0.5ex] \hline\hline
\multicolumn{9}{|l|}{Partitioned base records for the key range of $k_1$ to $k_3$; Tail-page Sequence Number (TPS) = 0}  \\ \hline
$b_1$ 			& $t_8$				& 0000$\ \ $		& 10:02 &			& $k_1$	& $a_1$	& $b_1$	& $c_1$   \\ \hline
$b_2$ 			& $t_5$					& 0101$\ $			& 13:04	&		& $k_2$	& $a_2$	& $b_2$	& $c_2$   \\ \hline
$b_3$ 			& $t_7$					& 0001$\ $			& 15:05	&		& $k_3$	& $a_3$	& $b_3$	& $c_3$   \\ \hline \hline  
\multicolumn{9}{|l|}{Relevant tail records (below TPS $\leq t_7$ high-watermark) for the key range of $k_1$ to $k_3$}  \\ \hline
{\color{red}$t_5$}&{\color{red}$t_4$}	&{\color{red}0101$\ \ $}&{\color{red}19:25}&&{\color{red}$\varnothing$}&{\color{red}$a_{22}$}&{\color{red}$\varnothing$}	&{\color{red}$c_{21}$}  \\ \hline
{\color{red} $t_7$}&{\color{red} $t_6$}	&{\color{red} 0001$\ \ $}&{\color{red}19:45}&& {\color{red}$\varnothing$}&{\color{red}$\varnothing$}&{\color{red}$\varnothing$}	&{\color{red}$c_{31}$}   \\ \hline
\multicolumn{9}{|l|}{Resulting merged records for the key range of $k_1$ to $k_3$; TPS = $t_7$}  \\ \hline
$b_1$ 			& $t_8$				& 0000$\ \ $	& 10:02 & 10:02			& $k_1$	& $a_1$	& $b_1$	& $c_1$   \\ \hline
$b_2$ 			& $t_5$					&{\color{red}0101$\ \ $}& 13:04&{\color{red}19:25}& $k_2$ &{\color{red}$a_{22}$}& $b_2$	&{\color{red}$c_{21}$}  \\ \hline
$b_3$ 			& $t_7$					& {\color{red} 0001$\ \ $}& 15:05& {\color{red}19:45} & $k_3$	& $a_3$	& $b_3$	& {\color{red}$c_{31}$}   \\ \hline
%\arrayrulecolor{red}\hline
\end{tabular}
%\end{footnotesize}
%\vspace{-2mm}
\caption{An example of the relaxed and almost up-to-date merge procedure (conceptual tabular representation).}
\label{tab:example_merge}
%\vspace{-7mm}
\end{center}
\end{table*}

\textbf{Merging Table-level Tail-pages:} 
It is important to note that merging table-level tail-pages with base 
pages in the insert range follows a similar process as above with a few 
simplification. First, the consolidation process is rather trivial because 
tail records in the table-level tail-range follows the exact same insertion 
order in the insert range (a trivial join-like operation). Also the insert range 
does not actually have any value except for the \textit{Indirection} column (which does 
not even participate in the merge itself). Thus, the merge process is essentially reading 
a set of consecutive committed tail records and compressing them to create 
a set of newly merged pages. Another simplification is that after the merged 
pages are created and the page directory is updated, then the old table-level 
tail-pages can be discarded permanently after all the active queries that 
started prior to the merge process are terminated. In contrast, the regular 
tail pages survive after the merge in order to enable answering historic 
queries and to avoid interfering with update transactions. All base records 
that have been merged with their table-level tail-pages are considered to be 
outside the insert range.

We further strengthen our \textit{data stability} condition for bringing base 
pages up-to-date. Earlier we stated that the merge only operates on a set of 
committed consecutive tail records, but no condition was imposed on the base 
records. Now we strengthen this condition by requiring that the base records 
must also fall outside the insert range before becoming a candidate for merging 
the recent updates.

\subsubsection{Merge Correctness Analysis}
A key distinguishing feature of our lineage-based storage architecture is 
to allow contention-free merging of tail and base pages without interfering 
with concurrent transactions. To formalize our merge process, we prove that 
merge operates only on stable data while maintaining in-page lineage without 
any information loss and that the merge does not limit users' transactions to access 
and/or modify the data that is being merged.

\begin{lemma}
Merge operates strictly on stable data.
\label{lem_stable}
\end{lemma}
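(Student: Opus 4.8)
The plan is to prove Lemma~\ref{lem_stable} by a direct case analysis over the two kinds of inputs and the single kind of output of the merge process, showing in each case that the data the merge touches is never concurrently mutated. I would begin by recalling the \emph{main design principle} stated at the opening of Section~\ref{sec_merge}: the merge is \emph{always operating on stable data}. This gives a crisp definition of ``stable'' to work with --- data is stable if, over the interval in which the merge reads or writes it, no concurrent transaction can modify it in place --- and the proof is essentially the task of checking this definition against each page category that participates in Algorithm~\ref{alg:merge-algo}.

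First I would handle the \textbf{input side}. There are exactly two classes of inputs (as enumerated in the text): (i) the set of base pages loaded in Step~2, and (ii) the set of consecutive committed tail records selected in Step~1. For (i), I would invoke the architectural invariant established in Section~\ref{sec_arch} and Section~\ref{sec_update} that base pages are read-only for \emph{all} data and meta-data columns \emph{except} the \textit{Indirection} column, and that the merge neither reads nor writes the \textit{Indirection} column (it is explicitly excluded in Step~3 and is handled only by foreground writers via latch-free in-place update). Hence every column the merge actually reads from base pages is immutable for the lifetime of those pages, so it is stable. For (ii), I would appeal to the append-only, write-once discipline of tail pages (stated in the ``unique feature'' paragraph of Section~2.1): once a value is written to a tail record it is never over-written, even on abort. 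Combined with the requirement in Step~1 that only \emph{fully committed} tail records are selected, this shows the selected tail records are frozen --- no future write, commit, or abort can alter them --- so they too are stable.

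Then I would handle the \textbf{output side}: the newly consolidated (merged) pages produced in Step~3. Here the argument is that these pages are created privately by the merge thread (in Algorithm~\ref{alg:merge-algo} they live in \texttt{batchConsPage}, a fresh copy, decompressed, updated, recompressed, and persisted entirely within the merge loop) and are not published into the page directory until Step~4. Before publication no foreground transaction can name them, hence cannot mutate them; at the instant of publication they are read-only compressed base pages exactly like any other base page, and --- crucially --- they carry in-page lineage (the TPS / \textit{Last Updated Time} bookkeeping) so that any subsequent foreground update to one of these records goes to \emph{tail} pages and updates only the \textit{Indirection} column in place, which again is outside what the merge wrote. So the output is stable both before and after it becomes visible.

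The \textbf{main obstacle} I anticipate is making the ``no overlap of write paths'' claim airtight rather than merely plausible: I must argue that the three write targets in the system --- (a) appends of new tail records by writers, (b) in-place updates of the \textit{Indirection} column in base records by writers, and (c) creation of new base pages by the merge --- are pairwise disjoint in the physical locations they touch, so that the merge's reads in Steps~1--3 see a consistent, unchanging snapshot without any latching (as Step~2 explicitly asserts ``no latching is required''). The subtle point is writer action (b): the merge \emph{does} load base pages that contain the \textit{Indirection} column, so I need the column-granularity observation that the merge restricts its loads and writes to the outdated data columns only and treats \textit{Indirection} as off-limits, which decouples it from (b). I would also need to note that the ``strengthened stability condition'' at the end of the Merge section --- base records must fall outside the insert range before merging --- removes the last remaining source of concurrent base-record mutation (the insert procedure), closing the case analysis. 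With those points in place, the lemma follows by simply collecting the per-class stability conclusions.
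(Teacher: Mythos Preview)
Your proposal is correct and follows essentially the same approach as the paper: a case analysis over the two input classes (read-only base pages, consecutive committed tail records) and the single output class (newly created read-only merged pages), showing each is stable by construction. You elaborate considerably more than the paper's three-sentence proof---your write-path disjointness discussion and the \textit{Indirection}-column exclusion argument actually anticipate material the paper defers to Theorem~1---but the decomposition and the conclusions are identical.
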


\begin{proof}
Recall that by construction, we enforced that merge \textit{``always operate on 
stable data''}. The inputs to the merge process are (1) a set of base pages 
consisting of committed base records that are read-only\footnote{We also consider 
only base records that are outside the insert range, for additional details please 
refer to Section~\ref{sec:insert}.}, thus, stable data and (2) a set of consecutive 
committed tail records in tail pages, thus, also stable data. The output of the merge 
process is a set of newly merged pages that are read-only, thus, stable data as well. 
Hence, the merge process strictly takes as inputs stable data and produces 
stable data as well. \qed
\end{proof}

\begin{lemma}
Merge safely discards outdated base pages without violating any query's snapshot.
\label{lem_safe}
\end{lemma}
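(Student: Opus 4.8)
The plan is to argue that an outdated base page is de-allocated only after every query whose snapshot could legitimately need it has terminated, and that any query which does need pre-merge data is routed to a page that still serves it. First I would fix the relevant notion of "snapshot": each active query $Q$ is associated with a start timestamp $\tau_Q$, and $Q$ must see exactly the versions of records that were committed as of $\tau_Q$. The key observation to establish is that the merged page is not a destructive overwrite: by Lemma~\ref{lem_stable} the merge consumes only committed base records and a consecutive prefix of committed tail records (up to the recorded TPS high-watermark), and it writes a \emph{new} page while leaving the old base page, all tail pages, and the \textit{Indirection} column untouched. Hence the pre-merge physical state is still fully materialized immediately after Step~4; only the page-directory pointer has been swapped.

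Next I would split the set of active queries at the moment of the pointer swap into two classes. For a query $Q$ with $\tau_Q$ \emph{after} the merge's effective point, the merged page is a correct representation of $Q$'s snapshot: every consolidated tail record has \textit{Start Time} $\le \tau_Q$ by the consecutive-prefix/high-watermark property, the original \textit{Start Time} column is preserved so older-than-$\tau_Q$ pruning still works, and any still-newer versions remain reachable through the retained tail pages via the \textit{Indirection} chain — this is where I would invoke the earlier "(at most) 2-hop access" and in-page lineage discussion. For a query $Q$ with $\tau_Q$ \emph{before} the merge, $Q$ may require the pre-merge image; the epoch mechanism of Step~5 guarantees the outdated base page is placed on \texttt{deallocateQ} and physically reclaimed only after the epoch window closes, i.e. only after all such queries have drained. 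Since $Q$ either was already holding its reference (buffer-pin style) or resolves through the page directory before the swap, no query ever follows a dangling pointer.

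Then I would close the loop on new queries: any query that starts \emph{after} the pointer swap reads the merged page, which — again by the high-watermark and preserved-\textit{Start Time} arguments — is a valid snapshot source for any $\tau_Q$ at or after the merge point, and for the (vacuous, since it started later) case of needing strictly older data it would prune correctly anyway. Therefore every query, old or new, obtains a snapshot-consistent view, and the physical de-allocation in Step~5 never removes a page that some live query's snapshot still depends on. \qed

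\medskip

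\noindent\textbf{Main obstacle.} The delicate point is not the epoch bookkeeping (which is a standard drain argument) but pinning down precisely that the merged page \emph{is} a faithful snapshot for exactly the right range of timestamps — i.e. that consolidating a consecutive committed prefix of tail records, together with keeping the original \textit{Start Time} and the full \textit{Indirection} chain, loses no version-visibility information for \emph{any} admissible $\tau_Q$ (older versions still prunable, newer versions still reachable in $\le 2$ hops). That no-information-loss claim is really the heart of the lemma; the rest is a case split plus invoking the epoch-based reclamation.
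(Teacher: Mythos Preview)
Your proposal is organized around the wrong mechanism. You treat the epoch-based drain of Step~5 as the load-bearing argument and then defer the ``no-information-loss'' part to a vaguely stated main obstacle. The paper's proof does almost the opposite: it explicitly states that Lemma~\ref{lem_stable} together with the relaxed merge is \emph{not sufficient} for safety, and it never leans on the epoch window as the crux. The decisive ingredient you are missing is the first-update snapshot rule from Section~\ref{sec_update}: whenever a column of a base record is updated for the first time, a tail record containing the \emph{original} base value is appended (the starred \textit{Schema Encoding} entries such as $t_1$, $t_4$, $t_6$). That rule is precisely what guarantees that every value ever held in a base page has already been copied into the tail chain before the base page becomes eligible for de-allocation; hence discarding the outdated base page loses nothing, independently of who is currently running.

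Your case split also has a concrete hole: you conclude that any query starting after the pointer swap has $\tau_Q$ at or after the merge point and therefore never needs pre-merge data. The paper, however, requires the lemma to cover \emph{time-travel} queries as well (stated in the first sentence of its proof). Such a query can start long after the epoch has closed yet request a snapshot time that precedes the very first update to some column. Following the \textit{Indirection} chain backward from the merged page will eventually land on the base RID, whose data columns now hold the merged values, not the originals; without the first-update snapshot tail record there is simply nowhere left that stores $a_2$, $c_2$, $c_3$, etc. Your epoch argument cannot rescue this case because the old base page is legitimately gone. So to repair the proof you must invoke the snapshot-of-original-values mechanism explicitly; once you do, the epoch discussion becomes the easy tail of the argument, exactly as the paper positions it.
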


\begin{proof}
In order to support snapshot isolation semantics and time travel queries, 
we need to ensure that earlier versions of records that participate in the merge 
process are retained. Since we never perform in-place updates and each update is  
transformed into appending a new version of the record to tail pages, then as 
long as tail pages are not removed, we can ensure that we have access to every 
updated version. But recall that outdated base pages are de-allocated using our 
proposed epoch-based approach after being merged. Also note that base pages contain 
the original values of when a record was first created. Therefore, any original 
values that later were updated must be stored before discarding outdated base pages 
after a merge is taken place. In another words, we must ensure that outdated base 
pages are discarded safely.

As a result, the two fundamental criteria, namely, relaxing the merge (i.e. 
constructing an almost up-to-date snapshot) and operating on stable data, are not 
sufficient to ensure the \textit{safety property} of the merge. The last missing 
piece that enables safety of the merge is accomplished by taking a snapshot of the 
original values when a column is being updated for the first time (as described in 
Section~\ref{sec_update}). In other words, we have further strengthened our 
\textit{data stability} criterion by ensuring even \textit{stability in the committed history}. 
Hence, outdated base pages can be safely discarded without any information loss, 
namely, the merge process is safe. \qed
\end{proof}

\begin{theorem}
The merge process and users' transactions do not contend for base and tail pages 
or the resulting merged pages, namely, the merge process is contention-free.
\end{theorem}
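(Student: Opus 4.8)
The plan is to prove the statement by a case analysis over the three classes of pages named in the theorem --- the outdated base pages, the tail pages, and the freshly produced merged pages --- together with the one shared index structure the merge touches, the page directory. For each class I would exhibit the \emph{write set} of the merge process and the \emph{write set} of an arbitrary concurrent user transaction, and argue that these two sets are disjoint, and moreover that wherever the two parties share a \emph{read} of a page, that page is never concurrently written by anybody; this is the precise sense in which they ``do not contend''. By Lemma~\ref{lem_stable} the merge reads only stable data (read-only base records and committed, write-once tail records) and emits only stable data, so the burden reduces to checking that user transactions never write into those stable regions and that the merge never writes into the regions user transactions do mutate.

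The core argument then proceeds region by region. (i) \textbf{Outdated base pages.} Step~2 of Algorithm~\ref{alg:merge-algo} loads only the outdated \emph{data} columns and never the \textit{Indirection} column, and the merge only reads these pages. A concurrent transaction, by the update and insert procedures of Sections~\ref{sec_update} and~\ref{sec:insert}, writes nothing into base pages except an in-place update of the \textit{Indirection} column --- a column the merge neither reads nor writes --- so no base page is written by both parties, and every base-page read the merge performs is of a read-only column. (ii) \textbf{Tail pages.} The merge reads only a prefix of \emph{committed} tail records (Step~1); user transactions only \emph{append} new tail records, and by the write-once policy an already-written tail record is never overwritten, even on abort. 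Hence the merge's read region and the writers' append region are disjoint. (iii) \textbf{Merged pages.} These are allocated and written exclusively by the merge thread and are invisible to every transaction until the page-directory swap of Step~4; after the swap they are once again read-only base pages, so no transaction writes them. (iv) \textbf{Page directory.} The only foreground action is the pointer swap of Step~4 on a structure modified only at page (de)allocation time; I would argue this swap is atomic (a single pointer store per affected column range), so a concurrent reader observes either the old or the new base pages consistently and is never forced to wait.

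The step I expect to be the real obstacle is closing the gap between ``the write sets are disjoint'' and ``no transaction is ever blocked or forced to drain'' --- in particular, handling the lifetime of the outdated base pages across Steps~4 and~5. A reader that resolved the page directory \emph{before} the swap holds references to the old base pages and must continue to read them safely while the merge has already installed the merged pages. I would discharge this using the epoch-based de-allocation of Step~5 together with Lemma~\ref{lem_safe}, which guarantees the outdated base pages are retained until every query that started before the merge has drained, so neither latching nor transaction-draining is needed at handoff time. Assembling cases (i)--(iv) with this epoch argument yields that neither the merge nor any user transaction ever waits on a latch held by the other for base, tail, or merged pages, which is precisely the claimed contention-freedom.
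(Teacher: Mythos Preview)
Your proposal is correct and shares the paper's core strategy: invoke Lemma~\ref{lem_stable} and Lemma~\ref{lem_safe}, then argue that the merge's write path and the users' write path are disjoint. Your decomposition is more explicit --- you partition by page class (base, tail, merged, plus the page directory) and exhibit read/write sets for each --- whereas the paper partitions by writer action (users append tail records and CAS the \textit{Indirection} column; the merge only creates fresh merged pages and later de-allocates old base pages). The paper's proof places particular weight on one design point you leave implicit: the reason the merge can discard outdated base pages without ever writing to a location that could contend is that the original column values were already snapshotted into tail records \emph{at update time}; had the merge been responsible for preserving those values, it would have needed to write them somewhere shared (e.g., append them to tail pages out of order), and that write would have broken contention-freedom. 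You cover this indirectly via Lemma~\ref{lem_safe} plus the epoch argument, which is sufficient, but the paper surfaces it as the crux of the theorem. Conversely, your explicit treatment of the page-directory swap (case~(iv)) goes beyond the paper's theorem proof, which confines itself to base, tail, and merged pages and relegates the directory-latching discussion to the low-level synchronization section.
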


\begin{proof}
As part of ensuring contention-free merge, we have already shown that merge operates 
on stable data (proven by Lemma~\ref{lem_stable}) and that there is no 
information loss as a result of the merge process (proven by Lemma~\ref{lem_safe}). 
Next we prove that the write path of the merge process does not overlap with the 
write path of users' transactions (i.e., writers). Recall that writers append new 
uncommitted tail records to tail pages (but as stated before uncommitted records 
do not participate in the merge), and writers perform in-place update of the 
\textit{Indirection} column within base records to point to the latest version 
of the updated records in tail pages (but the \textit{Indirection} column is not 
modified by the merge process), whereas the write path of the merge process 
consists of creating only a new set of read-only merged pages and eventually 
discarding the outdated base pages safely. 

Therefore, we must show that safely discarding base pag\-es does not interfere with 
users' transactions. In particular, as explained in Lemma~\ref{lem_safe}, if the 
original values were not written to tail records at the time of the update, then 
during the merge process, we were forced to store them somewhere or encounter 
information loss. It is not even clear where would be the optimal location for 
storing the original values. A simple minded approach of just adding them to tail 
pages would have broken the linear order of changes to records such that the older 
values would have appeared after the newer values, and it would have interfered 
with the ongoing update transactions. But, more importantly, the need to store 
the old values at any location would have implied that during the merge process 
multiple coordinated actions were required to ensure consistency across modification  
to isolated locations; hence, breaking the contention-free property of the merge. 
Therefore, by storing the original updated values at the time of update, we 
trivially eliminate all the potential contention during the merge process in 
order to safely discarding outdated base pages.

As a result, users' transactions are completely decoupled from the merge process, 
and users' transactions and the merge process do not contend over base, tail, or  
merged pages. \qed
\end{proof}

\subsubsection{Merge Performance Analysis \& Optimization}
When analyzing the performance of our merge algorithm, we observe that in 
the worst case, data from all updated columns for a given update range 
is read and written back, but it is a cost that is amortized over many updates, 
a key strength of our lineage-based storage architecture. In general, if updates 
are spread over a range of records (even if skewed), then the data for the entire 
range has to be read and written; however, when updates are strictly localized, 
then additional optimization can be applied to further prune the set of records read.

However, it is important to note that \tp's objective has been to introduce a 
contention-free merge procedure without the interference with the concurrent 
transactions because contention is the most important deciding factor in the 
overall performance of the system especially as the size of the main memory 
continues to increase (arguably the entire transactional data can fit in the 
main memory today) and the storage-class memories (such as SSDs) replace 
the mechanical disks~\cite{Diaconu:2013,DBLP:journals/pvldb/SadoghiCBNR14}. 
Nevertheless, we point out that there are potential opportunities to further 
improve the merge process execution time by employing and studying more complex 
join algorithms for implementing the merge by operating directly on the compressed  
data without the need to decompress and recompress the data (a complementary 
direction that is the outside the scope of the current work). Nevertheless, in 
our evaluation (cf. Section \ref{sec:exp}), with a single asynchronous merge thread, 
we were able to cope with tens of concurrent writer threads, and we were able 
to process millions of updates per second when updating 40\% of the columns 
on average. 

Lastly, we would like to point out that there are also a number of potential opportunities 
to guide the merge process in order to further accelerate relaxed analytical queries 
(those queries that can tolerate slightly outdated snapshot) by implicitly constructing 
a slightly outdated but consistent snapshot of the data across the entire table 
during the merge. Currently, our proposed merge is already relaxed and brings base pages 
almost up-to-date in time. Now we suggest to further coordinate the merge such that every merge 
not only take a set of consecutive committed tail records, but also takes only those consecutive 
committed records before an agreed upon time $t_i$. Thus, after merging a range of 
records, we are ensured that only committed records before the time $t_i$ is processed.
Furthermore, we propose that every page also maintains its temporal lineage to remember 
the timestamp of the earliest committed records that have not been merged yet, 
where ideally its timestamp is after $t_i$. Any range of records that yet to be merged 
or has failed to bring base pages forward in time up to $t_i$ can be manually brought 
forward to $t_i$ as part of the normal query processing by consolidating tail pages. 
Periodically, the agreed upon merge time is advanced from $t_i$ to $t_{i+1}$, and 
all subsequent merges are adjusted accordingly. However, exploiting the temporal 
lineage to speed up relaxed analytical queries (almost for free) is outside the scope 
of the current work.\footnote{Similar optimization for constructing an almost 
up-to-date and consistent snapshots was first introduced in~\cite{mohan92}, but 
it required to drain all active queries before the out-of-date snapshot could be 
advanced in time or it required maintaining multiple almost up-to-date snapshots 
simultaneously. Unlike the approach in~\cite{mohan92}, our proposed merge algorithm 
combined with the temporal lineage eliminates all contention with the ongoing queries 
such that the relaxed snapshot can be brought forward in time lazily and 
asynchronously.}

\begin{table*}[t]
%\vspace{-3mm}
\begin{center}
%\begin{ssmall}
\begin{tabular}{| c | c | c | c | c | c | c | c | c | c |}  \hline
\textit{RID}	& \textit{Indirection}	& \makecell{\textit{Schema} \\ \textit{Encoding}} 	& \makecell{\textit{Start} \\ \textit{Time}} & \makecell{\textit{Last} \\ \textit{Updated} \\ \textit{Time}}	& \textit{Key}	& \textit{A}	& \textit{B}	& \textit{C} \\ [0.5ex] \hline\hline
\multicolumn{9}{|l|}{Recently merged records for the key range of $k_1$ to $k_3$; TPS = $t_7$}  \\ \hline
$b_1$ 			& $t_8$					& 0000$\ \ $	& 10:02 & 10:02			& $k_1$	& $a_1$	& $b_1$	& $c_1$   \\ \hline
$b_2$ 			& $t_{12}$					&{\color{red}0101$\ \ $}& 13:04&{\color{red}19:25}& $k_2$ &{\color{red}$a_{22}$}& $b_2$	&{\color{red}$c_{21}$}  \\ \hline
$b_3$ 			& $t_{11}$					& {\color{red} 0001$\ \ $}& 15:05& {\color{red}19:45} & $k_3$	& $a_3$	& $b_3$	& {\color{red}$c_{31}$}   \\ \hline
\multicolumn{9}{|l|}{Partitioned tail records for the key range of $k_1$ to $k_3$}  \\ \hline
$t_1$ 			& $b_2$					& 0100*				& 13:04 &		& $\varnothing$	& $a_2$			& $\varnothing$	& $\varnothing$   \\ \hline
$t_2$ 			& $t_1$					& 0100$\ \ $		& 19:21	&		& $\varnothing$	& $a_{21}$		& $\varnothing$	& $\varnothing$   \\ \hline
$t_3$ 			& $t_2$					& 0100$\ \ $		& 19:24	&		& $\varnothing$	& $a_{22}$		& $\varnothing$	& $\varnothing$   \\ \hline
$t_4$ 			& $t_3$					& 0001*				& 13:04	&		& $\varnothing$	& $\varnothing$	& $\varnothing$	& $c_2$   	\\ \hline
$t_5$			& $t_4$					& 0101$\ \ $		& 19:25	&		& $\varnothing$ & $a_{22}$		& $\varnothing$ & $c_{21}$  \\ \hline
$t_6$ 			& $b_3$					& 0001*				& 15:05	&		& $\varnothing$	& $\varnothing$	& $\varnothing$	& $c_3$   \\ \hline
$t_7$			& $t_6$					& 0001$\ \ $		& 19:45	&		& $\varnothing$	& $\varnothing$ & $\varnothing$	& $c_{31}$   \\ \hline
{\color{red} $t_8$}&{\color{red} $b_1$}	&{\color{red} 0000$\ \ $}&{\color{red}20:15}&& {\color{red}$\varnothing$}&{\color{red}$\varnothing$}&{\color{red}$\varnothing$}	&{\color{red}$\varnothing$}   \\ \hline
$t_9$			& $t_5$					& 0010*				& 13:04	&		& $\varnothing$	& $\varnothing$ & $b_2$	& $\varnothing$   \\ \hline
$t_{10}$		& $t_9$					& 0010$\ \ $		& 21:25	&		& $\varnothing$ & $\varnothing$	& $b_{21}$ & $\varnothing$  \\ \hline
{\color{red}$t_{11}$}& {\color{red}$t_7$}& {\color{red}0001$\ \ $}& {\color{red}21:30}&	& {\color{red}$\varnothing$}& {\color{red}$\varnothing$} & {\color{red}$\varnothing$}& {\color{red}$c_{32}$}   \\ \hline
{\color{red}$t_{12}$}& {\color{red}$t_{10}$}& {\color{red}0110$\ \ $}& {\color{red}21:55}&& {\color{red}$\varnothing$} & {\color{red}$a_{23}$} & {\color{red}$b_{21}$} & {\color{red}$\varnothing$}  \\ \hline
%\arrayrulecolor{red}\hline
\end{tabular}
%\end{ssmall}
%\vspace{-2mm}
\caption{An example of the indirection interpretation and lineage tracking (conceptual tabular representation).}
\label{tab:example_lineage}
\end{center}
%\vspace{-6mm}
\end{table*}

\subsection{Maintaining In-Page Lineage}
\label{sec_lineage}
The lineage of each base page (and consequently merged pages) is maintained 
within each page independently as a result of the merge process. The in-page lineage information 
is instrumental to decouple the merge and update processing and to allow independent merging of the 
different columns of the same record at different points in time.  The in-page lineage information 
is captured using a rather simple and elegant concept, which we refer to as \textit{tail-page 
sequence number (TPS)} in order to keep track of how many updated entries (i.e., tail 
records) from tail pages have been applied to their corresponding base pages after 
a completion of a merge. Original base pages always start with TPS set to 0, a 
value that is monotonically increasing after every merge. Again to ensure this 
monotonicty property, we stressed earlier that always a consecutive set of committed 
tail records are used in the merge process.

TPS is also used to interpret the indirection pointer (also a monotonically 
increasing value) by readers after the merge is taken place. Consider our 
running example in Table~\ref{tab:example_merge}. After the first merge 
process, the newly merged pages have TPS set to 7, which implies that the first 
seven updates (tail records $t_1$ to $t_7$) in the tail pages have been applied 
to the merged pages. Consider the record $b_2$ in the base pages that has an 
indirection value pointing to $t_5$ (cf. Table~\ref{tab:example_merge}), 
there are two possible interpretations. If the transaction is reading the base 
pages with TPS set to 0, then the 5$^{th}$ update has not yet reflected on the 
base page. Otherwise if the transaction is reading the base pages with TPS 7, then 
the update referenced by indirection value $t_5$ has already been applied to 
the base pages as seen in Table~\ref{tab:example_merge}. Notably, the 
\textit{Indirection} column is updated only in-place (also a monotonically 
increasing value) by writers, while merging tail pages does not affect the 
indirection value. 

\begin{table*}[t]
%\vspace{-3mm}
\hspace{-3mm}
\begin{center}
%\begin{ssmall}
\begin{tabular}{| c | c | c | c | c | c | c |}  \hline
\textit{RID}	& \textit{Indirection}	& \makecell{\textit{Schema} \\ \textit{Encoding}} 	& \textit{Start Time} & \textit{A}	& \textit{C} \\ [0.5ex] \hline\hline
\multicolumn{6}{|l|}{Merged, committed tail records for the key range of $k_1$ to $k_3$}  \\ \hline
$t_1$ 			& $b_2$					& 0100*				& 13:04			& $a_2$			& $\varnothing$   \\ \hline
$t_2$ 			& $t_1$					& 0100$\ \ $		& 19:21			& $a_{21}$		& $\varnothing$   \\ \hline
$t_3$ 			& $t_2$					& 0100$\ \ $		& 19:24			& $a_{22}$		& $\varnothing$   \\ \hline
$t_4$ 			& $t_3$					& 0001*				& 13:04			& $\varnothing$	& $c_2$   	\\ \hline
$t_5$			& $t_4$					& 0101$\ \ $		& 19:25			& $a_{22}$		& $c_{21}$  \\ \hline
$t_6$ 			& $b_3$					& 0001*				& 15:05			& $\varnothing$	& $c_3$   \\ \hline
$t_7$			& $t_6$					& 0001$\ \ $		& 19:45			& $\varnothing$ & $c_{31}$   \\ \hline
\multicolumn{6}{|l|}{Ordered, Inlined, Compressed committed tail records for the key range of $k_1$ to $k_3$}  \\ \hline
$c_1$ 			& $b_2$					& 0101$\ $		& {\color{red}13:04,19:21,19:24,19:25}&{\color{red}$a_2$,$a_{21}$,$a_{22}$,-}&{\color{red}$c_2$,-,-,$c_{21}$}  \\ \hline
$c_2$ 			& $b_3$					& 0001$\ $		& {\color{red}15:05,19:45}		&$\varnothing$& {\color{red}$c_3$,$c_{31}$} \\ \hline
%\arrayrulecolor{red}\hline
\end{tabular}
%\end{ssmall}
%\vspace{-2mm}
\caption{An example of compressing merged tail pages (conceptual tabular representation).}
\label{tab:example_compress}
\end{center}
%\vspace{-6mm}
\end{table*}

More importantly, we can leverage the TPS concept to ensure read consistency 
of users' transactions when the mer\-ge is performed lazily and independently 
for the different col\-umns of the same records. Therefore, when the merge of 
columns is decoupled, each merge occurs independently and at different points  
in time. Consequently, not all base pages are brought forward in time 
simultaneously. Additionally, even if the merge occurs for all columns 
simultaneously, it is still possible that a reader reads base pages for 
the column \textit{A} before the merge (or during the merge before the page 
directory is updated) while the same reader reads the column \textit{C} 
after the merge; thus, reading a set of inconsistent base and merged pages.

\begin{lemma}
An inconsistent read with concurrent merge is always detectable.
\label{lem_read}
\end{lemma}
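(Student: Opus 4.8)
The plan is to reduce detectability to the two monotone coordinates the architecture already maintains: the \textit{Indirection} value of a base record (monotonically increasing, written only in place by writers) and the \textit{TPS} of a page (monotonically increasing, advanced only by a completed merge that consumes a \emph{consecutive} prefix of committed tail records). First I would spell out the reader's reconstruction protocol: after landing at a base record $b$ via an index, the reader reads $b$'s \textit{Indirection} value $I$ once, and then, for every data column $c$ it needs, fetches whatever page currently sits in the page directory for $b$'s update range and column $c$, records that page's TPS $\tau_c$, and decides per column whether the page already reflects $I$ (when $\tau_c$ dominates the position of $I$: read the column value directly from the page, falling back to the tail chain only if the page's \textit{Last Updated Time} overshoots the reader's snapshot) or not (when $\tau_c$ is behind $I$: walk the backward-pointer chain from $I$). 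An \emph{inconsistent read} is then precisely a reconstructed tuple whose columns do not all coincide with the single version of $b$ that is visible in the reader's snapshot.

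The core of the argument would be a small invariant: \emph{if the \textit{Indirection} value $I$ the reader used is still $b$'s current \textit{Indirection} at the end of the reconstruction, then the reconstructed tuple is that one consistent version}, no matter how many merges (for any subset of columns, at any interleaving, including ``in flight'', before the page-directory swap) completed concurrently. I would prove this by the obvious case split on each column's page: either the page does not yet reflect $I$ (it is still the base page, or a merged page whose TPS is behind $I$), in which case the backward chain from $I$ is intact --- here I invoke Lemma~\ref{lem_safe}, namely that tail records and the snapshotted original values are never lost --- and the column resolves correctly against $I$ and the reader's snapshot; or the page's TPS is already ahead of $I$, in which case, since $I$ is $b$'s latest version, the merged row already carries $b$'s current column value, and the \textit{Last Updated Time} together with the preserved original \textit{Start Time} let the reader walk back should its snapshot predate that value. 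Either way the column is resolved against the same $I$ and the same snapshot, so the reconstructed columns agree. Contrapositively, an inconsistent read forces $I$ to have changed during the reconstruction; since \textit{Indirection} only ever moves forward and is a single word updated in place, the reader detects this simply by re-reading it at the end (it is strictly larger), at which point the per-column TPS values it recorded tell it exactly which columns were contaminated by an intervening merge and must be re-fetched. This establishes ``always detectable.''

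The step I expect to be the real obstacle is \emph{separating benign concurrency from a genuine inconsistency}, because two superficially alarming situations are in fact harmless and must be argued away rather than flagged: different columns of the same record legitimately carry different TPS (merges are per-column and independent, exactly as in Table~\ref{tab:example_merge}), and a merge that completes under the reader while \textit{Indirection} stays fixed produces a perfectly consistent mix of base and merged pages. So the detector cannot be ``the TPS values differ'' or ``a page was swapped''; it has to be the sharper ``the \textit{Indirection} I relied on has moved'', and I must show this predicate is both sound (never cries wolf on the harmless cases --- discharged by the invariant above) and complete (catches every genuine inconsistency --- discharged by the contrapositive). A secondary delicate point is making the snapshot-visibility bookkeeping airtight for merged pages that have \emph{overshot} the reader's snapshot: I would again lean on Lemma~\ref{lem_safe} and on the fact that the original \textit{Start Time} and the snapshot-flagged tail records are retained, so the walk-back from $I$ always terminates at the correct older value and this case is cleanly \emph{resolved} rather than merely \emph{detected}.
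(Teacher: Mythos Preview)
Your route is genuinely different from the paper's, and in one important respect it runs opposite to it. The paper's proof is a two-line argument: every base/merged page carries its own TPS, so the reader simply records the TPS of each column page it touches and declares an inconsistency whenever these TPS values are not all identical for the same record range. That is the entire detector. You explicitly reject this test (``the detector cannot be `the TPS values differ'\,''), arguing it would cry wolf on benign mixes of base and merged pages; the paper takes the opposite stance and \emph{defines} ``inconsistent read'' at the page level, treating any TPS mismatch as the thing to detect, with resolution deferred to Theorem~2. In effect you have collapsed Lemma~\ref{lem_read} and Theorem~2 into a single argument: your invariant (``if the \textit{Indirection} value did not move, the reconstructed tuple is consistent regardless of the per-column TPS mix'') is really a proof of consistent-snapshot constructibility, not of detectability.

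There is also a structural worry with your chosen detector. The lemma is about inconsistency \emph{with concurrent merge}, and the merge never touches the \textit{Indirection} column---only writers do. So in the pure-merge scenario the statement targets, your test (``re-read \textit{Indirection}; flag if it grew'') can never fire; your proof then rests entirely on the claim that no tuple-level inconsistency is possible under merge alone. That may well be true, but it is the content of Theorem~2, not of this lemma, and it requires the full correctness of the per-column walk-back protocol (including the overshoot case you flag as delicate) rather than the simple monotone-coordinate comparison the paper uses. The paper's decomposition buys a much shorter Lemma~\ref{lem_read} at the price of a conservative detector; your approach buys a tighter detector at the price of front-loading the hard reconstruction argument.
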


\begin{proof}
Since each base page independently tracks its lineage, i.e., its TPS counter;
therefore, TPS can be used to verify the read consistency. In particular, for 
a range of records, all read base pages must have an identical TPS counter; 
otherwise, the read will be inconsistent. Hence, an inconsistent read across 
different columns of the same record is always detectable. \qed
\end{proof}

\begin{theorem}
Constructing consistent snapshots with concurrent merge is always possible.
\end{theorem}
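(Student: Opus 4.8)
The plan is to fix an arbitrary reader transaction with a snapshot timestamp (supplied by the concurrency layer of Section~\ref{sec_rec}) and to show that, for every record in every update range the reader touches, it can always reassemble the column values of the version visible at that timestamp, even while a merge thread concurrently swaps outdated base pages for merged pages. First I would set up the reader: having landed on a stable base RID (indexes reference only base records), for each requested column the reader consults the page directory, loads the currently installed page (base or merged), and records both that page's TPS counter and the record's monotonically increasing \textit{Indirection} value. The first thing to establish is a per-column reconstruction claim: given a page with $\mathrm{TPS}=p$ and a record whose \textit{Indirection} points to tail RID $t_i$, the reader can compute the snapshot-visible value of that column. The argument splits on whether $i \le p$ --- in which case the consecutive tail records $t_1,\dots,t_p$ have already been folded into the page, so the page content together with its in-page \textit{Start Time}/\textit{Last Updated Time} and \textit{Schema Encoding}, plus at most a short walk up the surviving tail chain, already contains every version at or before the snapshot --- or $i > p$, in which case the reader follows \textit{Indirection} into the tail pages and walks the version chain in decreasing order of tail RID, using \textit{Start Time} to locate the visible version; this walk terminates because everything below position $p$ is materialized in the page and because TPS and \textit{Indirection} are both monotone, so the tail chain is linearly ordered. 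Deletes are treated as ordinary updates (all columns set to $\varnothing$), and non-cumulative updates merely lengthen the walk by a bounded amount.

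Next I would lift this to cross-column consistency. The danger flagged before the statement is that the reader may read column \textit{A} from a not-yet-merged page ($\mathrm{TPS}=0$) while reading column \textit{C} from a merged page ($\mathrm{TPS}=7$); I would argue this is in fact benign once the two facts already in hand are used. By Lemma~\ref{lem_read} the reader always detects a TPS mismatch across the pages of a record range, and by the per-column claim each column is interpreted relative to its own page's TPS, so the per-column reconstructions are individually correct and their conjunction is a consistent record image at the chosen snapshot. For callers that instead want all columns physically aligned at a single TPS, I would supply the alternative resolution: on detecting a mismatch, re-read the lagging column's entry through the page directory; since the merge thread only advances TPS and performs finitely many merges before any fixed point in time, and since the page-directory swap (Step~4 of the merge) is the sole foreground action, repeated re-reads converge to the common highest TPS in a bounded number of steps.

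Finally I would close the loop on reachability and non-interference. By Lemma~\ref{lem_safe} together with the epoch-based de-allocation, every outdated base page that a reader which started before a merge could still need is retained until that reader drains, and tail pages survive the merge; hence neither the per-column walk nor the re-read loop can dereference a freed or stale pointer. By Lemma~\ref{lem_stable} the inputs and outputs of every merge are stable data, so the reader never observes a partially written page, and by the contention-free theorem the reader neither blocks the merge nor is blocked by it. Combining these, an arbitrary reader running concurrently with an arbitrary interleaving of merges can always materialize a transactionally consistent snapshot, which is exactly the claim.

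I expect the per-column reconstruction claim --- precisely characterizing how TPS partitions the \textit{Indirection}/tail-chain so that the walk lands on exactly the snapshot-visible version under cumulative updates, non-cumulative updates, and deletes, and proving termination of the detect-then-retry loop against an adversarial merge schedule --- to be the main obstacle; the reachability and non-interference parts follow almost immediately from the lemmas already established.
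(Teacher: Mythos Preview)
Your proposal is correct and shares the paper's core strategy: invoke Lemma~\ref{lem_read} to detect a TPS mismatch across columns, then reconstruct each column's snapshot-visible value independently from its own page's TPS, the record's \textit{Indirection}, and the tail chain. The paper's own proof is far terser---three sentences---and simply asserts that once an inconsistency is detected ``each page is simply brought to the desired query snapshot independently by examining its TPS and the indirection value and consulting the corresponding tail pages using the logic outlined earlier,'' without spelling out the $i\le p$ versus $i>p$ case split, the termination argument, or the explicit appeals to Lemmas~\ref{lem_stable} and~\ref{lem_safe} for reachability. Your per-column reconstruction claim is exactly the content the paper hand-waves as ``the logic outlined earlier''; making it precise is the right instinct and is where the real work lies, as you anticipated. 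Your alternative detect-then-retry resolution (re-reading the page directory until all columns share a common TPS) is not the paper's route and is in fact unnecessary: the paper's point, and your primary argument, is that per-column reconstruction relative to each page's own TPS already yields a consistent record image, so no convergence to a single TPS is required.
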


\begin{proof}
As proved in Lemma~\ref{lem_read}, the read inconsistency is always detectable. 
Furthermore, once a read inconsistency is encountered, then each page is simply 
brought to the desired query snapshot independently by examining its TPS and 
the indirection value and consulting the corresponding tail pages using the 
logic outlined earlier. Hence, consistent reads by constructing consistent 
snapshots across different columns of the same record is always possible. \qed
\end{proof}

TPS, or an alternative but similar counter conceptually, could be used as a 
high-water mark for resetting the cumulative updates as well. Continuing with 
our running scenario, in which we have the original base pages with the TPS 0 
(as shown in Table~\ref{tab:example_merge}), the merged pages the with TPS 
7 (as shown in Table~\ref{tab:example_lineage}). For simplicity, we assume 
the cumulation was also reset after the 7$^{th}$ tail record. For the record $b_2$, 
we see that the indirection pointer is $t_{12}$, for which we know that the 
cumulative update has been reset after the 7$^{th}$ update. This means that the tail 
record $t_{12}$ does not carry updates that were accumulated between tail records 
1 to 7. Suppose that the record was updated four times, where the update entries in the 
tail pages are $3^{rd}$, $5^{th}$, $10^{th}$, and $12^{th}$ tail records. The tail 
record $t_{5}$ is a cumulative and carries the updated values from the tail record $t_{3}$. 
However, the tail record $t_{10}$ is not cumulative (reset occurred at the 8$^{th}$ update), 
whereas the tail record $t_{12}$ is cumulative, but carries updates only from the tail 
record $t_{10}$ and not from $t_{5}$ and $t_{3}$. Suppose that a transaction is reading the 
base pages with the TPS 0, then to reconstruct the full version of the record $b_2$, it 
must read both the tail records $t_{5}$ and $t_{12}$ (while skipping 3$^{rd}$ and 
10$^{th}$). But if a transaction is reading from the merged pages with the TPS 7, 
then it is sufficient to only read the tail record $t_{12}$ to fully reconstruct 
the record because the 3$^{rd}$ and 5$^{th}$ updates have already been applied to 
the merged pages.

%\vspace{-1.5mm}

%\subsection{Discussion: Storage Adaption}
%In what follows, we discuss how our storage architecture is further optimized 
%for accessing historic data. Additionally, we discuss the pros and cons of how 
%to choose an effective range partitioning size.

%\subsubsection{Compressing Historic Data}
\subsection{Compressing Historic Data}
\label{sec_compress}
For historic tail pages, namely, the committed and subsequently merged tail 
pages, we introduce a contention-free compression sch\-eme to substantially 
reduce storage footprint and improving access patterns for historic queries 
(or time travel queries). Periodically, for a range of records, we compress 
only a set of merged tail pages (across all updated columns) that fall outside 
the oldest query snapshot in order to avoid clashing with the readers/writers 
of non-historic data. 

The key benefit of our compression scheme, which takes as input a set of 
tail pages for all updated columns for a given range of records (stable data) 
and outputs a set of newly compressed tail pages (in columnar form), are the 
following key properties (as demonstrated in Table~\ref{tab:example_compress}). 
First, the compressed tail records are re-ordered according to the base RID 
order; hence improving the locality of access. Second, for each record, and 
within each column, the different versions are stored inline and contiguously. 
The version inlining avoid the need to repeatedly store unchanged values due to 
cumulative updates, but, more importantly, it enables delta compression among 
the different versions of the records to further reduce the space overhead. Also 
collapsing the different versions of the same record into a single tail 
record eliminates the need for back pointers that are needed for referencing 
the previous versions. Thus, the inline versions are tightly packed and ordered 
temporally as shown in Table~\ref{tab:example_compress}.

The compressed tail pages are read-only and are used exclusively for historic 
queries. These new pages can be isolated and pushed down lower in the storage 
hierarchy since they are by definition colder and not accessed as frequently. 
Lastly, the page directory is updated by swapping the pointers for the old 
tail pages to point to the newly created compressed tail pages. Notably, 
considering that the access frequency is much lower for historic tail pages 
compared to the access frequency of non-historic tail pages, any locking- 
or non-locking-based approaches (such as the epoch-based approach discussed 
in Section~\ref{sec_merge}) can be employed without noticeably affecting the 
overall system performance.

%\newpage
%\subsubsection{Record Partitioning Trade-offs}
\subsection{Record Partitioning Trade-offs}
\label{sec_range}
When choosing the range of records for partitioning (i.e., update range) there 
are several dimensions that needs to be examined. An important observation is that 
regardless of the range size, recent updates to tail pages 
will be memory resident and no random disk I/O is required; this trend 
is supported by continued increase in the size of main memory and the 
fact that the entire OLTP database is expected to fit in the main 
memory~\cite{Diaconu:2013,Larson:2015}. 

In our evaluation, we did an in-depth study of the impact of the range size, and 
we observed that the key deciding factor is the frequency at which the  
merges are processed. How frequent a merge is initiated is proportional 
to how many tail records are accumulated before the merge process is triggered. 
We further experimentally observed that the update range sizes in the order of $2^{12}$ to 
$2^{16}$ exhibit a superior overall performance vs. data fragmentation depending 
on the workload update distribution. Because for a smaller update range size, we 
may have many corresponding half-filled tail pages, but as the range size increases, the cost 
of half-filled tail pages are amortized over a much larger set of records.\footnote{To reduce 
space under-utilization, tail pages could be smaller than base pages, for instance, 
tail pages could be 4 KB while base pages are 32 KB or larger.}
Furthermore, the range size affects the clustering of updates in tail pages, the 
larger the range size, then it is more likely that cache misses occur when 
scanning the recent update that are not merged yet. Again, considering that 
recent cache sizes are in order of tens of megabytes, the choice of any range 
value between $2^{12}$ to $2^{16}$ is further supported. As noted before, 
one may choose a finer range partitioning for handling updates (i.e., update range), 
e.g., $2^{12}$, to improve locality of access while choosing coarser virtual range 
sizes when performing merges, essentially forcing the merge to take-in as input a set 
of consecutive update ranges that have been updated, e.g., choosing $2^4$ 
consecutive $2^{12}$ ranges in order to merge $2^{12} \times 2^4 = 2^{16}$ records.

For example, suppose the scan operation (even if there are concurrent 
scans) may access 2 columns, assume each column is $2^3$ bytes 
long. We further assume that the merge can keep up, namely, even for 
$2^{16}$ update range size, the number of tail records yet to be merged is less 
than $2^{16}$ (as shown in Section \ref{sec:exp}, such merging rate can be 
achieved while executing up to 16 concurrent update transactions). The 
overall scan footprint (combining both base pages and tail pages) is 
approximately $2^{16} \times 2^3 \times 2 \times 2 = 2^{21}$ (2 MB), 
which certainly fits in today's processor cache (in our evaluation, we 
used Intel Xeon E5-2430 processor, which has 15 MB cache size). Thus, 
even as scanning base records, if one is forced to perform random lookup 
within a range of $2^{16}$ tail records, the number of cache misses are 
limited compared to when the range size was beyond the cache capacity.

Another criteria for selecting an effective update range size is the need for 
RID allocation. In \tp, upon the first update to a range of records (e.g., $2^{12}$ 
to $2^{16}$ range), we pre-allocate $2^{12}$ to $2^{16}$ unused RIDs for referencing its 
corresponding tail pages. Tail RIDs are special in a sense they are not added to 
indexes and no unique constraint is applied on them. Once the tail RID range is fully 
used, then either a new unused RID range is allocated or an existing underutilized 
tail RID range can be re-assigned (partially used RID range must satisfy TPS 
monotonicity requirement). Furthermore, in order to avoid overlapping the base 
and tail RIDs, one could assign tail RIDs in the reverse order starting from $2^{64}$; 
therefore, tail RIDs will be monotonically decreasing, and the TPS logic must be 
reversed accordingly. The benefit of reverse assignment is that while scanning page 
directory for base pages, there is no need to first read and later skip tail page 
entries (read optimization).

%\vspace{-1.5mm}
\section{Fast Transactional Capabilities}
\label{sec_rec}
In order to support concurrent transactions (where each transaction may 
consists of many statements), any database engine must provide necessary 
functionalities to ensure the correctness of concurrent reads and writes 
of the shared data. Furthermore, transaction logging is required in order to 
recover the system from crash and media failure. 
%We begin by first briefly 
%highlighting the key features of our low-level sync concurrency control without the need 
%of locking followed by the discussion of low-level synchronization protocol and logging 
%requirements. 

\subsection{Simplified Concurrency \& Recovery}
We begin by first briefly highlighting the key features of our concurrency control without the need 
of locking followed by the discussion of low-level latching and logging 
requirements. 

%\textbf{Optimistic Concurrency Control Protocol}
%In terms of concurrency protocol for transaction processing, any existing protocols 
%can be leveraged because \tp\ primarily focuses on the storage architecture. 
%In particular, we relied on the recently proposed optimistic concurrency model 
%introduced in~\cite{DBLP:journals/pvldb/SadoghiCBNR14} that supports full ACID properties 
%for multi-statement transactions, and we also employed the speculative reads proposed 
%in~\cite{LarsonBDFPZ11}. The details of concurrency protocol is presented in Appendix~\ref{sec_transaction}.

\subsubsection{Optimistic Concurrency Control}
\label{sec_transaction}
It is important to note that \tp's focus is on storage architecture thus making 
it agnostic to the underlying concurrency protocol. In our \tp\ prototype, we relied on 
the recently proposed optimistic concurrency model introduced in~\cite{DBLP:journals/pvldb/SadoghiCBNR14} 
that supports full ACID properties for multi-statement transactions, and we also employed the 
speculative reads proposed in~\cite{LarsonBDFPZ11}.

When a transaction starts, it receives a begin time from a synchronized clock (time is advanced 
before it is returned) and is assigned a unique monotonically increasing transaction ID.\footnote{In 
fact, the begin time could itself be used as a seed for the transaction ID.} The begin time is 
used to determine records visibility, namely, only the latest version of records that were 
created/modified before the begin time are visible to the transaction. Similarity before 
transaction commits, it receives a commit time (again time is advanced before it is returned). 
All modifications by the transaction (including insertion and updates) will use the commit time 
as the start time of the inserted/updated records. Recall that the start time of the record 
acts as an implicit end time of the previous version of the record. The transaction manager 
also maintains the state of each transaction and its begin/commit time in a hashtable. Each 
transaction has four states: active, pre-commit, committed, and aborted. When a transaction begins 
it starts in ``active'' state and once completes all its reads and writes operations, it enters 
``pre-commit'' state to validates its reads before reaching the ``commits'' state. A failed 
transaction will have an ``abort'' state. 

The employed optimistic concurrency control is formalized as follows:\footnote{In the interest of 
space, we are presenting a much simpler variation of the concurrency models that were presented 
in~\cite{LarsonBDFPZ11,DBLP:journals/pvldb/SadoghiCBNR14}, for complete details of the protocol 
please refer to the original papers. In particular, the following protocol can incorporate the 
idea of certify protocol using latch-free read counters proposed in~\cite{DBLP:journals/pvldb/SadoghiCBNR14} 
in order to reduce the likelihood of validation failure.}

%\vspace{-2mm}
%\begin{itemize}[leftmargin=*,itemsep=0pt]
\begin{itemize}
\item \textit{read} $r(x)$, the latest committed and visible version of $x$ is read. The latest version 
is read by accessing the base record and examining its \textit{Indirection} column. If the indirection 
is set to null ($\bot$) or the indirection value is not larger than the TPS counter of the base page, then 
the base record holds the latest version of the record; otherwise, the latest version of the record is fetched by 
following the indirection forward pointer. If the latest version of the record is visible, i.e., meaning 
record's start time is smaller than transaction begin time, then the latest version of the record is returned; 
otherwise, the last committed and visible version of the record is returned by following the indirection backward pointer. 
The \textit{Start Time} column may also hold transaction ID, and in such cases the transaction manager is consulted 
to determine whether the record is committed and visible. The RID of the committed and visible record 
is added to the transaction's readset.

\item \textit{speculative-read} $r(x)$, the latest pre-committed and visible version of $x$ is read. 
The speculative read is similar to normal read except that it relaxes the read criteria, and it also 
allows reading updated/inserted records by those transactions that are in the pre-commit state  
and not committed yet. The RID of the pre-committed and visible record is added to the transaction's 
readset.

\item \textit{write} $w(x)$, a new uncommitted version of record is written, a write-write conflict 
is detected prior to modifying $x$. If no conflict is detected, a new uncommitted version of $x$ is 
written and the indirection value of the base record is updated accordingly. Detecting write-write 
conflict is a two-step process that exploits the indirection value of the base record. Each indirection 
pointer reserves one bit for latching. First, the latch bit of the indirection value is set using 
atomic compare-and-swap (CAS) operator. If setting the latch bit fails, then it is an indicator of 
write-write conflict, and the transaction aborts. Second, once the latch bit is set, then the 
start time of the latest version of the record is checked. If the start time holds a transaction 
ID of a competing uncommitted transaction, then another write-write conflict is detected, the latch 
bit is released and the transaction aborts. If no write-write conflict is detected, then the new 
version of the record is installed in the tail page, and the tail RID is written in the \textit{Indirection} 
column of the base record. The start time of the newly installed record is set to the transaction ID 
of the writer. The latch bit of the record is released.

\item \textit{validate reads}, for each item $x$ in the readset, the read repeatability is verified. 
First, a commit timestamp is acquired for the transaction and the transaction state is changed from 
``active'' to ``pre-commit''; both changes are reflected atomically in the transaction manager's hash\-table. 
For each read record, if the currently committed and visible RID based on the commit time of the 
transaction is equal to the committed (or pre-committed for speculative reads) and visible RID as of 
the begin time of the transaction,\footnote{Committed/pre-committed visible RIDs as of the begin 
time of the transaction are already maintained in the readset, so no additional computation is required.} 
then the validation is satisfied; otherwise, the validation fails and the transaction is aborted and 
rolled back.

\item \textit{commit}, to finalize the transaction, the necessary log rec\-ords are written and transaction 
changes are made visible to all other transactions. When the transaction is committed, the transaction 
manager atomically changes the state of transaction from ``pre-commit'' to ``committed''. Note, 
the newly written records will have transaction ID in their \textit{Start Time} column, and we do not swap the 
transaction ID with the commit time as part of the commit phase because, in general, commit protocol 
is extremely time-sensitive, and it is essential to minimize the length of commit phase to avoid overall 
performance deterioration. Swapping the transaction ID with commit time is done lazily by future 
readers.
\end{itemize}
%\vspace{-2mm}

The validation in the optimistic concurrency is only need\-ed for repeatable read and serializability. 
The read committed isolation always reads the visible and committed version and does not 
require validation, and the snapshot isolation reads the view of the database from an instantaneous 
point in time, which again does not require validation except for any records that were read 
speculatively.

\comment{
\subsection{Optimistic Concurrency Control Protocol}
In terms of concurrency protocol for transaction processing, any existing protocols 
can be leveraged because \tp's primary focus is the storage architecture. 
In particular, we relied on the recently proposed optimistic concurrency model 
introduced in~\cite{DBLP:journals/pvldb/SadoghiCBNR14} that supports full ACID properties 
for multi-statement transactions, and we also employed the speculative reads proposed 
in~\cite{LarsonBDFPZ11}.

When a transaction starts, it receives a begin time from a synchronized clock (time is advanced 
before it is returned) and is assigned a unique monotonically increasing transaction ID.\footnote{In 
fact, the begin time could itself be used as a seed for the transaction ID.} The begin time is 
used to determine records visibility, namely, only the latest version of records that were 
created/modified before the begin time are visible to the transaction. Similarity before 
transaction commits, it receives a commit time (again time is advanced before it is returned). 
All modifications by the transaction (including insertion and updates) will use the commit time 
as the start time of the inserted/updated records. Recall that the start time of the record 
acts as an implicit end time of the previous version of the record. The transaction manager 
also maintains the state of each transaction and its begin/commit time in a hashtable. Each 
transaction has four states: active, pre-commit, committed, and aborted. When a transaction begins 
it starts in ``active'' state and once completes all its reads and writes operations, it enters 
``pre-commit'' state to validates its reads before reaching the ``commits'' state. A failed 
transaction will have an ``abort'' state. 

The employed optimistic concurrency control is formalized as follows:\footnote{In the interest of 
space, we are presenting a much simpler variation of the concurrency models that were presented 
in~\cite{LarsonBDFPZ11,DBLP:journals/pvldb/SadoghiCBNR14}, for complete details of the protocol 
please refer to the original papers. In particular, the following protocol can incorporate the 
idea of certify protocol using latch-free read counters proposed in~\cite{DBLP:journals/pvldb/SadoghiCBNR14} 
in order to reduce the likelihood of validation failure.}

%\vspace{-2mm}
\begin{itemize}[leftmargin=*,itemsep=0pt]
\item \textit{read} $r(x)$, the latest committed and visible version of $x$ is read. The latest version 
is read by accessing the base record and examining its \textit{Indirection} column. If the indirection 
is set to null ($\bot$) or the indirection value is not larger the base page TPS counter, then the base 
record holds the latest version of the record; otherwise, the latest version of the record is fetched by 
following the indirection forward pointer. If the latest version of the record is visible, i.e., meaning 
record's start time is smaller than transaction begin time, then the record is returned; otherwise the 
last committed and visible version of the record is returned by following the indirection backward pointer. 
The start time column may also hold transaction ID, and in such cases the transaction manager is consulted 
to determine whether the record is committed and visible. The RID of the committed and visible record 
is added to the transaction's readset.

\item \textit{speculative-read} $r(x)$, the latest pre-committed and visible version of $x$ is read. 
The speculative read is similar to normal read except that it relaxes the read criteria, and it also 
allows reading updated/inserted records by those transactions that are in the pre-commit state  
and not committed yet. The RID of the pre-committed and visible record is added to the transaction's 
readset.

\item \textit{write} $w(x)$, a new uncommitted version of record is written, a write-write conflict 
is detected prior to modifying $x$. If no conflict is detected, a new uncommitted version of $x$ is 
written and the indirection value of the base record is updated accordingly. Detecting write-write 
conflict is a two-step process that exploits the indirection value of the base record. Each indirection 
pointer reserves one bit for latching. First, the latch bit of the indirection value is set using 
atomic compare-and-swap (CAS) operator. If setting the latch bit fails, then it is an indicator of 
write-write conflict, and the transaction aborts. Second, once the latch bit is set, then the 
start time of the latest version of the record is checked. If the start time holds a transaction 
ID of a competing uncommitted transaction, then another write-write conflict is detected, the latch 
bit is released and the transaction aborts. If no write-write conflict is detected, then the new 
version of the record is installed in the tail page, and the tail RID is written in the \textit{Indirection} 
column of the base record. The start time of the newly installed record is set to the transaction ID 
of the writer. The latch bit of the record is released.

\item \textit{validate reads}, for each item $x$ in the readset, the read repeatability is verified. 
First, a commit timestamp is acquired for the transaction and the transaction state is changed from 
``active'' to ``pre-commit''; both changes are reflected atomically in the transaction manager's hashtable. 
For each read record, if the currently committed and visible RID based on the commit time of the 
transaction is equal to the committed (or pre-committed for speculative reads) and visible RID as of 
the begin time of the transaction,\footnote{committed/pre-committed visible RIDs as of the begin 
time of the transaction are already maintained in the readset, so no additional computation is required.} 
then the validation is satisfied; otherwise, the validation fails and the transaction is aborted and 
rolled back.

\item \textit{commit}, to finalize the transaction, the necessary log records are written and transaction 
changes are made visible to all other transactions. When the transaction is committed, the transaction 
manager atomically changes the state of transaction from ``pre-commit'' to ``committed''. Note, 
the newly written records will have transaction ID in their start time, and we do not swap the 
transaction ID with the commit time as part of the commit phase because, in general, commit protocol 
is extremely time-sensitive, and it is essential to minimize the length of commit phase to avoid overall 
performance deterioration. Swapping the transaction ID with commit time is done lazily by future 
readers.
\end{itemize}

The validation in the optimistic concurrency is only needed for repeatable read and serializability. 
The read committed isolation always reads the visible and committed version and does not 
require validation, and the snapshot isolation reads the view of the database from an instantaneous 
point in time, which again does not require validation except for any records that were read 
speculatively. 
}

\subsubsection{Low-level Synchronization Protocol}
%\textbf{Low-level Latching Protocol}
In terms of low-level latching, our lineage-based storage has a set of unique benefits, namely, 
readers do not have to latch the read-only base pages or fully committed tail pages. Also there is 
no need to latch partially committed tail pages when accessing committed records. More importantly, 
writers never modify base pages (except the \textit{Indirection} column) nor the fully committed tail 
pages, so no latching is required for stable pages. The \textit{Indirection} column is at most 8-byte
long; therefore, writers can simply rely on atomic compare-and-swap (CAS) operators  
to avoid latching the page. 

As part of the merge process, no latching of tail 
and base pages are required because they are not modified. The only latching requirement 
for the merge is updating the page directory to point to the newly created merged pages.
Therefore, every affected page in the page directory are latched one at a time to 
perform the pointer swap or alternatively atomic CAS operator is employed for each 
entry (pointer swap) in the page directory. Alternatively, the page directory 
can be implemented using latch-free index structures such as Bw-Tree~\cite{bwtree}.

%\textbf{Recovery and Logging Protocol}
\subsubsection{Recovery and Logging Protocol}
Our lineage-based storage architecture consists of read-only base pages 
(that are not modified) and append-only updates to tail pages (which are not modified 
once written). When a record is updated, no logging is required for base pages 
(because they are read-only), but the modified tail pages requires redo 
logging. Again, since we eliminate any in-place update for tail pages, 
no undo log is required. Upon a crash, the redo log for tail pages are 
replayed, and for any uncommitted transactions (or partial rollback), 
the tail record is marked as invalid (e.g., tombstone), but the space 
is not reclaimed until the compression phase (cf. Section~\ref{sec_compress}). 

The one exception to above rule for logging and recovery is the 
\textit{Indirection} column, which is updated in-place. There are two possible recovery 
options: (1) one can rely on standard undo-redo log for the \textit{Indirection} 
column only or (2) one can simply rebuild the \textit{Indirection} column upon crash. 
The former option can further be optimized based on the realization that tail pages 
undergo strictly redo policy and aborted transactions do not physically remove the 
aborted tail records as they are only marked as tombstones. Therefore, it is acceptable 
for the \textit{Indirection} column to continue pointing to tombstones, and from 
the tombstones finding the latest committed values. As a result, even for the 
\textit{Indirection} column only the redo log is necessary. For the latter recovery 
option, as discussed earlier, to speedup the merge process, we materialize the 
\textit{Base RID} column in tail records that can be used to populate the 
\textit{Indirection} column after the crash. Alternatively, even without 
materializing an additional RID column, one can follow backpointers in the 
\textit{Indirection} column of tail records to fetch the base RID because 
the very first tail record always points back to the original base record.

The merge process is idempotent because it operates str\-ictly on committed data and 
repeated execution of the merge always produce the exact same results given a 
set of base pages, their corresponding tail pages, and a merge threshold that dictates 
how many consecutive committed tail records to be used in the merge process. Therefore, 
only operational logging is required for the merge process. Also updating the entries in 
the page directory upon completion of the merge process simply requires standard index 
logging (both undo-redo logs). If crash occurs during the merge, simply the partial merge 
results can be ignored and the merge can be restart\-ed. Similarly, compressing the 
historic tail-pages is idempotent and requires only operational logging and restart 
on crash (cf. Section~\ref{sec_compress}).

%\comment{
\subsection{Miscellaneous: Write-ahead Logging}
In the columnar storage, writing the log record becomes even a more expensive 
operation because a single record update or insert spans multiple columns that reside 
on different pages. As a result, when writing the log record, all affected 
pages due to updates must be latched with an exclusive access. While the pages 
are latched, the necessary changes are made to the pages, the log record is 
written, the log sequence number (LSN) is acquired, and the \textit{pageLSN} 
for every page is updated to hold the LSN of the latest update~\cite{ARIES}. 
Subsequently, all the exclusive latches are released. Holding an exclusive 
latch is essential otherwise the page may end up in an inconsistent state. 

Consider two transactions attempting to update two different rec\-ords on the same 
page. The first transaction $t_1$ receives LSN $l_1$ while the second 
transaction $t_2$ receives LSN $l_2$. Suppose log rec\-ords are written without 
holding the latch on the page that needs to be updated. In a concurrent system, 
it is possible that $t_2$ first gets to update the page, then updates the 
\textit{pageLSN} to $l_2$ while $t_1$ has yet to update the page. At this 
point, the page has entered an inconsistent state because the \textit{pageLSN} 
indicates that all updates up to $l_2$ are applied to the page while $t_1$ has 
not made any changes. Now if $t_1$ arrives and make the desired updates, then 
it is still unclear how $t_1$ can announce that its changes are applied because 
if it attempts to update the \textit{pageLSN} to $l_1$, then the page again will 
be in an inconsistent state because both updates from $t_1$ and $t_2$ have been 
applied, but the \textit{pageLSN} only indicates that update from $t_1$ has been 
applied. At this point, the page is dirty with an incorrect \textit{pageLSN} again. 
Suppose the bufferpool stealing policy is now exercised, and the dirty page 
with the \textit{pageLSN} $l_1$ (but having updates from both $t_1$ and $t_2$) 
is written to disk. Now consider the scenario, in which the page is flushed and 
the transaction $t_1$ commits, but while $t_2$ is still active the database 
crashes. Now after the crash, the page in question appears to be clean because 
the latest change on the page denoted by $l_1$ has committed; hence, dirty 
uncommitted changes from $t_2$ will not be discovered, and the database will be 
inconsistent. An alternative scenario is when the page is flushed to disk 
(again due to bufferpool stealing policy) after $t_2$ updates the \textit{pageLSN}, 
but before $t_1$ make any changes to the page. Thus, the inconsistent data becomes 
persistent. Suppose $t_1$ request the page again, and the page is brought back to 
memory, and $t_1$ updates the page content (and leaves the \textit{pageLSN} to remain 
at $l_2$), and $t_1$ commits. Now if the database crashes, the persistent page on disk has the 
\textit{pageLSN} $l_2$ and both $t_1$ and $t_2$ have committed, but changes from 
$t_2$ are only reflected on disk and updates from $t_1$ are lost. However, the 
crash recovery protocol sees that the page is up-to-date and clean and no 
further action is taken; hence, the database will remain in an inconsistent 
state.

Although holding an exclusive latch while updating a page solves the above mentioned 
inconsistencies, but holding an exclusive latch for such a prolonged duration will 
substantially deteriorates the overall performance of the system. In \tp, the 
base pages are not updated in-place, and tail pages follow a strict append-only 
policy. But even with an append-only policy if the write-ahead logging is employed, 
then it may deteriorate the transaction throughput due to exclusive latches. Similar 
challenges also arise even when updating a single column, namely, the \textit{Indirection} 
column (but without the need of undo log). To address the WAL challenges for 
columnar storage, we introduce an \ownfull\ protocol to substantially reduce 
the need for holding exclusive latches in order to correctly update the 
\textit{pageLSN}. The key idea behind \own\ protocol is to have all writers to 
hold a compatible shared latch instead (not exclusive latches) while only one 
transaction (with the highest LSN) is selected as the owner of the page and 
responsible for updating the \textit{pageLSN} and promoting its shared latch 
to an exclusive one. Other transactions do not need to update the \textit{pageLSN}; 
however, before releasing the shared latches, they must ensure that the page 
has an owner. Therefore, we propose that each data page to have an \textit{ownerLSN} 
in addition to the \textit{pageLSN}.\footnote{The \textit{ownerLSN} does not 
have to be materialized on the page itself and could be maintained as a 
meta-data in an external data structure.} 

For multi-threaded in-place update of the \textit{Indirection} column, our 
proposed \own\ protocol will be as follows. All writers (i.e., update 
transactions) acquire a shared latch on the \textit{Indirection} column before 
making any changes. After the latch is granted, the \textit{Indirection} column 
is updated in-place, the redo log record is written, and the LSN is 
acquired. If the \textit{ownerLSN} of the page is larger than the writer's LSN, 
then the shared latch is released. However, if the \textit{ownerLSN} is smaller  
than the writer's LSN, then the writer updates the \textit{ownerLSN} (using 
atomic CAS operator) and promotes its shared latch to an exclusive latch (a 
conditional promotion), and checks if it is still the owner while waiting 
otherwise the latch is released. Once the exclusive latch is granted, the writer 
will update the \textit{pageLSN} and release the latch. 

Therefore, if there are 100 concurrent writers, then only one writer will get 
an exclusive latch on be-half of all the writers, and the \textit{pageLSN} is 
updated once for every 100 concurrent writers. Since the writer will never release 
its shared latch as long as it is deemed owner, then we will never flush (nor persist) 
a page when the page content and the \textit{pageLSN} are not yet consistent. 
Also to ensure that the page is flushed eventually to avoid starvation, periodically 
a page is forced to drain all its current writers and update \textit{pageLSN} before 
accepting any new writers (similar to a checkpointing procedure). This can be 
implemented by ensuring that at most $\theta_s$ shared latches are granted between 
any two consecutive flushes. Therefore, once the threshold $\theta_s$ is exceeded, 
then no new shared latches are granted for writers and the page is forced to drain 
its writers and be flushed subsequently. Recall that readers do not need to hold 
any shared latch when reading the \textit{Indirection} column because all changes 
to the page are done using atomic CAS operator; thus, the forced flushing policy 
does not affect the readers.

For multi-threaded append-only updates (sparse inserts) or inserts to tail 
pages, we propose slightly more complicated variation of our \own\ protocol.
We further assume that every page is pre-allocated with a set of fixed size 
slots (the number of available slot is maintained for each page) to accommodate 
the updates. The writer first acquires a shared latch on the \textit{Indirection} 
column in the tail page followed by acquiring a tail RID for appending the updates 
(or a new record). For appending a tail record, each writer acquires a shared latch 
for each updated column individually. The latches are acquired in the order in 
which columns appear in the table schema (from left to right) and held until 
the update is completed. Once the shared latches are granted, the writer writes 
the new value in the pre-allocated slots determined by the assigned tail RID. After append 
is completed, the redo log record is written, and the LSN is acquired. For 
every page, if the writer's LSN is the highest LSN that the page has seen so 
far (LSN $\geq$ \textit{ownerLSN}), then the writer becomes the page owner 
and updates the \textit{ownerLSN} using atomic CAS operator; 
otherwise the shared latch is released. For each page that the writer has the 
ownership, the writer promotes its shared latch to an exclusive latch, and checks 
if it is still the owner while waiting for the latch; otherwise, it releases the 
latch. While holding the exclusive latch, the writer will update the \textit{pageLSN}, 
and optionally will compact slots and pre-allocate new slots for future updates as needed. 
Notably even if the writer with the ownership is aborted, the tail entries will not be 
removed; thus, the write will continue to update the \textit{pageLSN} accordingly.

Handling the starvation problem for page flushing is much simpler for tail pages, 
and it requires no intervention. As soon as a tail page is full, then naturally 
it will have no more writers, so it can be flushed without the need to introduce 
any forced flushing policy. 
%}
%\vspace{-1.5mm}
\section{Experimental Evaluation}
\label{sec:exp}
In order to study the impact of high-throughput transaction processing in the presence of 
long-running analytical queries, we carried out a comprehensive set of experiments. These experiments were run using an 
existing micro benchmark proposed in~\cite{LarsonBDFPZ11,DBLP:journals/pvldb/SadoghiCBNR14}, 
for the sake of a fair comparison and evaluation. This benchmark allows us to study different 
storage architectures by narrowing down the impact of concurrency with respect to the 
database active set by adjusting the degree of contention between readers and writers. 

\subsection{Experimental Setting}
We evaluate the performance of various aspects of our real-time OLTP 
and OLAP system. Our experiments were conducted on a two-socket Intel Xeon E5-2430 
@ 2.20 GHz server that has 6 cores per socket with hyper-threading enabled 
(providing a total of 24 hardware threads). The system has 64 GB of memory 
and 15 MB of L3 cache per socket. We implemented a complete working prototype 
of \tp\ and compared it against two different techniques, (i) \iph\ and 
(ii) \dbt, which are described subsequently. The prototype was implemented 
in Java (using JDK 1.7).%\footnote{Our prototype is implemented over the Apache Spark libraries~\cite{Spark}.} 
Our primary focus here is to simultaneously evaluate read and write throughputs 
of these systems under various transactional workloads concurrently executed 
with long-running analytical queries, which is the key characteristic of any 
real-time OLTP and OLAP system.

Our employed micro benchmark defined in~\cite{LarsonBDFPZ11,DBLP:journals/pvldb/SadoghiCBNR14} consists of three key types 
of workloads: (1) low contention, where the database active set is 10M records; (2) medium 
contention, where the active set is 100K records; and (3) high contention, where the 
active set is 10K records. It is important to note that the database size is not 
limited to the active set and can be much larger (millions or billions of records). 
Similar to~\cite{LarsonBDFPZ11,DBLP:journals/pvldb/SadoghiCBNR14}, we consider two classes of transactions: 
\textit{read-only transactions} (executed under snapshot isolation semantics) that scan 
up to 10$\%$ of the data (to model TPC-H style analytical queries) and short update 
transactions (to model TPC-C and TPC-E transactions), in which each \textit{short update 
transaction} consists of 8 read and 2 write statements (executed under committed read semantics) 
over a table schema with 10 columns. In addition, we vary the ratio of read/writes in 
these update transactions to model different customer scenarios with different read/write 
degrees. By default, transactional throughput of these schemes are evaluated while 
running (at least) one scan thread and one merge thread to create the real-time OLTP 
and OLAP scenario. Unless stated explicitly, the percentage of reads and writes in the 
transactional workload is fixed at 80$\%$ and 20$\%$, respectively. On average 40\% 
of all columns are updated by the writers. Lastly, the page size is set to 32 KB for both 
base and tail pages because a larger page size often results in a higher compression 
ratio suitable for analytical workloads~\cite{Kemper:2011}.

\begin{figure}[t]
\begin{center}
\subfigure[Low Contention.]{\centering \includegraphics[trim=0cm 0cm 0cm 0cm, width=3in]{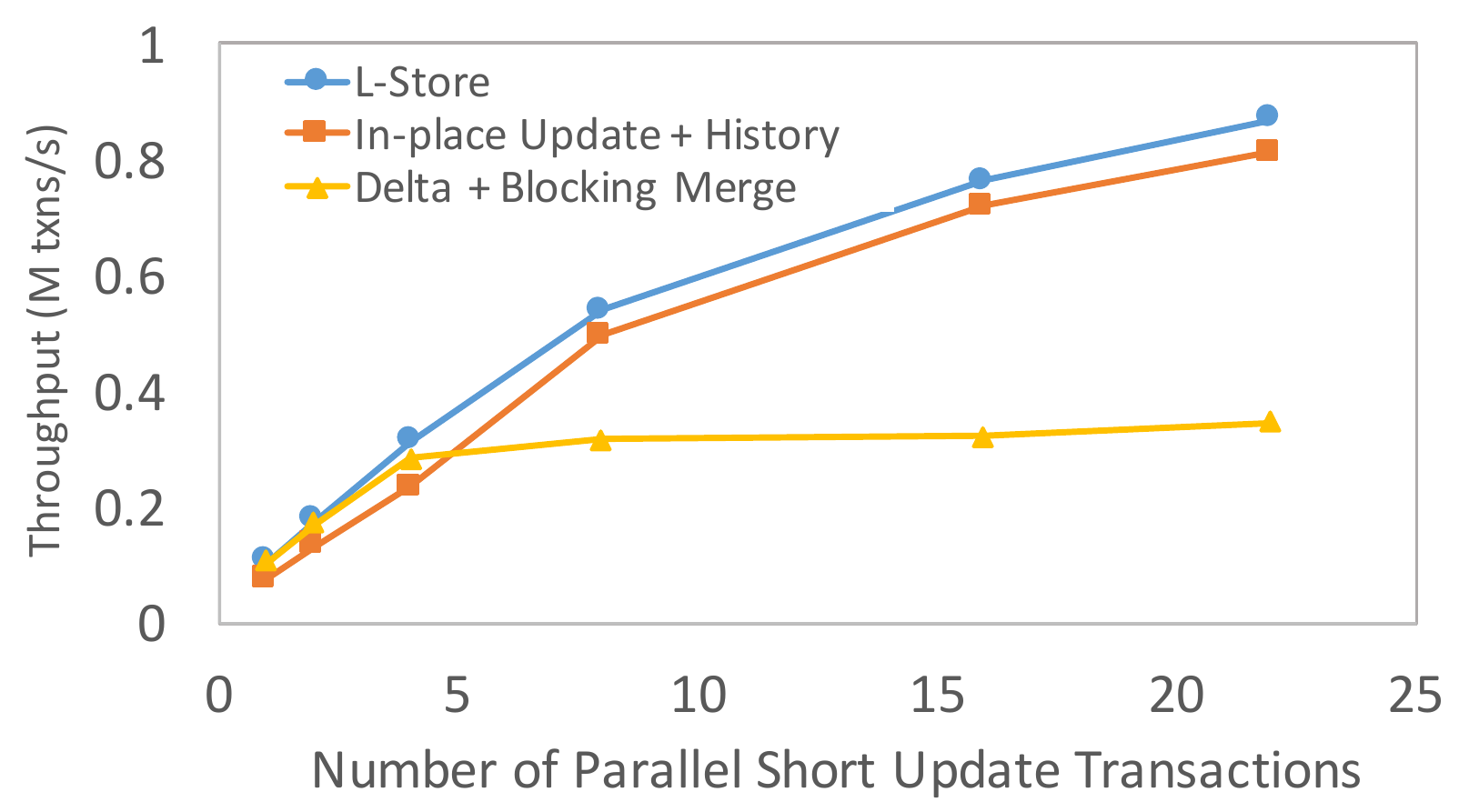} \label{fig:throughput-lc}}
\subfigure[Medium Contention]{\centering \includegraphics[trim=0cm 0cm 0cm 0cm, width=3in]{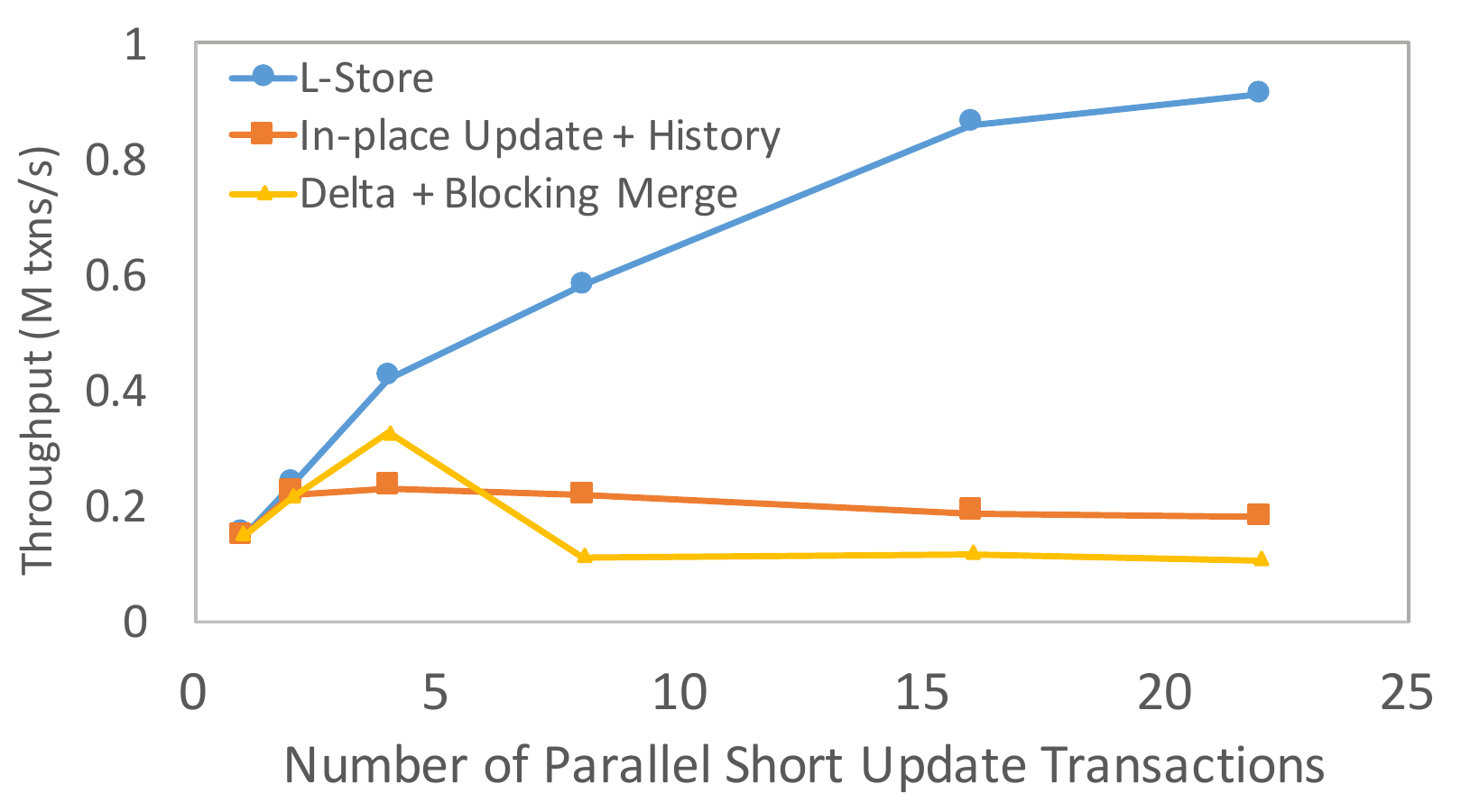} \label{fig:throughput-mc}}
\subfigure[High Contention]{\centering \includegraphics[trim=0cm 0cm 0cm 0cm, width=3in]{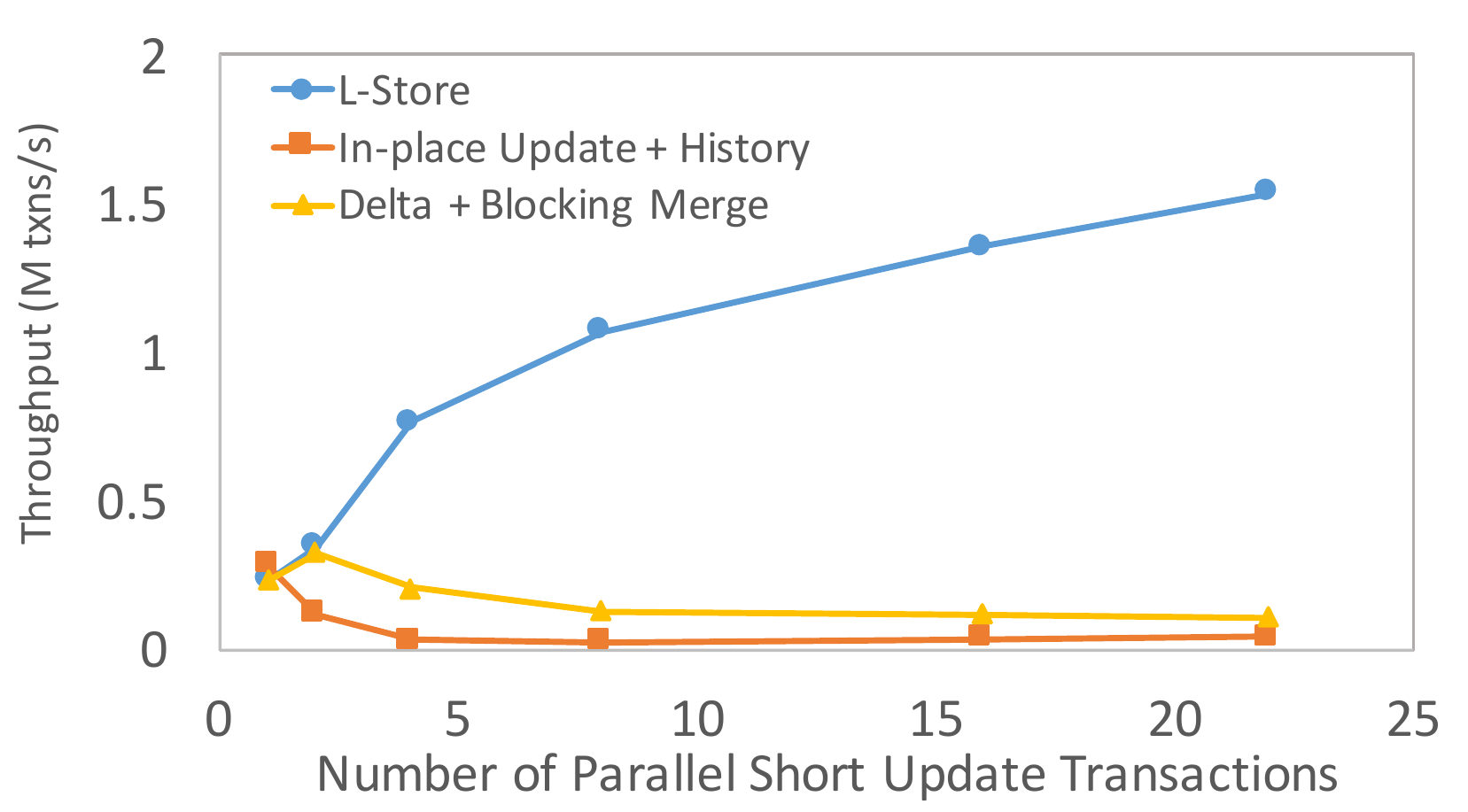} \label{fig:throughput-hc}}
\end{center}
\caption{Scalability under varying contention level.}
\label{fig:lrt-mc}
\end{figure}

Next we describe the two techniques that are compared with \tp. 
We point out the primary features of these techniques and describe it with respect to 
\tp. For fairness, across all techniques, we have maintained columnar storage, 
maintained a single primary index for fast point lookup, and employed the 
embedded-indirection column to efficiently access the older/newer versions of 
the records. Additionally, logging has been turned off for all systems as  
logging could easily become the main bottleneck (unless sophisticated logging 
mechanisms such as group commits and/or enterprise-grade SSDs are employed).
In the \iph\ technique, we are required to write both redo-undo logs for all updates 
while for \tp\ and \dbt\ only redo log is needed due to their append-only 
scheme.

\textbf{In-place Update + History (IUH):} 
A prominent storage organization is to append old versions of records 
to a history table and only retain the most recent version in the main table,
updating it in-place. An example of a commercial system that has implemented this 
table organization is the Oracle Flashback Archive~\cite{oracleArchive:webpage}; 
thus, our \iph\ is inspired by such table organization that avoids having multiple 
copies and representations of the data. However, due to the nature of the in-place update 
approach, each page requires standard shared and exclusive latches that are often found in major 
commercial database systems. In addition to the page latching requirement, if a transaction 
aborts, then the update to the page in the main table is undone, and the previous 
record is restored. Scans are performed by constructing a consistent snapshots, 
namely, if records in the main table are invisible with respect to query's read time, 
then the older versions of the records are fetched from the history table by 
following the indirection column. In our implementation of \iph, we also ignored 
other major costs of in-place update over the compressed data, in which the new value 
may not fit in-place due to compression and requires costly page splits or shifting 
data within the page as part of update transactions. We further optimized the 
history table to include only the updated columns as opposed to inserting all 
columns naively.

\textbf{Delta + Blocking Merge (DBM):} This technique is inspired by HANA~\cite{Krueger11}, 
where it consists of a main store and a delta store, and undergoes a periodic 
merging and consolidation of the main and delta stores. However, the periodic merging 
requires the draining of all active transactions before the merge begins and after the merge ends. 
Although the resulting contention of the merge appears to be limited to only the boundary 
of the merge for a short duration, the number of merges and the frequency 
at which this merge occurs has a substantial impact on the overall performance. 
We optimized the delta store implementation to be columnar and included 
only the updated columns~\cite{Plattner14}. Additionally, we applied our range 
partitioning scheme to the delta store by dedicating a separate delta store for 
each range of records to further reduce the cost of merge operation in presence 
of data skew. The partitioning allow us to avoid reading and writing the unchanged 
portion of the main store.

\subsection{Experimental Results}
In what follows, we present our comprehensive evaluation results in order to 
compare and study our proposed \tp\ with respect to state-of-the-art approaches.

\textbf{Scalability under contention:}
In this experiment, we show how transaction throughput scales as we increase 
the number of update transactions, in which each update transaction is assigned 
to one thread. For the scalability experiment, we fix the number of reads 
to 8 and writes to 2 for each transaction against a table with active set of $N = 10$ 
million rows. Figure~\ref{fig:throughput-lc} plots the transaction throughput (y-axis) 
and the number of update threads (x-axis). Under low contention, the throughput for \tp\ and 
\iph\ scales almost linearly before data is spread across the two NUMA nodes. 
The \dbt\ approach however does not scale beyond a small number of threads due to the 
draining of active transaction before/after of each merge process, which brings down 
the transaction throughput noticeably. With increasing number of threads, the number 
of merges and the draining of active transactions become more frequent, which reduces the 
transaction throughput significantly. The \iph\ approach has lower throughput compared 
to \tp\ due to the exclusive latches held for data pages that block the readers 
attempting to read from the same pages. The presence of a single history table also 
results in reduced locality for reads and more cache misses.

In addition, we study impact of increasing the degree of contention by varying the size of the 
database active set. For a fixed degree of contention, we vary the number of parallel 
update transactions from 1 to 22. For both medium contention (Figure~\ref{fig:throughput-mc}) 
and high contention (Figure~\ref{fig:throughput-hc}), we observe that \tp\ consistently 
outperforms the \iph\ and \dbt\ techniques as the number of parallel transactions 
is increased. For medium contention, we observed a speedup of up to 5.09$\times$ compared 
to the \iph\ technique and up to 8.54$\times$ compared to the \dbt\ technique. 
Similarly for high contention, we observed up to 40.56$\times$ and 14.51$\times$ 
speedup with respect to the \iph\ and \dbt\ techniques, respectively. 
The greater performance gap is attributed to the fact that in \iph, latching 
contention on the page is increased that is altogether eliminated in \tp. In \dbt, since 
the active set is smaller, and all updates are concentrated to smaller regions, 
the merging frequency is increased, which proportionally reduces the overall 
throughput due to the constant draining of all active transactions.  Finally, due 
to the smaller active set sizes in the medium- and high-contention workloads, the 
cache misses are also reduced as the cache-hit ratio increases. As a consequence, 
the transaction throughput also increases proportionately. 

\begin{figure}[t]
\centering
%\vspace{-2mm}
\includegraphics[trim=0cm 0cm 0cm 0cm, width=3in]{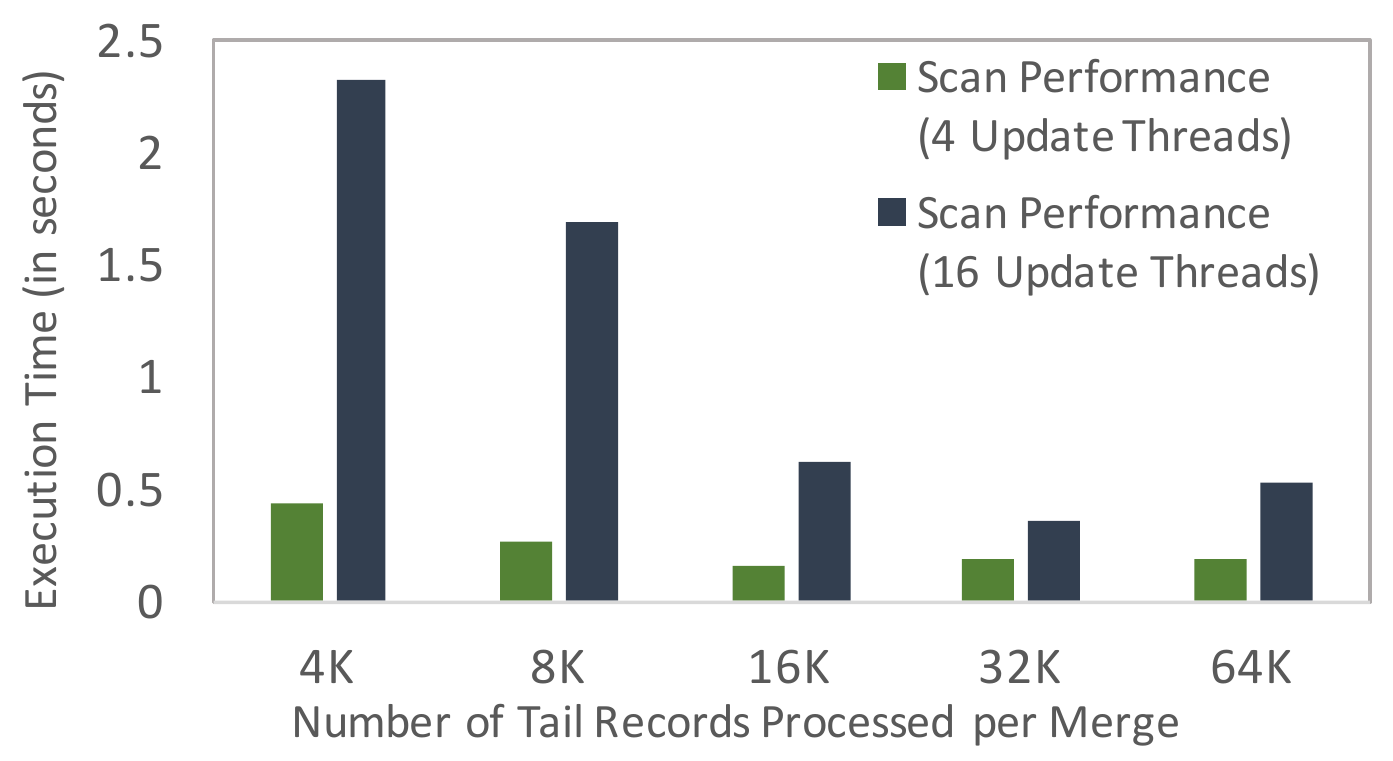}
%\vspace{-1mm}
\caption{Scan performance.}
%\vspace{-5mm}
\label{fig:scan-perf}
\end{figure}

\begin{table}[t]
\centering
%\vspace{-2mm}
 \begin{tabular}{| c | c | c | c |}
  \hline
   & {\tp} & {\small IUH} & {\small DBM} \\ \hline
  {Scan Performance (in secs.)} & 0.24 & 0.28 & 0.38 \\
  \hline
 \end{tabular}
 %\vspace{-2mm}
 \caption{Scan performance for different systems.}
%\vspace{-3mm}
\label{table:scan}
\end{table}

\textbf{Scan Scalability:} Scan performance is an important metric for real-time 
OLTP and OLAP systems because it is the basic building block for assembling complex 
ad-hoc queries. We measure the scan performance of \tp\ by computing the SUM aggregation 
on a column that is continuously been updated by the concurrent update transactions.
Thus, the goal of this experiment is to determine whether the merge can keep up 
with high-throughput OLTP workloads. As such, this scenario captures the worst-case 
scan performance because it may be necessary for the scan thread to search for 
the latest values in the merged page or tail pages when the merge cannot cope with 
the update throughput. For columns which do not get updated, the latest values 
are available in the base page itself, as described before. In this experiment 
(Figure~\ref{fig:scan-perf}), we study the single-threaded scan performance with 
one dedicated merge thread. We vary the number of tail records ($M$) that are 
processed per merge (x-axis) and observe the corresponding scan execution time (y-axis) 
while keeping the range partitioning fixed at $64K$ records. We repeat this experiment 
by fixing the number of update threads to 4 and 16, respectively. In general, 
we observe that as we increase $M$, the scan execution time decreases. 
The main reasoning behind this observation is that the scan thread visits tail 
pages for the latest values less often because the 
merge is able to keep up. However, for the smaller values of $M$, the merge is 
triggered more frequently and cannot be sustained. Additionally, the overall cost of 
the merge is increased because the cost of merge is amortized over fewer tail records 
while still reading the entire range of $64K$ base records. Notably, if we delay the merge 
by accumulating too many tail records, then there is slight deterioration in performance. 
Therefore, it is important to balance the merge frequency vs. the amortization cost 
of the merge for the optimal performance, which based on our evaluation, it is when 
$M$ is set to around 50\% of the range size.

\begin{figure}[t]
\begin{center}
\subfigure[Low Contention]{\centering \includegraphics[trim=0cm 0cm 0cm 0cm, width=3in]{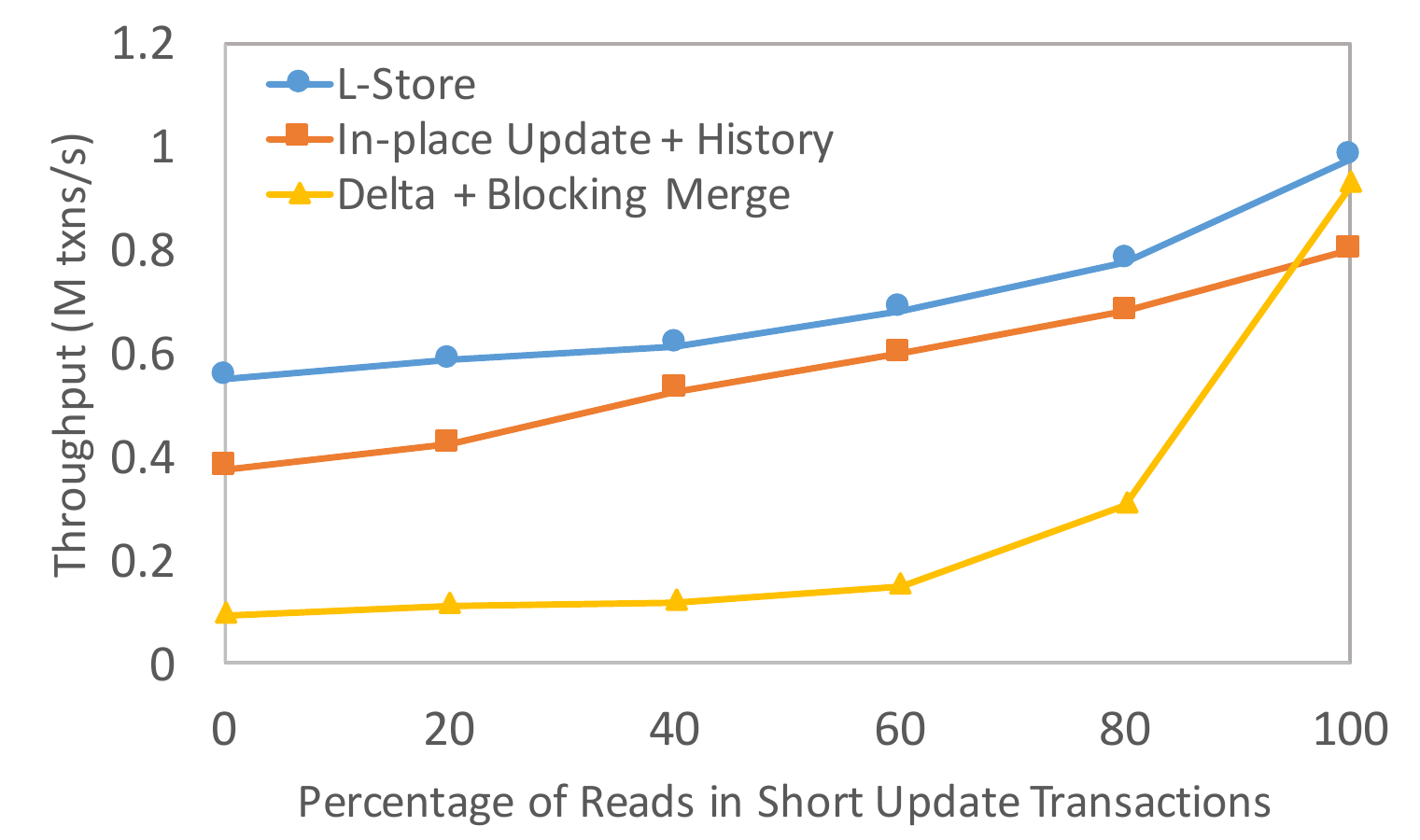} \label{fig:rw-lc}}
\subfigure[Medium Contention]{\centering \includegraphics[trim=0cm 0cm 0cm 0cm, width=3in]{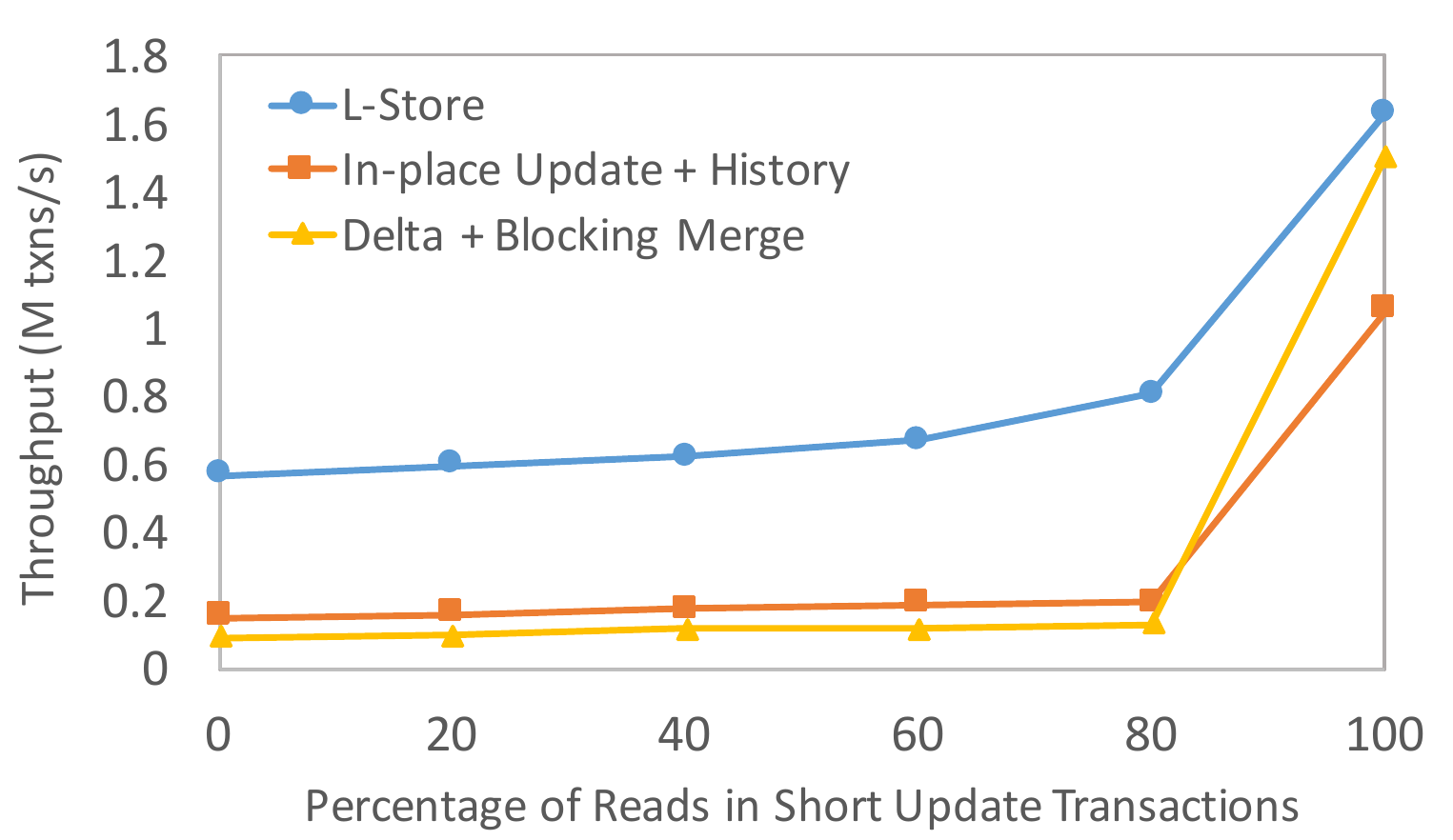} \label{fig:rw-mc}}
\end{center}
\caption{Impact of varying the read/write ratio of short update transactions.}
\label{fig:rw-lc-mc}
\end{figure}

\begin{figure*}[t]
\begin{center}
\subfigure[Update throughput with long-read transactions (Low Cont.).]{\centering \includegraphics[trim=0cm 0cm 0cm 0cm, width=3.15in]{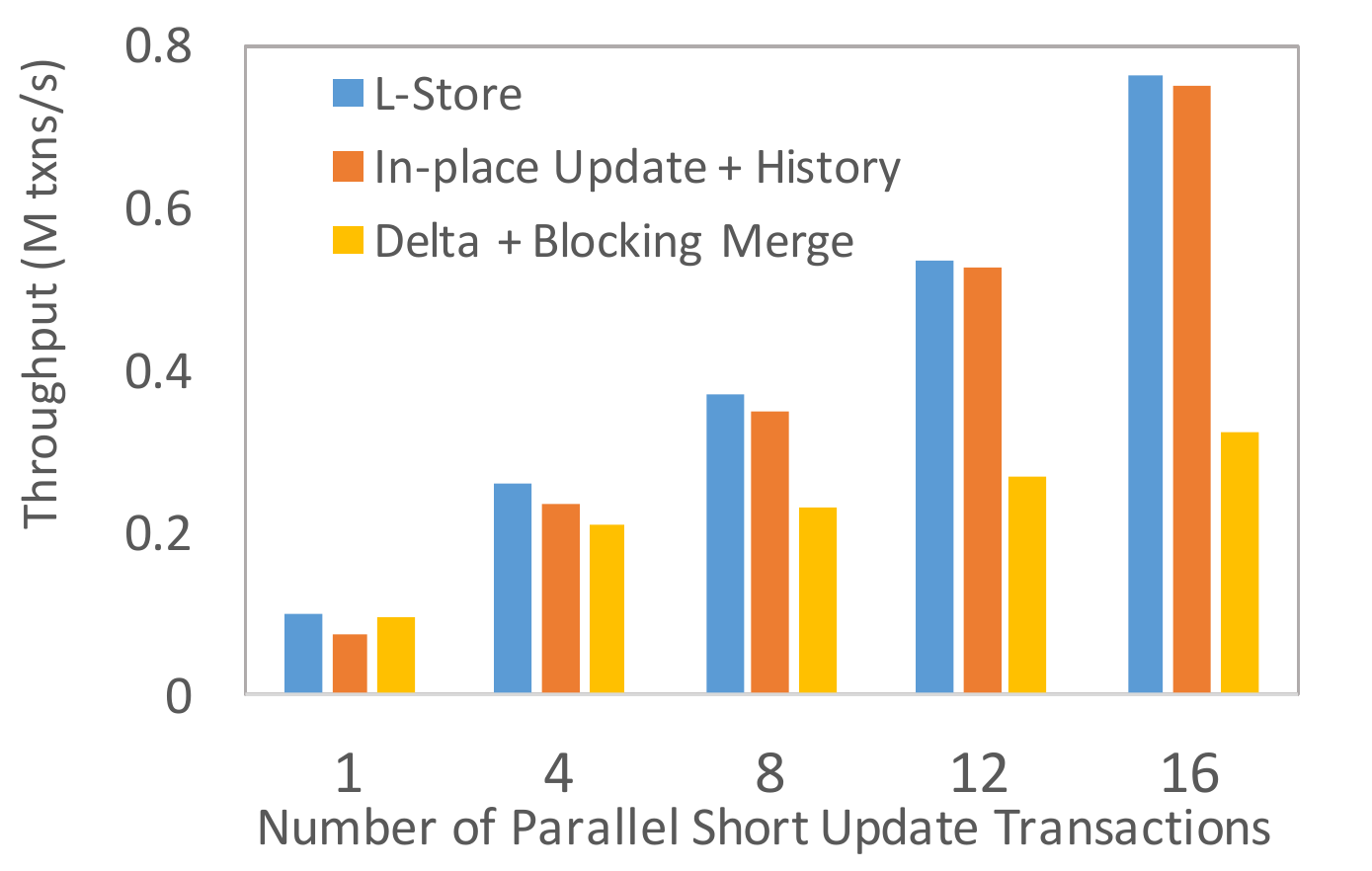} \label{fig:lrt-update-lc}}
\subfigure[Read throughput with short update transactions (Low Cont.).]{\centering \includegraphics[trim=0cm 0cm 0cm 0cm, width=3.15in]{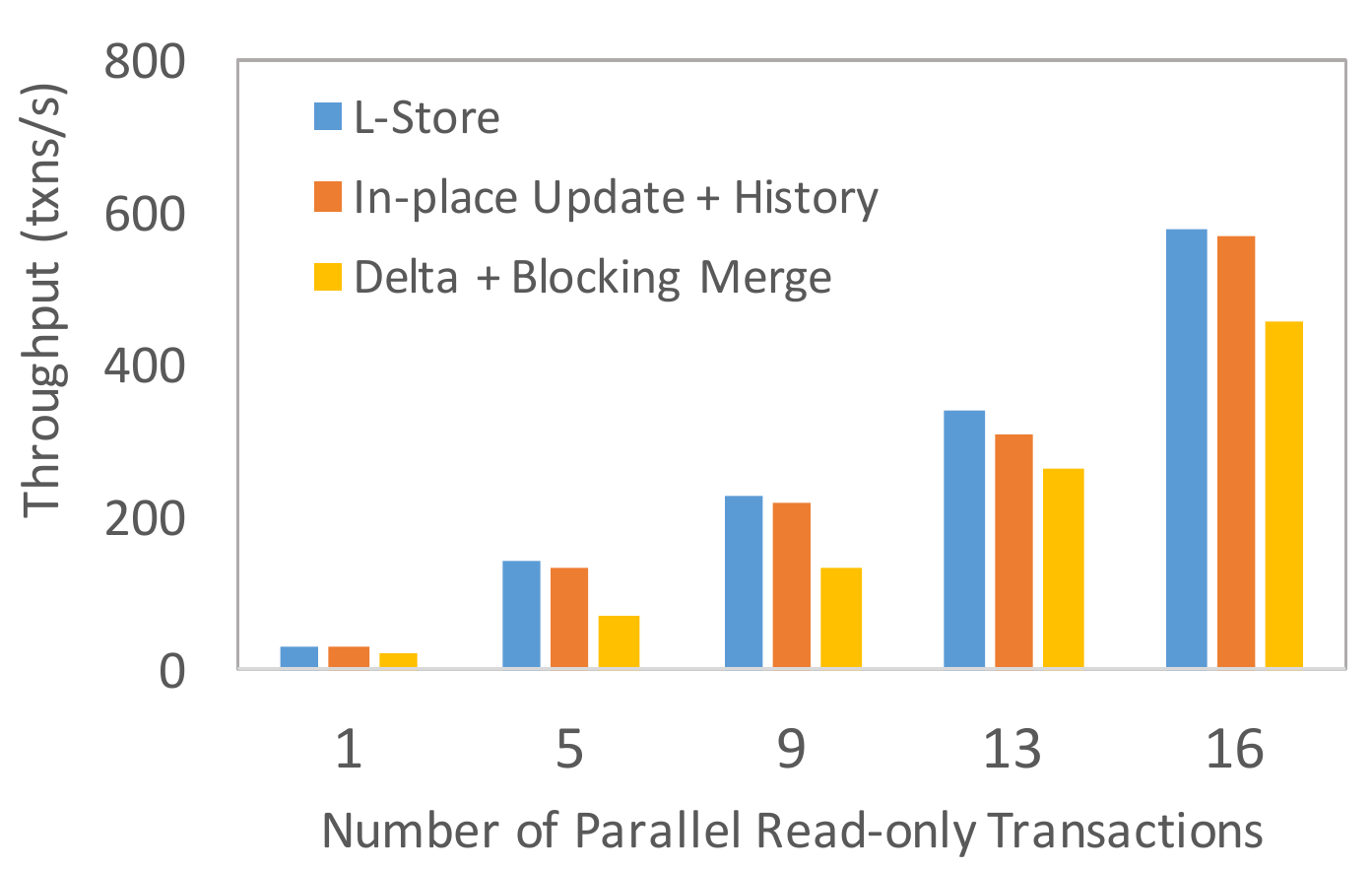} \label{fig:lrt-read-lc}}
\subfigure[Update throughput with long-read transactions (Med Cont.).]{\centering \includegraphics[trim=0cm 0cm 0cm 0cm, width=3.15in]{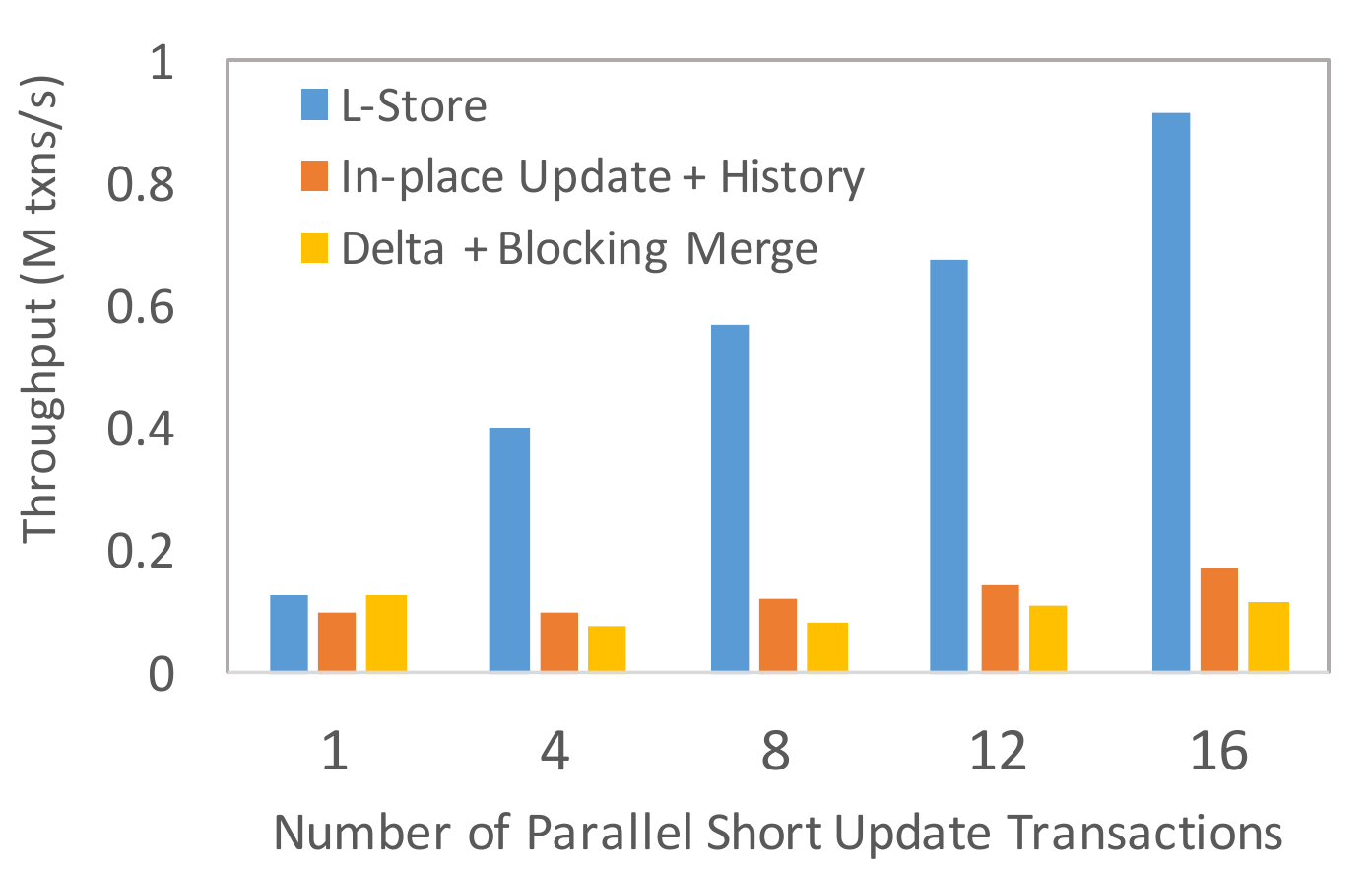} \label{fig:lrt-update-mc}}
\subfigure[Read throughput with short update transactions (Med Cont.).]{\centering \includegraphics[trim=0cm 0cm 0cm 0cm, width=3.15in]{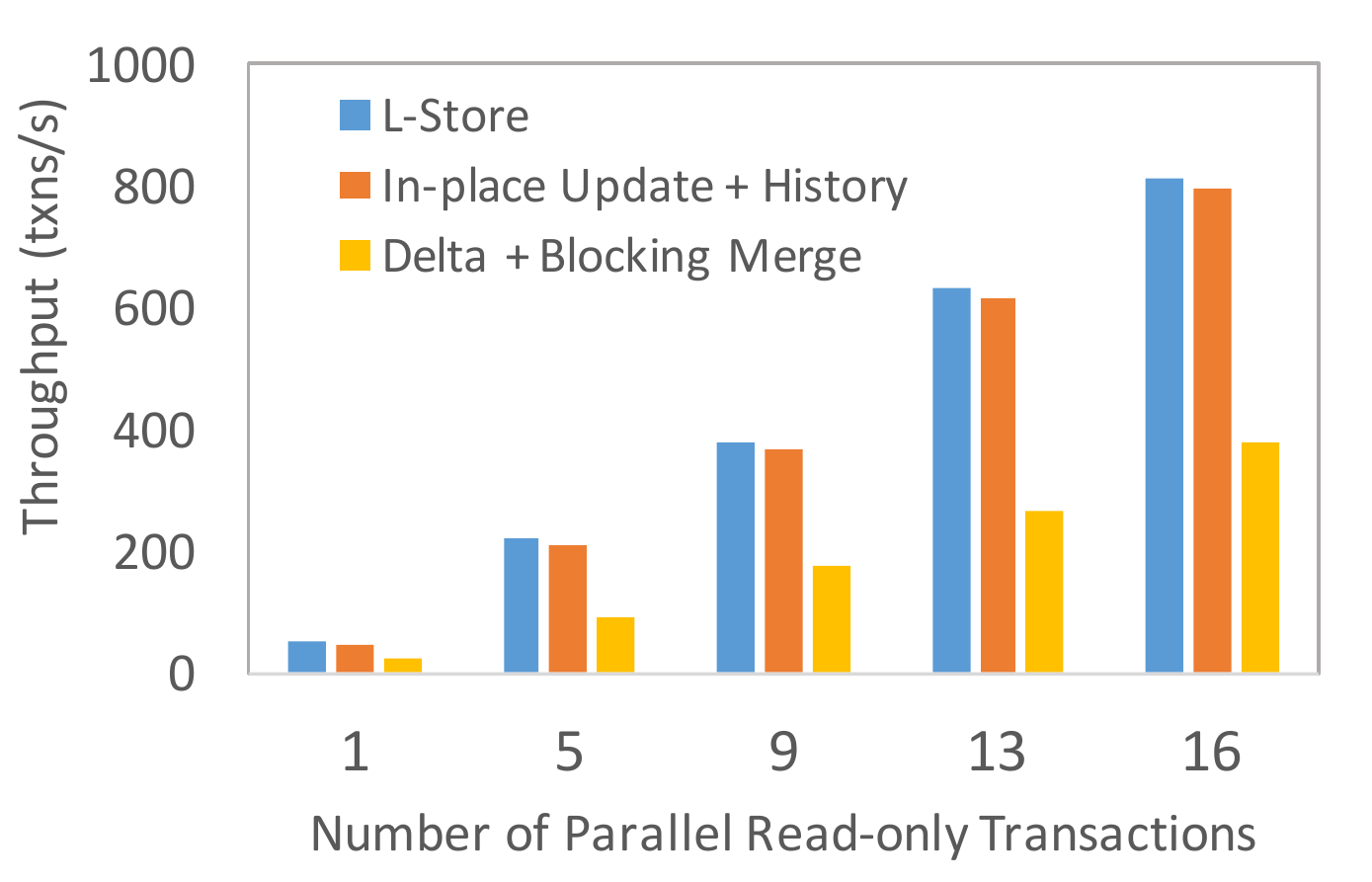} \label{fig:lrt-read-mc}}
\end{center}
\caption{Impact of varying the number of short update vs. long read-only transactions.}
\end{figure*}

We also compare the single-threaded scan performance (for low contention and $4K$ range 
size) of \tp\ with the other two techniques in the presence of 16 concurrent update 
threads (as shown in Table~\ref{table:scan}). Our technique outperforms the \iph\ and \dbt\ techniques 
by 14.28$\%$ and 36.84$\%$, respectively. It is important to note that smaller update range 
sizes, namely, assigning separate tail pages for each $4K$ base records 
instead of $64K$ base records, increases the overall scan performance by improving 
the locality of access within tail pages. Therefore, as mentioned previously in 
Section~\ref{sec_merge_algorithm} and further elaborated in Section~\ref{sec_range}, it 
is beneficial to apply (virtual) fine-grained partitioning over base records (e.g., 
$4K$ records) to handle updates in order to improve locality of access within tail 
pages while applying (virtual) coarser-grained partitioning (e.g., $64K$ records) 
when performing the merge in order to reduce the space fragmentation in the resulting 
merged pages.

\textbf{Impact of varying the workload read/write ratio:}
Short update transactions update only a few records in the database while 
performing reads for a majority of the time. A typical transactional 
workload comprises of $80\%$ read statements and $20\%$ writes~\cite{LarsonBDFPZ11}. 
However, our goal is to explore the entire spectrum from a read-intensive 
workload (read/write ratio 10:0) to a write-intensive workload (read/write ratio 0:10) 
while fixing the number of update threads to 16~\cite{DBLP:journals/pvldb/SadoghiCBNR14}. 
Figure~\ref{fig:rw-lc} shows transaction throughput (y-axis) as the ratio of read-only 
transactions varies in the workload (x-axis) with low contention. As expected, the performance 
of all the schemes increases as we increase the ratio of reads in the transactions because 
contention is a function of writes. As we have more writes in the workload, \iph\ 
technique suffers from increased contention as acquiring read latches conflict 
with the exclusive latches resulting in an extended wait time. The performance 
of the \dbt\ technique also exacerbates since increasing the number of writes increases  
the number of merges performed. This brings down the performance further due to frequent 
halt of the system while draining active transactions. However, note that the gap between 
all of the schemes is the least when the workload consists of 100$\%$ reads. 
In summary, the speedup obtained with respect to \iph\ is up to 1.45$\times$ and up to 5.78$\times$ 
with respect to \dbt\ technique. Note, even for 100$\%$ read, \iph\ continues to pay the 
cost of acquiring read latches on each page.

We repeat the same experiment but restrict the database active set size to 100K rows 
(Figure~\ref{fig:rw-mc}). \tp\ significantly outperforms the other techniques across 
all workloads while varying the read/write ratio. But the performance gap is similar 
with respect to the low contention scenario when there are no update statements 
in the workload. The speedup obtained compared to \iph\ and \dbt\ 
techniques is up to 4.19$\times$ and up to 6.34$\times$ respectively.

\begin{table*}[t]
\centering
%\vspace{-1mm}
 \begin{tabular}{| c | c | c |}
  \hline
   & {\scriptsize \sf L-Store (Column)} & {\scriptsize \sf L-Store (Row)} \\ \hline
  {Scan Performance without updates (in secs.)} & 0.043 & 0.196 \\ \hline
  {Scan Performance with updates (in secs.)} & 0.24 & 0.66 \\
  \hline
 \end{tabular}
 %\vspace{-1mm}
 \caption{Scan performance based on row vs. columnar layouts.}
%\vspace{-1mm}
\label{table:scan-rc}
\end{table*}

\begin{table*}[t]
\centering
%\vspace{-1mm}
%\hspace{-7mm}
 \begin{tabular}{| c | c | c | c | c | c |}
  \hline
%   {\ssmall \tps\ (10\% of Column)} & {\ssmall \tps\ (20\%)} & {\ssmall \tps\ (40\%)} & {\ssmall \tps\ (80\%)} & {\ssmall \tps\ (100\%)} & {\ssmall \sf L-Store (Row)} \\ \hline
%  1.46 & 1.35 & 1.17 & 1.08 & 0.98 & 1.45  \\
 
   & { 10\% of Columns} & { 20\% of Columns} & { 40\% of Columns} & { 80\% of Columns} & {All Columns} \\ \hline
   { \scriptsize \sf L-Store (Column)} & 1.46 & 1.35 & 1.17 & 1.08 & 0.98  \\ \hline
   { \scriptsize \sf L-Store (Row)} & 1.45 & 1.45 & 1.45 & 1.45 & 1.45 \\
  
  \hline
 \end{tabular}
% \vspace{-1mm}
 \caption{Point query performance vs. percentage of columns read (M txns/second).}
%\vspace{-1mm}
\label{table:point-query}
\end{table*}

\textbf{Impact of long-read transactions:}
As mentioned previously, it is not uncommon to have long-running read-only transactions in real-time OLTP and OLAP 
systems. These analytical queries touch a substantial part of the database compared to 
the short update transactions, and the main goal is to reduce the interference between 
OLTP and OLAP workloads. In this experiment, we investigate the performance of the different 
schemes in the presence of these long-running read-only transactions, which on an average 
touch 10$\%$ of the base table. We fix the number of concurrent active transactions to 
17 while increasing the number of concurrent read-only transactions from 1 to 16 (the 
short transactions simultaneously vary from 16 to 1). We also allocated a single merge 
thread for \tp\ and \dbt. Figures~\ref{fig:lrt-update-lc}-\ref{fig:lrt-read-lc} represent the scenario for a low 
contention workload, while Figures~\ref{fig:lrt-update-mc}-\ref{fig:lrt-read-mc} represent the scenario for medium 
contention. 

We observe that for both low and medium contention, there is an increase 
in throughput for both long-read transactions and short update transactions when the 
number of threads are increased. Moreover, the performance of read-only transaction  
increases for the medium contention scenario for all the techniques as the updates are 
restricted to a small portion of the database resulting in a higher read throughput. 
In other words, majority of the read-only transactions touch portions of the 
database in which updates do not take place resulting in higher throughput. For read-only 
transactions, our technique outperforms \dbt\ up to 1.97$\times$ and 2.37$\times$ for 
low and medium contention workloads, respectively. For short update transactions, we 
outperform \iph\ and \dbt\ by at most 5.37$\times$ and 7.91$\times$, respectively, 
for medium-content\-ion workload. In the earlier experiments, we had dem\-onstrated that 
\tp\ outperforms other leading approaches for update-intensive workloads, 
and in this experiment, we further streng\-then our claim that \tp\ substantially 
outperforms the leading approaches in the mixed OLTP and OLAP workload as well, 
the latter is due to our novel contention-free merging that does not interfere 
with the OLTP portion of the workload.

%\section{Evaluation: Comparing Row vs. Columnar Layouts}
\textbf{Impacts of comparing row vs. columnar layouts}
For completeness, we revisit the scan and point query performance while 
considering both row and columnar storage layouts. To enable this comparison, 
we additionally developed a variation of our \tp\ prototype using row-wise 
storage layout, which we refer to as {\small \sf L-Store (Row)}.\footnote{Notably 
our proposed lineage-based storage architecture is not limited to any particular 
data layout; in fact, our technique can be employed even for non-relational data such 
as document or graph data.}

In particular, we compared the single-threaded scan performance (for low contention 
and $4K$ range size) of \tp\ using both row and columnar layouts in the presence of when there 
is no updates or when there are 16 concurrent update threads (as shown in Table~\ref{table:scan-rc}). 
As expected, the scan performance of {\small \sf L-Store (Column)} is substantially higher than 
{\small \sf L-Store (Row)} by a factor of $2.75\times$ and $4.56\times$, with and without updates, 
respectively. Also note that we did not enabled column compression for {\small \sf L-Store (Column)}, 
otherwise even a higher performance gap would be observed because in column stores, an average of 10$\times$ 
compression is commonly expected~\cite{StonebrakerCStore,DBLP:conf/cidr/BonczZN05}. 

We further conducted an experiment with only point que\-ries (on a table 
with 10 columns), where each transaction now consists of 10 read statements, 
and each read statement may read 10\% to 100\% of all columns (as shown in 
Table~\ref{table:point-query}). As expected, the performance of any column 
store is deteriorated as more columns are fetched. When reading only 10-20\% 
of columns, {\small \sf L-Store (Column)} exhibit a comparable throughout as {\small \sf L-Store (Row)}; 
however, as we increase the number of fetched columns, the throughput is decreased. But, 
even in the worst case when all columns are fetched, the throughout only drops by 33\%. 
However, the prevalent observation is that rarely all columns are read or updated 
in either OLTP or OLAP workloads~\cite{pax,StonebrakerCStore,DBLP:conf/cidr/BonczZN05}; 
thus, given the substantial performance benefit of columnar layout for predominant 
workloads, then it is justified to expect a slight throughput decrease in  
rare cases of when point queries are forced to access all columns.

%\vspace{-1.5mm}
\section{Related Work}
In recent years, we have witnessed the development of many in-memory engines 
optimized for OLTP workloads either as research prototypes such as HyPer~\cite{Kemper:2011,hyperMVCC} 
and ES2~\cite{Cao:2011} or for commercial use such as Microsoft Hekaton~\cite{Diaconu:2013}, 
Oracle Database In-Memory~\cite{LahiriOracle}, VoltDB~\cite{Stonebraker_thevoltdb}, and HANA~\cite{Krueger11,Plattner14}. Most 
of these systems are designed to keep the data in row format and in the main memory to increase 
the OLTP performance. In contrast, to optimize the OLAP workloads, columnar format is 
preferred. The early examples of these engines are C-Store~\cite{StonebrakerCStore} 
and MonetDB~\cite{DBLP:conf/cidr/BonczZN05}. Recently, major database vendors also started 
integrating columnar storage format into their existing database engines. SAP 
HANA~\cite{DBLP:journals/debu/FarberMLGMRD12} is designed to handle both OLTP 
and OLAP workloads by supporting in-memory columnar format. IBM DB2 BLU~\cite{Raman:2013} 
introduces a novel compres\-sed columnar storage that is mem\-ory-optimized (and not 
restricted to being only memory-resident) that substantially improves the execution of 
complex analytical workloads by operating directly on compress\-ed data. 
%Microsoft integrated a column store engine called Apollo for the same purpose~\cite{Larson:2015}. 
Below, we summarize the key aspects of recent developments that also aim to 
support real-time OLTP and OLAP workloads on the same platform.

HyPer, a main-memory database system, guarantees the ACID properties of OLTP 
transactions and supports running OLAP queries on consistent snapshot~\cite{Kemper:2011}. 
The design of HyPer leverages OS\--proc\-essor-cont\-roll\-ed lazy copy-on-write 
mechanism enabling to create a consistent virtual memory snapshot. Unlike \tp, HyPer 
is forced to running transactions serially when the workload is not partitionable. Notably, 
HyPer recently employed multi-version concurrency to close this gap~\cite{hyperMVCC}.  
El\-astic power-aware data-intensive cloud computing platform (epiC) was designed to provide 
scalable database services on cloud \cite{Cao:2011}. epiC is designed to handle both 
OLTP and OLAP workloads~\cite{chen2010providing}. However, unlike \tp, the OLTP 
queries in ES2 are limited to basic get, put, and delete requests (without 
multi-statements transactional support). Furthermore, in ES2, it is possible that 
snapshot consistency is violated and the user is notified subsequently~\cite{Cao:2011}.

Microsoft SQL Server currently contains three types of engines: the classical SQL 
Server engine designed to process disk-based tables in row format, the Apollo engine 
designed to maintain the data in columnar format~\cite{Larson:2011:SSC:1989323.1989448}, 
and the Hekaton in-memory engine designed to improve OLTP workload performance~\cite{Diaconu:2013,Larson:2015}. 
In contrary, our philosophy is to avoid cost of maintaining multiple engines and to 
introduce a unified architecture to realize real-time OLTP and OLAP capabilities. 
Noteworthy, Microsoft has also recently announced moving towards supporting 
real-time OLTP and OLAP capabilities starting 2016~\cite{Larson:2015}. However, 
to provide OLTP and OLAP support among loosely integrated engines, a rather 
complex procedure is required as expected~\cite{Larson:2015}. In particular, data 
are forced to move between Hekaton (a row-based engine) and the columnar 
indexes (based on the Apollo engine) in a cyclic fashion by creating a number 
of foreground transactions (e.g., one large transaction touching a million 
records on average followed by thousands of smaller transactions) that ultimately 
ran concurrently and creates potential contention with the users' transactions~\cite{Larson:2015}.
In contrast, in \tp, we rely on purely columnar storage (while no multiple copies 
of the data is maintained) and, more importantly, our consolidation is based on a 
novel con\-tention-free merge process that is performed asynchronously and completely 
in the background, and the only foreground task is pointer swaps in the page directory 
to point to the newly created merged pages. 

Oracle has recently announced a new product called Ora\-cle Data\-base In\--Mem\-ory that 
offers dual-format option to support real-time OLTP and OLAP. In the Oracle 
architecture, data resides in both columnar and row formats~\cite{LahiriOracle}. 
To avoid maintaining two identical copies of data in both columnar and row format, 
a ``layout transparency'' abstraction was introduced that maps database into 
a set of disjoint tiles (driven by the query workload and the age of data), where 
a tile could be stored in either columnar or row format~\cite{Arulraj:2016:BAR:2882903.2915231}. 
The key advantage of the layout-transparent mapping is that the query execution runtime 
operates on the abstract representation (layout independent) without the need to 
create two different sets of operators for processing the column- and row-oriented data. 
In the same spirit, SnappyData proposed a unified engine to combine 
streaming, transaction, and analytical processing, but from the storage perspective, 
it maintains recent transactional data in row format while it ages data to a columnar 
format for analytical processing~\cite{Ramnarayan:2016:SHS:2933267.2933295}. SnappyData 
employed data ageing strategies similar to the original version of SAP 
HANA~\cite{Sikka:2012:ETP:2213836.2213946}. 

Contrary to the aforementioned efforts, in \tp, we strictly keep only one copy and one 
representation of data; thus, fundamentally eliminating the need to maintain 
layout-independent mapping abstraction and storing data in both columnar and row formats.   
Last but not least, (academically-led) HANA~\cite{Krueger11,Plattner14} also strives 
to achieve real-time OLTP and OLAP engine, most notably, we share the same philosophy 
governing HANA that aims to develop a generalized solution for unifying OLTP and 
OLAP as opposed to building specialized engines. However, what distinguishes our 
architecture from HANA is that we propose a holistic columnar storage without 
the need to distinguishing between a main store and a delta store. In addition, 
we propose a contention-free merge process, whereas in HANA, the merge process is 
forced to drain all active transactions at the beginning and end of the merge 
process~\cite{Krueger11}, a contention that results in a noticeable slow down 
as demonstrated in our evaluation.

%\section{Related Work: Concurrency}
On a different front, database concurrency theory, an old age problem, has recently 
been revived by industry (e.g., \cite{LarsonBDFPZ11,Diaconu:2013}) and academia (e.g., \cite{DBLP:journals/pvldb/KallmanKNPRZJMSZHA08,DBLP:conf/sigmod/ThomsonDWRSA12,DBLP:journals/pvldb/SadoghiCBNR14,DBLP:conf/icdcs/0001SJ16}) due to hardware tre\-nds 
(e.g., multi-cores and large main memory) and application requirements 
(e.g., the need for processing millions of transactions per second in targeted advertising 
and algorithmic trading~\cite{Sadoghi:2010:EEP:1920841.1921029}). Microsoft Hekaton focuses %~\cite{Sadoghi:2010:EEP:1920841.1921029}
primarily on optimistic concurrency by assuming that roll backs are 
inexpensive and conflicts are rare~\cite{LarsonBDFPZ11}. Hekaton avoids the use of a lock 
manager and relies on read validation to ensure repeatable reads, performs re-execution 
of all range queries to achieve serializability, and detects write-write conflicts by using 
CAS operator and aborting the second writer to avoid any blocking. Furthermore, going 
beyond Hekaton, a 2-version concurrency control (2VCC) was proposed in~\cite{DBLP:journals/pvldb/SadoghiCBNR14} 
that offers efficient (latch-free) pessimistic and optimistic models that co-exists peacefully. 
Similar to the approach in Hekaton~\cite{LarsonBDFPZ11}, the 2VCC model~\cite{DBLP:journals/pvldb/SadoghiCBNR14} also rejects the idea 
of tuning the concurrency model to only limited types of workloads such as partitionable workloads 
(a direction that is pursed by ~\cite{DBLP:journals/pvldb/KallmanKNPRZJMSZHA08,DBLP:conf/sigmod/ThomsonDWRSA12}). Since \tp's focus is to provide a general 
storage architecture, any concurrency models can be employed; in particular, we relied 
on the optimistic concurrency model proposed in~\cite{DBLP:journals/pvldb/SadoghiCBNR14} 
while supporting the speculative reads proposed in~\cite{LarsonBDFPZ11}.

%\vspace{-1.5mm}
\section{Conclusions}
In this paper, we develop \tpfull\ to realize real-time OLTP and OLAP processing 
within a single unified engine. The key features of \tp\ can succinctly be 
summarized as follows. In \tp, recent updates for a range of records are strictly 
appended and clustered in its corresponding tail pages to eliminate read/write 
contention, which essentially transforms co\-stly point updates into an amortized,  
fast analytical-like update query. Furthermore, \tp\ achieves (at most) 2-hop 
access to the latest version of any record through an effective embedded indirection 
layer. More importantly, we introduce a novel contention-free and relaxed merging of only 
stable data in order to lazily and independently bring base pages (almost) up-to-date without 
blocking on-going and new transactions. Furthermore, every base page relies on independently 
tracking the lineage information in order to eliminate all coordination and recovery 
even when merging different columns of the same record independently. Lastly, a novel contention-free 
page de-allocation using epoch-based approach is introduced without interfering 
with ongoing transactions. In our evaluation, we demonstrate that \tp\ outperforms 
\iph\ by factor of up to 5.37$\times$ for short update transactions while achieving 
slightly improved performance for scans. It also outperforms \dbt\ by 7.91$\times$
for short update transactions and up to 2.37$\times$ for long-read analytical queries.

%\newpage
%\vspace{-1mm}
\section{Acknowledgments}
We wish to thank C. Mohan, K. Ross, V. Raman, R. Barber, R. Sidle, A. Storm, X. Xue, I. Pandis, Y. Chang, and G. M. Lohman 
for many insightful discussions and invaluable feedback in the earlier stages of this work.

%\vspace{-1mm}
%\begin{ssmall}
%\bibliographystyle{abbrv}
%\bibliography{references_complete}

%\end{ssmall}

\end{document}